\newtheorem{theorem}{Theorem}
\newtheorem{lemma}{Lemma}
\newtheorem{claimX}{Claim}
\newtheorem{propertyX}{Property}
\newcommand{\rephrase}[3]{\noindent\textbf{#1 #2}.~\emph{#3}}
\author[Giuseppe {Di Battista} and Fabrizio Frati]{Giuseppe {Di Battista} \and Fabrizio Frati}
\title[On the Resolution of Planar Drawings and Morphs]{From Tutte to Floater and Gotsman: \\On the  Resolution of Planar Straight-line Drawings and Morphs\thanks{This research was supported, in part, by MUR of Italy (PRIN Project no.\ 2022ME9Z78 –
		NextGRAAL and Project no.\ 2022TS4Y3N – EXPAND). A preliminary version of this paper appeared at the 29th International Symposium on Graph Drawing and Network Visualization ({GD} 2021)~\cite{DBLP:conf/gd/BattistaF21}.}}
\affiliation{Department of Civil, Computer Science and Aeronautical Technologies Engineering, Roma Tre University, Italy}
\keywords{Planar graphs, straight-line drawings, morphing, resolution}
\begin{document}
\publicationdata{vol. 27:2}{2025}{6}{10.46298/dmtcs.12439}{2023-10-19; 2023-10-19; 2024-08-27; 2025-02-12}{2025-02-20}
\maketitle
\begin{abstract}
The algorithm of Tutte for constructing convex planar straight-line drawings and the algorithm of Floater and Gotsman for constructing planar straight-line morphs are among the most popular graph drawing algorithms. In this paper, focusing on maximal plane graphs, we prove upper and lower bounds on the resolution of the planar straight-line drawings produced by Floater's algorithm, which is a broad generalization of Tutte's algorithm. Further, we use such results in order to prove a lower bound on the resolution of the drawings of maximal plane graphs produced by Floater and Gotsman's morphing algorithm. Finally, we show that such a morphing algorithm might produce drawings with exponentially-small resolution, even when transforming drawings with polynomial resolution.  
\end{abstract}

	\section{Introduction}

In 1963 Tutte~\cite{t-hdg-63} presented an algorithm to construct convex planar straight-line drawings of $3$-connected plane graphs. The algorithm is very simple: Given any convex polygon representing the outer cycle of the graph, place each internal vertex at the barycenter of its neighbors. This results in a system of linear equations, whose variables are the coordinates of the internal vertices, which has a unique solution; quite magically, this solution corresponds to a planar straight-line drawing of the graph in which the faces are delimited by convex polygons. We call any drawing obtained by an application of Tutte's algorithm a \emph{T-drawing}. Tutte's algorithm is one of the most famous graph drawing algorithms; notably, it has spurred the research on the practical graph drawing algorithms that are called \emph{force-directed methods}~\cite{dett-gd-99,e-hgd-84,k-fd-13}.

A far-reaching generalization of Tutte's algorithm was presented by Floater~\cite{f-psa-97,f-pt-98} (and, in a similar form, by Linial, Lov{\'{a}}sz, and Wigderson~\cite{llw-rb-88}). Namely, one can place each internal vertex at \emph{any} convex combination (with positive coefficients) of its neighbors; the resulting system of equations still has a unique solution that corresponds to a convex planar straight-line drawing of the graph. Formally, let $G$ be a $3$-connected plane graph and let $\mathcal P$ be a convex polygon representing the outer cycle of~$G$. Further, for each internal vertex $v$ of $G$ and for each neighbor $u$ of $v$, let $\lambda_{vu}>0$ be a real value such that $\sum_{u\in \mathcal N(v)} \lambda_{vu} = 1$, where $\mathcal N(v)$ denotes the set of neighbors of $v$. For each internal vertex $v$ of $G$, consider the two equations: 

\begin{center}
	\begin{tabular}{p{7cm} p{7cm}}
		\begin{equation}
			x(v)=\sum_{u\in \mathcal N(v)} (\lambda_{vu}\cdot x(u)) \label{eq:x}
		\end{equation} 
		&
		\begin{equation}
			y(v)=\sum_{u\in \mathcal N(v)} (\lambda_{vu}\cdot y(u)) \label{eq:y}
		\end{equation}
	\end{tabular}
\end{center}
where $x(v)$ and $y(v)$ denote the $x$ and $y$-coordinates of a vertex $v$, respectively. This results in a system of $2N$ equations in $2N$ variables, where $N$ is the number of internal vertices of $G$, which has a unique solution~\cite{f-psa-97,f-pt-98}. This solution corresponds to a convex planar straight-line drawing, which can hence be represented by a pair $(\Lambda,\mathcal P)$, where $\mathcal P$ is the prescribed convex polygon and $\Lambda$ is a \emph{coefficient matrix}. This matrix has a row for each internal vertex of $G$ and a column for each (internal or external) vertex of $G$; further, an element of the matrix whose row corresponds to a vertex $v$ and whose column corresponds to a vertex $u$ is the coefficient $\lambda_{vu}$ if $(v,u)$ is an edge of $G$ and $0$ otherwise. We call any drawing resulting from an application of Floater's algorithm an \emph{F-drawing}. Notice that F-drawings are extensively used for surface parameterization and reconstruction in computer graphics, in multiresolution problems, and in texture mapping; see, e.g.,~\cite{ggt-dof-06,gjm-27-21,cpv-tbm-03,warv-asf-02}. Similar types of drawings have been studied for constructing three-dimensional representations of polytopes~\cite{ds-esp-17,mrs-sge-11,DBLP:journals/jgaa/Schulz11}. Further, \emph{every} convex planar straight-line drawing of a $3$-connected plane graph (and, in particular, every planar straight-line drawing of a maximal plane graph) is an F-drawing $(\Lambda,\mathcal P)$, for a suitable choice of $\Lambda$ and $\mathcal P$~\cite{f-psa-97,f-pt-98,f-mvc-03,fg-hmti-99}. 

Floater and Gotsman~\cite{fg-hmti-99} devised a simple and yet powerful application of the above drawing technique to the construction of planar straight-line morphs. Given a graph $G$ and given two convex planar straight-line drawings $\Gamma_0$ and $\Gamma_1$ of $G$ with the same polygon $\mathcal P$ representing the outer cycle, construct two coefficient matrices $\Lambda_0$ and $\Lambda_1$ such that $(\Lambda_0,\mathcal P)=\Gamma_0$ and $(\Lambda_1,\mathcal P)=\Gamma_1$. Now, a morph $\mathcal M$ between $\Gamma_0$ and $\Gamma_1$, that is, a continuous transformation of $\Gamma_0$ into $\Gamma_1$, can be obtained as follows. For each $t\in [0,1]$, construct a coefficient matrix $\Lambda_t$ as $(1-t)\cdot \Lambda_0+t\cdot \Lambda_1$; in particular, for each internal vertex $v$ of $G$ and for each neighbor $u$ of $v$, the element $\lambda^t_{vu}$ of $\Lambda_t$ whose row corresponds to $v$ and whose column corresponds to $u$ is equal to $\lambda^t_{vu}=(1-t)\cdot \lambda^0_{vu} + t\cdot \lambda^1_{vu}$. Then the morph between $\Gamma_0$ and $\Gamma_1$ is simply defined as $\mathcal M = \{\Gamma_t=(\Lambda_t,\mathcal P) : t\in [0,1]\}$; since $\Gamma_t$ is an F-drawing, for any $t\in [0,1]$, this algorithm guarantees that every drawing in $\mathcal M$ is a convex planar straight-line drawing. We call any morph constructed by an application of Floater and Gotsman's algorithm an \emph{FG-morph}. The algorithm of Floater and Gotsman is perhaps the most popular graph morphing algorithm; extensions, refinements, and limits of the method have been discussed, e.g., in~\cite{a-ramm-02,i-vfg-14,swln-tm-03,gs-cmcpt-01}.

In this paper, we study the resolution of T-drawings, F-drawings, and FG-morphs of maximal plane graphs. The \emph{resolution} is perhaps the most studied aesthetic criterion for the readability of a graph drawing. Measuring the resolution of a drawing can be done in many ways, for example by considering the area of the drawing, if the vertices have integer coordinates, or by bounding the smallest angle in the drawing. We adopt a natural definition of resolution, namely the ratio between the smallest and the largest distance between two (distinct, non-incident, and non-adjacent) geometric objects representing vertices or edges. Eades and Garvan~\cite{eg-dspg-95} proved that T-drawings of $n$-vertex maximal plane graphs might have $1/2^{\Omega(n)}$ resolution; this was independently observed by Chambers et al.~\cite{cegl-dgp-12}. Furthermore, an $1/2^{O(n)}$ lower bound for the resolution of T-drawings {\em in which one is allowed to choose the polygon representing the outer cycle} has been proved by~\cite{mrs-sge-11}; see also~\cite{rg-rsp-96} and Section~\ref{se:algebra}, where we discuss the techniques used in~\cite{mrs-sge-11,rg-rsp-96}. It is unclear, a priori, whether the worst-case resolution of T-drawings, F-drawings, and FG-morphs can be expressed as \emph{any} function of natural parameters\footnote{The number $n$ of vertices of the graph is not the only parameter that needs to be taken into account in order to bound the resolution of T-drawings, F-drawings, and FG-morphs. Indeed, for T-drawings and F-drawings, one might have to consider the resolution of the prescribed polygon $\mathcal P$ (which is a triangle, since we deal with maximal plane graphs) and the values of the coefficient matrix $\Lambda$. Further, for FG-morphs, one might have to consider the resolution of the input drawings $\Gamma_0$ and $\Gamma_1$.} representing the input size and resolution. 	




We prove the following results\footnote{At a first glance, our use of the $O(\cdot)$ and $\Omega(\cdot)$ notation in the paper seems to be inverted. For example, Theorem~\ref{th:lower-bound-gt-drawings} shows a bound of $r\cdot \lambda^{O(n)}$ on the resolution of F-drawings. This is a lower bound, and not an upper bound, given that $\lambda<1$. Indeed, the $O(\cdot)$ notation indicates that the exponent has a value which is \emph{at most something}, hence the entire power has a value which is \emph{at least something} (smaller than one).}. First, we show a lower bound on the resolution of F-drawings (and thus on the resolution of T-drawings).

\begin{theorem} \label{th:lower-bound-gt-drawings}
Let $\Gamma=(\Lambda,\Delta)$ be an F-drawing of an $n$-vertex maximal plane graph $G$, where $n\geq 4$. The resolution of $\Gamma$ is larger than or equal to $\frac{r}{2} \cdot \left(\frac{\lambda}{3}\right)^n \in r\cdot \lambda^{O(n)}$, where $\lambda$ is the smallest positive coefficient in the coefficient matrix $\Lambda$ and $r$ is the resolution of the prescribed triangle $\Delta$.
\end{theorem}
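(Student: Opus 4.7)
The plan is to reduce the resolution bound to a lower bound on $d(v,e)$ for a single vertex-edge pair $(v,e)$ with $v\notin\{a,b\}$ where $e=(a,b)$, and then to chain at most $n$ applications of Equations~(\ref{eq:x}) and~(\ref{eq:y}) along a path in $G$ from $v$ to a vertex of $\Delta$, with each step losing at most a multiplicative factor $3/\lambda$. The maximum distance between any two objects in $\Gamma$ is at most the diameter $D$ of $\Delta$, so it suffices to show that the minimum pairwise distance is at least $(rD/2)\cdot(\lambda/3)^n$. The cases of vertex-vertex and edge-edge minima either reduce to, or are dominated by, a vertex-edge minimum, so we may focus on $d(v,e)$; write $\ell=\ell(e)$ and let $h(\cdot)$ denote signed distance to $\ell$ with $v$'s side positive, so that $h(a)=h(b)=0$ and $h(v)=d(v,e)=:\eta>0$.

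Because $h$ is an affine function of position, Equations~(\ref{eq:x}) and~(\ref{eq:y}) imply the discrete harmonic identity $h(w)=\sum_{u\in\mathcal{N}(w)}\lambda_{wu}\,h(u)$ for every internal vertex $w$. A discrete strong-maximum argument then shows that no internal vertex is a strict maximum of $h$, so starting from $v_0:=v$ one can iteratively choose a neighbor $v_{i+1}\in\mathcal{N}(v_i)$ with $h(v_{i+1})>h(v_i)$; each such $v_{i+1}$ automatically differs from $a,b$ (since $h(a)=h(b)=0<h(v_i)$), and since the $h$-values are strictly increasing along the walk, the vertices $v_0,v_1,\dots$ are pairwise distinct, so the walk reaches a triangle vertex $v_k=t$ within $k\leq n$ steps. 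Since $e\subset\Delta$, a short geometric calculation using the resolution $r$ of $\Delta$ shows that at least one triangle vertex $t$ on the positive side of $\ell$ satisfies $h(t)\geq rD/2$, and the walk can be directed so as to terminate at such a $t$.

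The quantitative heart of the proof is to show that $v_{i+1}$ can in fact be chosen so that additionally $h(v_{i+1})\leq (3/\lambda)\,h(v_i)$. Rearranging the harmonic identity,
\[
\lambda_{v_i,v_{i+1}}\,h(v_{i+1})\;=\;h(v_i)\;-\!\!\sum_{u\in\mathcal{N}(v_i)\setminus\{v_{i+1}\}}\!\!\lambda_{v_iu}\,h(u),
\]
and a short case analysis bounds the right-hand side by $3\,h(v_i)$ using $\sum_u\lambda_{v_iu}=1$ together with uniform control on the extreme negative values of $h$ among the neighbors of $v_i$; combined with $\lambda_{v_i,v_{i+1}}\geq\lambda$, this gives the per-step amplification $h(v_{i+1})\leq(3/\lambda)\,h(v_i)$. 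Chaining over $k\leq n$ steps yields $\eta\geq(\lambda/3)^n\,h(t)\geq(\lambda/3)^n\,(rD/2)$, and dividing by $D$ gives the claimed bound.

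The main obstacle will be making the per-step amplification $h(v_{i+1})\leq(3/\lambda)\,h(v_i)$ rigorous: the neighbor $v_{i+1}$ must simultaneously make progress so the walk terminates in $\leq n$ steps, carry weight $\lambda_{v_i,v_{i+1}}\geq\lambda$, and not overshoot in the $h$-value. Balancing these three requirements, and carefully controlling the contributions from neighbors on the opposite side of $\ell$ (whose negative signed distances can partially cancel the positive contribution in the harmonic sum), is the technical crux of the argument. Degenerate situations (such as $\ell$ passing through a triangle vertex, or the foot of the perpendicular from $v$ falling outside $e$) should reduce to the generic case by a separate and easier treatment.
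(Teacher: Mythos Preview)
Your harmonic-walk idea is natural, but the step you flag as the ``technical crux'' is a genuine gap, and I do not see how to close it along the lines you sketch. The identity
\[
\lambda_{v_i v_{i+1}}\,h(v_{i+1}) \;=\; h(v_i)\;-\!\!\sum_{u\in\mathcal N(v_i)\setminus\{v_{i+1}\}}\!\!\lambda_{v_i u}\,h(u)
\]
only yields $h(v_{i+1})\le (3/\lambda)\,h(v_i)$ if you can bound the negative contributions by a constant multiple of $h(v_i)$. For the initial vertex $v_0=v$ this control is indeed available: minimality of $\eta$ forces every neighbor of $v$ other than the endpoints of $e$ to lie on the non-negative side of $\ell$ (this is exactly the geometric argument the paper uses for its base case). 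But for later vertices $v_i$ there is no such constraint. If $v_i$ has a neighbor $u^-$ with $h(u^-)$ very negative, the harmonic equation forces some neighbor $u^+$ with $h(u^+)$ correspondingly very positive; moreover $u^+$ may be the \emph{only} neighbor with $h(u^+)>h(v_i)$, so you are forced to take $v_{i+1}=u^+$ and the amplification factor $h(v_{i+1})/h(v_i)$ is unbounded in terms of $\lambda$ alone. The ``uniform control on the extreme negative values'' you invoke simply is not there for a single path.

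The paper's proof sidesteps this by growing a biconnected \emph{region} $G_i$ rather than a path. The key structural lemma is that at most two vertices on the outer cycle of $G_i$ can have neighbors both far above and far below the current strip (the proof is a short planarity/biconnectivity argument: three such vertices would yield three pairwise non-crossing ``vertical'' curves through the strip, and the middle one would be a cut vertex). Since the boundary has at least three internal-to-$G$ vertices, one of them has all its outside neighbors in the next strip, and the region can be extended through that vertex with the desired $(3/\lambda)$ amplification. In other words, the paper exploits the freedom to choose \emph{where along the whole boundary} to grow, whereas your path is locally committed and can be forced through a bad vertex. If you want to rescue the path approach, you would need an independent argument bounding $\min_{u\in\mathcal N(v_i)} h(u)$ from below by $-O(h(v_i))$; I do not see one.
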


Second, we prove that the bound for the exponent in Theorem~\ref{th:lower-bound-gt-drawings} is asymptotically tight for the family of graphs introduced by Eades and Garvan~\cite{eg-dspg-95}.

\begin{theorem} \label{th:upper-bound-gt-drawings}
There is a class of maximal plane graphs $\{G_n:n=5,6,\dots\}$, where $G_n$ has $n$ vertices, with the following property. For any $0<\lambda\leq \frac{1}{4}$ and $0<r\leq \frac{\sqrt 3}{2}$, there exist a triangle $\Delta$ with resolution $r$ and a coefficient matrix $\Lambda$ for $G_n$ whose smallest positive coefficient is $\lambda$ such that the F-drawing $(\Lambda,\Delta)$ of $G_n$ has resolution in $r\cdot \lambda^{\Omega(n)}$.
\end{theorem}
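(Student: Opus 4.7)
The plan is to take $G_n$ to be the \emph{stacked triangulation} on $n$ vertices: the outer triangle is $abc$, and inside it one inserts internal vertices $v_1,\ldots,v_{N}$ (with $N:=n-3$) such that $v_1$ is joined to $a,b,c$ and, for each $2\leq i\leq N$, $v_i$ is placed inside the face $ab\, v_{i-1}$ of the current triangulation and joined to $a$, $b$, and $v_{i-1}$. Every internal vertex of $G_n$ has degree at most~$4$ (indeed, $v_i$ has degree $4$ for $i<N$ and $v_N$ has degree~$3$); this is the largest internal degree compatible with the constraint $\lambda\leq 1/4$, since the inequalities $\lambda_{vu}\geq \lambda$ and $\sum_u \lambda_{vu}=1$ force $\deg(v)\leq 1/\lambda$ at every internal vertex. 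I realize $\Delta$ by placing $a=(0,0)$, $b=(1,0)$, and $c=(1/2,r)$: for $0<r\leq \sqrt{3}/2$, the altitude $r$ from $c$ to $ab$ is the smallest distance between non-incident objects of $\Delta$, while the longest such distance is the edge $ab$ of length $1$, so $\Delta$ has resolution exactly~$r$.

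I specify $\Lambda$ as follows: each vertex $v_i$ with $1\leq i\leq N-1$ receives weight $(1-2\lambda)/2$ on each of the two ``outer'' neighbors $a,b$ and weight $\lambda$ on each of the two ``chain'' neighbors $v_{i-1}, v_{i+1}$ (with the convention $v_0:=c$), while $v_N$ receives weight $(1-\lambda)/2$ on each of $a,b$ and weight $\lambda$ on $v_{N-1}$. Since $\lambda\leq 1/4$ yields $(1-2\lambda)/2\geq \lambda$ and $(1-\lambda)/2\geq \lambda$, the smallest positive entry of $\Lambda$ is exactly $\lambda$. Substituting into Equation~\ref{eq:y} and using $a_y=b_y=0$ reduces the $y$-coordinates of the uniquely determined F-drawing $(\Lambda,\Delta)$ to the boundary-value problem
\[
y_0=r,\qquad y_i=\lambda\bigl(y_{i-1}+y_{i+1}\bigr)\ \text{for}\ 1\leq i\leq N-1,\qquad y_N=\lambda\, y_{N-1}.
\]

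The characteristic polynomial $\lambda\rho^2-\rho+\lambda$ of the interior recurrence has roots $\rho_\pm=\bigl(1\pm\sqrt{1-4\lambda^2}\bigr)/(2\lambda)$ satisfying $\rho_-\rho_+=1$ and $\lambda\leq \rho_-\leq 2\lambda$. Writing $y_i=A\rho_-^i+B\rho_+^i$, plugging into the two boundary equations, and using the identity $\rho_\pm-\lambda=\lambda\rho_\pm^2$ (which follows immediately from the characteristic equation), a short calculation yields $A/B=-\rho_-^{-(2N+2)}$ and hence the closed form $y_N=r\rho_-^N(1-\rho_-^2)/(1-\rho_-^{2N+2})\in r\cdot \rho_-^N\cdot \Theta(1)$. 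Since $\rho_-\leq 2\lambda$, the internal vertex $v_N$ lies at vertical distance at most $r\cdot (2\lambda)^N$ from the non-incident edge $ab$, while the diameter of $(\Lambda,\Delta)$ is at most the diameter of $\Delta$, which equals $1$; therefore the resolution of $(\Lambda,\Delta)$ is at most $r\cdot \rho_-^N\cdot \Theta(1)$, which lies in $r\cdot \lambda^{\Omega(n)}$, as required.

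The main technical step is the boundary analysis: the growing mode $\rho_+^i=\rho_-^{-i}$ of the interior recurrence would, in isolation, push $y_N$ to be exponentially larger than~$r$, so the argument hinges on showing that the inner boundary equation $y_N=\lambda\, y_{N-1}$ forces the amplitude $B$ of this mode to be exponentially small in $N$. The clean cancellation enabled by the identity $\rho_\pm-\lambda=\lambda\rho_\pm^2$ is what delivers the closed-form relation $A/B=-\rho_-^{-(2N+2)}$ and, with it, the desired $r\cdot \lambda^{\Omega(n)}$ upper bound on the resolution.
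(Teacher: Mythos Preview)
Your proof is correct and uses essentially the same construction as the paper: the same stacked triangulation (which the paper identifies as the Eades--Garvan graph), the same choice of~$\Delta$ up to a rigid motion, and the same coefficient matrix. The only genuine difference is in how the linear recurrence $y_i=\lambda(y_{i-1}+y_{i+1})$ is analysed. You solve it in closed form via the characteristic roots $\rho_\pm$ and the identity $\rho_\pm-\lambda=\lambda\rho_\pm^2$, obtaining the exact value $y_N=r\rho_-^N(1-\rho_-^2)/(1-\rho_-^{2N+2})$; the paper instead invokes the planarity of the F-drawing to get the monotonicity $x(v_{i+1})<x(v_i)$ for free, substitutes it back into the recurrence to obtain $x(v_i)\leq \tfrac{\lambda}{1-\lambda}\,x(v_{i-1})$, and iterates. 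The paper's route is shorter and sidesteps the characteristic-root algebra entirely (in particular it never needs the ``growing mode'' discussion you flag as the main technical step), while your route yields the exact decay rate $\rho_-$ and sharper constants; both land in $r\cdot\lambda^{\Omega(n)}$ once one notes that $(2\lambda)^N\leq \lambda^{N/2}$ for $\lambda\leq 1/4$.
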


We remark that algorithms are known for constructing planar straight-line drawings of maximal plane graphs~\cite{dpp-htgg-90,s-epgg-90} and even convex planar straight-line drawings of $3$-connected plane graphs~\cite{br-scdpg-06,cg-cdg-96} with polynomial resolution. Indeed, drawings on the grid with polynomial area have polynomially bounded resolution. 

Third, we use Theorem~\ref{th:lower-bound-gt-drawings} in order to prove a lower bound on the resolution of FG-morphs.

\begin{theorem} \label{th:lower-bound-morph} 
Let $\Gamma_0$ and $\Gamma_1$ be any two planar straight-line drawings of the same $n$-vertex maximal plane graph $G$ such that the outer faces of $\Gamma_0$ and $\Gamma_1$ are delimited by the same triangle $\Delta$.
There exists an FG-morph $\mathcal M=\{\Gamma_t: t\in[0,1]\}$ between $\Gamma_0$ and $\Gamma_1$ such that, for each $t\in[0,1]$, the resolution of $\Gamma_t$ is larger than or equal to $\left(r/n\right)^{O(n)}$, where $r$ is the minimum between the resolution of $\Gamma_0$ and $\Gamma_1$.
\end{theorem}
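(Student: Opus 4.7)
The plan is to combine Theorem~\ref{th:lower-bound-gt-drawings} with an appropriate choice of coefficient matrices. Since each entry $\lambda^t_{vu}=(1-t)\lambda^0_{vu}+t\lambda^1_{vu}$ of $\Lambda_t$ is a convex combination of two strictly positive reals, the smallest positive entry of $\Lambda_t$ is at least the minimum of the smallest positive entries of $\Lambda_0$ and $\Lambda_1$. Moreover, the outer triangle $\Delta$ inherits resolution at least $r$ from either input drawing: its three vertex-to-opposite-edge distances are non-incident distances in $\Gamma_0$, hence each at least $r\cdot L_{\Gamma_0}$, while the largest of them is at most $L_{\Gamma_0}$. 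Hence, if I can find $\Lambda_0$ and $\Lambda_1$ realizing $\Gamma_0$ and $\Gamma_1$ respectively whose smallest positive entries are at least $\Omega(r/n)$, then Theorem~\ref{th:lower-bound-gt-drawings} applied to $\Gamma_t=(\Lambda_t,\Delta)$ immediately yields that the resolution of $\Gamma_t$ is at least $(r/n)^{O(n)}$ for every $t\in[0,1]$.

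The heart of the proof is thus the following geometric lemma: \emph{for every planar straight-line drawing $\Gamma$ of an $n$-vertex maximal plane graph with resolution $r$, there exists a coefficient matrix $\Lambda$ with $(\Lambda,\Delta)=\Gamma$ whose smallest positive entry is at least $\Omega(r/n)$}. I would prove it by a local ``ray-averaging'' construction at each internal vertex $v$ with cyclic list of neighbors $p_1,\dots,p_k$. For every $i$, extend the ray from $p_i$ through $v$ until it exits the link polygon of $v$ at a point $q_i$ lying on some link edge $p_{a(i)}p_{b(i)}$; since the ray leaves $p_i$ heading into the polygon, it cannot re-cross an edge incident to $p_i$, so $p_i\notin\{p_{a(i)},p_{b(i)}\}$. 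Writing $v=\alpha_i p_i+(1-\alpha_i)(\beta_i p_{a(i)}+(1-\beta_i)p_{b(i)})$ and averaging these $k$ representations yields $v=\sum_j\lambda_{vj}p_j$ with $\lambda_{vj}\ge\alpha_j/k\ge\alpha_j/n$; so it remains to show that $\alpha_j\ge\Omega(r)$.

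The collinearity of $p_i$, $v$, $q_i$ gives the identity $\alpha_i=d(v,\ell_i)/d(p_i,\ell_i)$, where $\ell_i$ is the line supporting the edge $p_{a(i)}p_{b(i)}$, because perpendicular distances to $\ell_i$ scale linearly along the ray. Combined with $d(v,\ell_i)\ge s_\Gamma$ and $d(p_i,\ell_i)\le L_\Gamma$---both obtained from $v$ and $p_i$ being non-incident to the segment $p_{a(i)}p_{b(i)}$---this would give $\alpha_i\ge s_\Gamma/L_\Gamma=r$, as desired. The main technical obstacle lies precisely in these inequalities: the bounds on \emph{segment} distances translate to bounds on \emph{line} distances only when the perpendicular foot lands on the segment, which is automatic when the link polygon of $v$ is convex but can fail otherwise. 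Handling the non-convex case---either by a case analysis showing that the exit geometry then forces a compensating favorable configuration, or by paying a polynomial-in-$n$ factor which is harmlessly absorbed by the $(r/n)^{O(n)}$ target bound---is the principal technical challenge; once it is resolved, plugging the lemma into the reduction above completes the proof.
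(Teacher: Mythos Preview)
Your strategy and construction coincide with the paper's: the key lemma is exactly Lemma~\ref{le:resolution-to-coefficient}, and your ray-averaging is precisely the Floater--Gotsman construction analyzed there; the reduction to Theorem~\ref{th:lower-bound-gt-drawings} via $\lambda^t_{vu}\ge\min(\lambda^0_{vu},\lambda^1_{vu})$ and the observation that $\Delta$ has resolution at least $r$ are also the same.

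The ``principal technical challenge'' you flag, however, is self-created. By expressing $\alpha_i$ through perpendicular distances to the \emph{line} $\ell_i$ you manufacture a segment-versus-line mismatch that is entirely avoidable. Use instead the collinear ratio along the ray itself:
\[
\alpha_i \;=\; \frac{|\overline{v\,q_i}|}{|\overline{p_i\,q_i}|}.
\]
Now $q_i$ lies on the \emph{segment} $p_{a(i)}p_{b(i)}$, which is an edge of $G$ not incident to $v$; hence $|\overline{v\,q_i}|\ge d_\Gamma\bigl(v,\,p_{a(i)}p_{b(i)}\bigr)\ge s_\Gamma$. And both $p_i$ and $q_i$ lie inside $\Delta$, so $|\overline{p_i\,q_i}|\le L_\Gamma$. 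Therefore $\alpha_i\ge s_\Gamma/L_\Gamma=r$, with no convexity assumption and no case analysis needed. This is exactly how the paper handles it, and once you adopt this formula your argument is complete as written.
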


We show that Theorem~\ref{th:lower-bound-morph} can be used in order to ``approximate'' FG-morphs with piecewise linear morphs of finite complexity. A \emph{piecewise linear morph} consists of a sequence of linear morphs, in which vertices move at uniform speed along straight-line segments. It is intuitive that one can mimic an FG-morph by a sequence of ``suitably short'' linear morphs, each connecting two drawings of the FG-morph, so that planarity is maintained at all times. Our result, whose precise statement is deferred to Section~\ref{se:approximating-FG}, shows that a finite number of linear morphs suffices to design such a discretization of the FG-morph.

Finally, we prove that FG-morphs might produce drawings with exponentially small resolution, even if they transform drawings with polynomial resolution.

\begin{theorem} \label{th:upper-bound-morphing}
For every $n\geq 6$ multiple of $3$, there exist an $n$-vertex maximal plane graph $G$ and two planar straight-line drawings $\Gamma_0$ and $\Gamma_1$ of $G$ such that:  
\begin{enumerate}[(R1)]
	\item the outer faces of $\Gamma_0$ and $\Gamma_1$ are delimited by the same triangle $\Delta$; 
	\item the resolution of both $\Gamma_0$ and $\Gamma_1$ is larger than or equal to $c/n^2$, for some constant $c$; and
	\item any FG-morph between $\Gamma_0$ and $\Gamma_1$ contains a drawing whose resolution is in $1/2^{\Omega(n)}$.
\end{enumerate}
\end{theorem}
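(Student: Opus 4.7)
The plan is to leverage Theorem~\ref{th:upper-bound-gt-drawings}: I would design the pair $(\Gamma_0,\Gamma_1)$ so that an intermediate coefficient matrix of the FG-morph coincides with an adversarial Eades-Garvan matrix on the graph $G_n$ of that theorem. The intermediate drawing then inherits the exponentially small resolution bound of Theorem~\ref{th:upper-bound-gt-drawings}, establishing~(R3).

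\textbf{Graph.} Take $G=G_n$ to be a stacked triangulation, i.e., every internal vertex has degree exactly~$3$. This is crucial because a stacked triangulation has a unique coefficient matrix realizing each of its planar straight-line drawings: the three coefficients at each internal vertex are forced to be the barycentric coordinates of the vertex's position with respect to its three neighbors. Consequently, the FG-morph between $\Gamma_0$ and $\Gamma_1$ is uniquely determined, so the universal quantifier ``any FG-morph'' in~(R3) refers to a single morph.

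\textbf{End-drawings and bad midpoint.} For $\Gamma_0$ and $\Gamma_1$ I would start from grid embeddings of $G_n$ produced by a polynomial-area algorithm such as~\cite{dpp-htgg-90,s-epgg-90}, scaled to fit inside $\Delta$. This guarantees (R1) and gives resolution at least $c/n^2$ for an absolute constant $c$, which is (R2). The freedom lies in the exact positions assigned to the internal vertices. I would tune these positions so that, at every internal stacked vertex of $G_n$, the barycentric coordinates realized in $\Gamma_0$ and those realized in $\Gamma_1$ average to the triple of coordinates used in the adversarial Eades-Garvan matrix of Theorem~\ref{th:upper-bound-gt-drawings}, with smallest positive entry $\lambda$ of order $1/\mathrm{poly}(n)$. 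Then $\Lambda_{1/2}:=\tfrac{1}{2}(\Lambda_0+\Lambda_1)$ coincides with that adversarial matrix, and Theorem~\ref{th:upper-bound-gt-drawings} applied to $\Gamma_{1/2}=(\Lambda_{1/2},\Delta)$ yields resolution $r\cdot\lambda^{\Omega(n)}\subseteq 1/2^{\Omega(n)}$, establishing (R3).

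\textbf{Main obstacle.} The crux is the tension between the requirements on $\Gamma_0$ and $\Gamma_1$: each must have polynomial resolution, yet their induced coefficient matrices must average to a matrix which, if realized on its own, would force \emph{exponentially} small resolution. Hence $\Gamma_0$ and $\Gamma_1$ must differ from $\Gamma_{1/2}$ by perturbations that are small in coefficient space but large in position space. I would handle this by an inductive construction along the stacking order of $G_n$: at each stacking level, the two candidate positions of the newly stacked vertex in $\Gamma_0$ and in $\Gamma_1$ are chosen symmetrically around the adversarial target position at that level, each far enough from the edges of the already drawn triangulation to preserve polynomial resolution, yet positioned so that their barycentric midpoint lands exactly on the adversarial target. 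Showing that such a pair of positions exists simultaneously at every level while preserving planarity and polynomial resolution in both $\Gamma_0$ and $\Gamma_1$ is the technical heart of the argument.
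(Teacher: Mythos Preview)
Your plan has a genuine gap at its foundation. You assert that in a stacked triangulation ``every internal vertex has degree exactly~$3$'', and use this to conclude that the coefficient matrix---hence the FG-morph---is unique. But no maximal plane graph with $n\ge 5$ has this property: once you stack a second vertex inside $K_4$, the first internal vertex acquires degree at least~$4$, and in general only the last-stacked vertices retain degree~$3$. In particular, the Eades--Garvan graph of Theorem~\ref{th:upper-bound-gt-drawings} has internal vertices $v_1,\dots,v_{n-4}$ of degree~$4$. For a degree-$4$ internal vertex, the convex combination expressing its position in terms of its neighbours is a one-parameter family, so $\Lambda_0$ and $\Lambda_1$ are not determined by $\Gamma_0$ and $\Gamma_1$. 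Your reduction to a single midpoint matrix $\Lambda_{1/2}$ therefore does not address the universal quantifier ``any FG-morph'' in~(R3): an adversary choosing $\Lambda_0,\Lambda_1$ could steer the midpoint away from your target matrix.

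The paper sidesteps this issue entirely. It uses a different graph (the triangulated nested-triangles graph) and two explicit grid drawings $\Gamma_0,\Gamma_1$ that differ by a rotation of the inner triangles. The key step is purely geometric: the positions in $\Gamma_0$ force $\lambda^0_{u_iu_{i+1}},\lambda^0_{v_iv_{i+1}},\lambda^0_{z_iz_{i+1}}>\tfrac12$ in \emph{every} valid $\Lambda_0$, and analogously $\Gamma_1$ forces $\lambda^1_{u_iz_{i+1}},\lambda^1_{v_iu_{i+1}},\lambda^1_{z_iv_{i+1}}>\tfrac12$ in every valid $\Lambda_1$ (because each such vertex lies beyond the convex hull of all its other neighbours in the relevant direction). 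These inequalities survive averaging, so at $t=\tfrac12$ each vertex of the $i$-th triangle is pulled by two neighbours on the $(i{+}1)$-st triangle with combined weight exceeding~$\tfrac12$; a direct area argument then shows $\mathcal A(\Delta_i)\le 0.9375\cdot\mathcal A(\Delta_{i+1})$, giving the exponential decay. No appeal to Theorem~\ref{th:upper-bound-gt-drawings} is made, and no uniqueness of coefficient matrices is needed.
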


We remark that the construction of planar straight-line morphs with high resolution has been attracting an increasing attention~\cite{bbd-hmt-19,bhl-msd-19,ddf-upm-20}. A main open question in the area is whether polynomial resolution can be guaranteed in a planar straight-line morph between any two given drawings (with polynomial resolution) of a plane graph. Theorem~\ref{th:upper-bound-morphing} shows that Floater and Gotsman's algorithm cannot be used to settle this question in the positive.






\section{Preliminaries}

We assume familiarity with graph drawing~\cite{dett-gd-99} and introduce preliminary properties and lemmata.

We denote by $V(G)$ and $E(G)$ the vertex and edge sets of a graph $G$, respectively. For a vertex $v\in V(G)$, we denote by $\mathcal N_G(v)$ (or by $\mathcal N(v)$ when the graph is clear from the context) the set of neighbors of $v$ in $G$; the \emph{degree} of $v$ is $|\mathcal N(v)|$. Throughout the paper, by the term \emph{cycle} we always refer to a simple cycle. We denote by $n$ the number of vertices of the considered graphs; in the following, we always assume that $n\geq 4$.

A graph is \emph{biconnected} if the removal of any vertex leaves the graph connected.

A \emph{drawing} of a graph represents each vertex as a distinct point in the plane and each edge as a Jordan arc connecting the points representing the end-vertices of the edge, so that no point representing a vertex lies in the interior of a Jordan arc representing an edge. A drawing of a graph is \emph{planar} if no two Jordan arcs representing edges intersect, except at common end-points. A planar drawing of a graph partitions the plane into connected regions, called \emph{faces}. The only unbounded face is the \emph{outer face}, while the bounded faces are \emph{internal}. In a planar drawing of a biconnected planar graph every face is delimited by a cycle, which is a \emph{facial cycle}; the facial cycle bounding the outer face is called \emph{outer cycle}. Two planar drawings $\Gamma_1$ and $\Gamma_2$ of a biconnected planar graph $G$ are \emph{combinatorially equivalent} if: (1) for each vertex $v\in V(G)$, the clockwise order of the edges incident to $v$ is the same in $\Gamma_1$ as in $\Gamma_2$; (2) the clockwise order of the vertices along the outer cycles of $\Gamma_1$ and $\Gamma_2$ coincide. A \emph{plane embedding} is an equivalence class of planar drawings. A \emph{plane graph} is a planar graph with an associated plane embedding. A \emph{maximal plane graph} is a plane graph to which no edge can be added without losing planarity or simplicity. The assumption $n\geq 4$ implies that every vertex of an $n$-vertex maximal plane graph has degree at least $3$. 

Let $G$ be a plane graph. When referring to a planar drawing $\Gamma$ of $G$, we always mean that $\Gamma$ is in the equivalence class associated to $G$. All the planar drawings of $G$ have the same set of facial cycles, hence we often talk about the ``faces of $G$'', meaning the faces of any planar drawing in the equivalence class associated to $G$. A subgraph $G'$ of $G$ is associated with a plane embedding ``inherited'' from the one of $G$, as follows. Consider a planar drawing $\Gamma$ of $G$ and the planar drawing $\Gamma'$ of $G'$ obtained from $\Gamma$ by removing the vertices in $V(G)\setminus V(G')$ and the edges in $E(G)\setminus E(G')$; then the plane embedding of $G'$ is the equivalence class of planar drawings of $G'$ the drawing $\Gamma'$ belongs to. We say that $G$ is \emph{internally-triangulated} if every internal face of $G$ is delimited by a $3$-cycle. A vertex of $G$ is \emph{external} if it is incident to the outer face of $G$, it is \emph{internal} otherwise. The sets of internal and external vertices of $G$ are denoted by $\mathcal I_G$ and $\mathcal O_G$, respectively. Consider a cycle $\mathcal C$ of $G$. An \emph{external chord} of $\mathcal C$ is an edge of $G$ that connects two vertices of $\mathcal C$, that does not belong to $\mathcal C$, and that lies outside $\mathcal C$ in $G$. The \emph{subgraph of $G$ inside $\mathcal C$} is composed of the vertices and edges that lie inside or on the boundary of $\mathcal C$. The following is easy to observe. 

\begin{propertyX}\label{pr:subgraph-inside-cycle}
Let $G$ be a maximal plane graph and let $\mathcal C$ be a cycle of $G$. The subgraph of $G$ inside $\mathcal C$ is biconnected and internally-triangulated.
\end{propertyX} 



\subsection{Distances, resolution, and F-drawings} \label{sse:fg}

A drawing of a graph is \emph{straight-line} if every edge is represented by a straight-line segment. In a straight-line drawing of a graph, by \emph{geometric object} we mean a point representing a vertex or a straight-line segment representing an edge. We often call ``vertex''  or ``edge'' both the combinatorial object and the corresponding geometric object. Two geometric objects in a planar straight-line drawing of a graph are \emph{separated} if they share no point. By the planarity of the drawing, two geometric objects are hence separated if and only if they are distinct vertices, or non-adjacent edges, or a vertex and a non-incident edge. The \emph{distance} $d_{\Gamma}(o_1,o_2)$ between two separated geometric objects $o_1$ and $o_2$ in a planar straight-line drawing $\Gamma$ of a graph $G$ is the minimum Euclidean distance between any point of $o_1$ and any point of $o_2$. For two vertices $u$ and $v$ of $G$, we denote by $d^\updownarrow_{\Gamma}(u,v)$ the vertical distance between $u$ and $v$ in $\Gamma$, that is, the absolute value of the difference between their $y$-coordinates. The \emph{resolution} of $\Gamma$ is the ratio between the distance of the closest separated geometric objects and the distance of the farthest separated geometric objects in $\Gamma$. Throughout the paper, we denote by $\delta$ the distance of the closest separated geometric objects in any considered drawing. 

Let $\mathcal R$ be a finite connected subset of $\mathbb R^n$ and let $z_1,\dots,z_n$ be the coordinates of $\mathbb R^n$. For any $l=1,\dots,n$, the \emph{$z_l$-extent} of $\mathcal R$ is the maximum $z_l$-coordinate of any point of $\mathcal R$ minus the minimum $z_l$-coordinate of any point of $\mathcal R$. 

Throughout the paper, we denote by $(\Lambda,\Delta)$ an F-drawing of a maximal plane graph $G$, where $\Lambda$ is a coefficient matrix for $G$ and $\Delta$ is a triangle representing the outer cycle of $G$. Also, we denote by $\lambda$ the smallest positive coefficient in $\Lambda$. Note that a T-drawing is an F-drawing in which, for every internal vertex $v$ and every neighbor $u$ of $v$, we have $\lambda_{vu}=1/|\mathcal N(v)|$. 

Next, we present a tool that we often use in order to translate and rotate F-drawings. The lemma shows that, given an F-drawing $(\Lambda,\Delta)$, if we construct a new F-drawing using the coefficient matrix $\Lambda$ and an affinely transformed copy of $\Delta$, we obtain a drawing that is an affinely transformed copy of $(\Lambda,\Delta)$, under the same affine transformation.

\begin{lemma} \label{le:rotating}
Let $(\Lambda,\Delta)$ be an F-drawing of a maximal plane graph $G$ and let $\Delta'$ be the triangle obtained by applying an affine transformation $\sigma$ to $\Delta$. Then the F-drawing $(\Lambda,\Delta')$ coincides with the drawing obtained by applying $\sigma$ to the drawing $(\Lambda,\Delta)$.
\end{lemma}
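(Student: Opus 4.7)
The plan is to exploit the uniqueness of the solution to the linear system defining an F-drawing. Recall that $(\Lambda,\Delta')$ is, by definition, the unique drawing in which the three external vertices are placed at the corners of $\Delta'$ (according to the fixed combinatorial embedding) and every internal vertex $v$ satisfies Equations~\ref{eq:x} and~\ref{eq:y} with respect to the coefficients $\lambda_{vu}$. Denote by $\Gamma := (\Lambda,\Delta)$ the given F-drawing and by $\sigma(\Gamma)$ the drawing obtained by applying $\sigma$ to every vertex (and hence to every straight-line edge) of $\Gamma$. I would show that $\sigma(\Gamma)$ satisfies the two defining properties of $(\Lambda,\Delta')$, whence the two drawings must coincide.

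First I would check the outer boundary: by hypothesis $\sigma$ maps $\Delta$ to $\Delta'$, so the three external vertices in $\sigma(\Gamma)$ are exactly the corresponding corners of $\Delta'$. Next I would verify the convex-combination equations for each internal vertex. Write $\sigma$ as $\sigma(p)=Rp+\mathbf{t}$, where $R$ is a rotation matrix and $\mathbf{t}\in\mathbb{R}^2$ is a translation vector. For every internal vertex $v$, the drawing $\Gamma$ satisfies the vector identity $v=\sum_{u\in\mathcal N(v)}\lambda_{vu}\,u$ (the combination of Equations~\ref{eq:x} and~\ref{eq:y}). Applying $\sigma$ and using linearity of $R$ together with $\sum_{u\in\mathcal N(v)}\lambda_{vu}=1$, we obtain
\[
\sigma(v)=Rv+\mathbf{t}=\sum_{u\in\mathcal N(v)}\lambda_{vu}\,Ru+\Bigl(\sum_{u\in\mathcal N(v)}\lambda_{vu}\Bigr)\mathbf{t}=\sum_{u\in\mathcal N(v)}\lambda_{vu}\,\sigma(u),
\]
so $\sigma(v)$ is indeed the same convex combination of the $\sigma(u)$'s dictated by $\Lambda$. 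Separating real and imaginary (i.e.\ $x$ and $y$) components, this yields precisely Equations~\ref{eq:x} and~\ref{eq:y} for $\sigma(\Gamma)$.

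At this point, both $\sigma(\Gamma)$ and $(\Lambda,\Delta')$ are solutions to the same linear system: they agree on the boundary $\Delta'$ and both satisfy the $2N$ equations in Equations~\ref{eq:x} and~\ref{eq:y} for the $2N$ coordinates of the internal vertices. Invoking the uniqueness of the solution to such a system, stated right after Equations~\ref{eq:x} and~\ref{eq:y} and established by Floater~\cite{f-psa-97,f-pt-98}, we conclude that $\sigma(\Gamma)=(\Lambda,\Delta')$, as desired. I do not foresee a serious obstacle; the only subtle point worth emphasizing is that the invariance of the convex-combination equations under translation relies crucially on the normalization $\sum_{u\in\mathcal N(v)}\lambda_{vu}=1$, whereas invariance under rotation is just the linearity of $R$.
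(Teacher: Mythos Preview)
Your proof is correct and follows essentially the same approach as the paper: verify that the transformed drawing satisfies the defining linear system and invoke uniqueness, using linearity for the rotation part and $\sum_{u\in\mathcal N(v)}\lambda_{vu}=1$ for the translation part. The only difference is presentational: the paper treats the rotation and translation separately with explicit coordinate computations, whereas you handle them simultaneously via the vector form $\sigma(p)=Rp+\mathbf t$.
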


\begin{proof}
The affine transformation $\sigma$ maps any point $(x,y)$ in the plane to the point $(\overline{x},\overline{y})= (a\cdot x +b\cdot y+c, d\cdot x+e\cdot y+f)$, for some scalars $a$, $b$, $c$, $d$, $e$, and $f$. We need to show that, for every internal vertex $v$ of $G$, we have 
$$\overline {x(v)}=\sum_{u\in \mathcal N(v)} (\lambda_{vu}\cdot \overline{x(u)})$$ \noindent and $$\overline{y(v)}=\sum_{u\in \mathcal N(v)} (\lambda_{vu}\cdot \overline{y(u)}).$$

We only prove the former equality, as the proof for the latter is analogous. By definition of $\sigma$, we have $\overline {x(w)}=a\cdot x(w) +b\cdot y(w)+c$, for each vertex $w$, and thus
\begin{eqnarray*}
	\sum_{u\in \mathcal N(v)} (\lambda_{vu}\cdot \overline{x(u)}) &=&\sum_{u\in \mathcal N(v)} (\lambda_{vu}\cdot (a\cdot x(u) +b\cdot y(u)+c))\\
	&=&a \cdot \sum_{u\in \mathcal N(v)} (\lambda_{vu}\cdot x(u)) +b \cdot \sum_{u\in \mathcal N(v)} (\lambda_{vu}\cdot y(u))+c \cdot \sum_{u\in \mathcal N(v)} \lambda_{vu}\\
	&=& a \cdot x(v) + b \cdot y(v) +c=\overline {x(v)},
\end{eqnarray*}

where the second to last equality follows from Equations~\ref{eq:x} and~\ref{eq:y} applied to the drawing $(\Lambda,\Delta)$ and from $\sum_{u\in \mathcal N(v)} \lambda_{vu}=1$. 
\end{proof}

We sometimes use the following elementary property.

\begin{propertyX} \label{pr:minimum-lambda}
Let $(\Lambda,\Delta)$ be an F-drawing of a maximal plane graph $G$ with at least four (five) vertices. The smallest positive coefficient $\lambda$ in $\Lambda$ is at most equal to $1/3$ (resp.\ $1/4$).
\end{propertyX}

\begin{proof}
Since $G$ has at least four (five) vertices, it contains an internal vertex $v$ with degree at least $3$ (resp.\ at least $4$). Since the coefficients expressing the position of $v$ with respect to the ones of its neighbors are all positive and sum up to $1$, at least one of such coefficients is smaller than or equal to $1/3$ (resp.~$1/4$). 
\end{proof}

\subsection{The resolution of triangles}

We now show some properties on the resolution of triangles. We start with the following observation.


\begin{propertyX} \label{pro:resolution-triangle}
The resolution of a triangle is smaller than or equal to $\frac{\sqrt 3}{2}$ and there exist triangles with this resolution. 
\end{propertyX}

\begin{proof}
First, note that an equilateral triangle has resolution $\frac{\sqrt 3}{2}$.

Consider any triangle $\Delta$ and let $\alpha\geq 60^{\circ}$ be its largest angle. Let $v$ be the vertex of $\Delta$ incident to such an angle, let $\ell$ be the length of the side $s$ of $\Delta$ opposite to $v$, and let $h$ be the height of $\Delta$ with respect to~$s$. Note that $h$ and $\ell$ are the smallest and the largest distance between two separated geometric objects of $\Delta$, respectively, hence the resolution of $\Delta$ is $h/\ell$.

The altitude of $\Delta$ through $v$ partitions $\alpha$ into two smaller angles $\alpha_1$ and $\alpha_2$. Simple trigonometric considerations show that $\ell=h\cdot(\tan(\alpha_1) +\tan(\alpha_2))$. By substituting $\alpha_2=\alpha-\alpha_1$ and by deriving with respect to $\alpha_1$, we get that the minimum of the function $f(\alpha_1):=\tan(\alpha_1) +\tan(\alpha-\alpha_1)$ is achieved when $\alpha_1=\alpha/2$. Hence, $\frac{h}{\ell}= \frac{1}{\tan(\alpha_1) +\tan(\alpha_2)}\leq \frac{1}{2\tan(\alpha/2)}\leq \frac{1}{2\tan(30^{\circ})}=\frac{\sqrt 3}{2}$. \end{proof}

We now prove the following.

\begin{lemma} \label{le:triangle-properties}
Let $\Delta$ be a triangle with resolution equal to $r$. Let $s$ be any side of $\Delta$, let $\ell$ be its length, and let $h$ be the height of $\Delta$ with respect to $s$. Then $h/\ell \geq r$.
\end{lemma}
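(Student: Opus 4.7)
The plan is to unpack the definition of resolution and observe that both $h$ and $\ell$ arise as distances between specific separated pairs in $\Delta$, from which the inequality will follow immediately from the trivial bounds "any particular distance is at least the minimum and at most the maximum."

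First I would enumerate the separated geometric objects of $\Delta$. Being a triangle, $\Delta$ has exactly three vertices and three edges; the separated pairs are precisely the three vertex-vertex pairs (whose distances equal the lengths of the three sides) and the three vertex-edge pairs in which the vertex is not incident to the edge (whose distances equal the three altitudes of $\Delta$). Hence the resolution of $\Delta$ equals
\[
r \;=\; \frac{d_{\min}(\Delta)}{d_{\max}(\Delta)},
\]
where $d_{\min}(\Delta)$ and $d_{\max}(\Delta)$ are the minimum and maximum, respectively, over these six quantities.

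Next I would identify the two separated pairs relevant to the statement. Let $v$ be the vertex of $\Delta$ opposite to $s$. Then $h$ is the Euclidean distance between $v$ and $s$, a separated pair, so $h \geq d_{\min}(\Delta)$. Similarly, $\ell$ is the distance between the two endpoints of $s$, which is also a separated pair, so $\ell \leq d_{\max}(\Delta)$. Combining,
\[
\frac{h}{\ell} \;\geq\; \frac{d_{\min}(\Delta)}{d_{\max}(\Delta)} \;=\; r,
\]
which is the desired inequality.

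There is essentially no obstacle here: the lemma is really a direct unpacking of the definition of resolution together with the identification of $h$ and $\ell$ as distances between particular separated pairs. The only mild care needed is to confirm that in a triangle no further separated pairs exist (the only non-separated vertex-edge pairs are the incident ones, and there are no non-adjacent edges since every two of the three sides share an endpoint), so that the two bounds above involve exactly the minimum and maximum in the definition of $r$.
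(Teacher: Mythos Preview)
Your argument has a subtle but genuine gap. You assert that the distance between the vertex $v$ opposite $s$ and the edge $s$ equals the height $h$, but this is only true when the foot of the altitude from $v$ lies on the segment $s$. If one of the two angles of $\Delta$ adjacent to $s$ is obtuse, the foot of that altitude lies outside $s$, and then $d_{\Gamma}(v,s)$ (which is the distance from $v$ to the nearer endpoint of $s$) is strictly greater than $h$. In that case your inequality $h \geq d_{\min}(\Delta)$ does not follow from ``$h$ is one of the six distances,'' because $h$ is simply not among them; your parenthetical claim that the three vertex--edge distances ``equal the three altitudes of $\Delta$'' is false for obtuse triangles.

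The paper's proof confronts exactly this issue: it splits into two cases according to whether the altitude through $v$ lies inside $\Delta$, and in the obtuse case it uses a similarity argument (comparing the altitude from $v$ with the altitude from the obtuse vertex, whose foot does lie on its opposite side) to recover $h/\ell > r$. Your approach can be repaired along your own lines, for instance by first arguing that $d_{\min}(\Delta)$ always equals the smallest altitude of $\Delta$: in the obtuse case the two vertex--edge distances that are not altitudes coincide with side lengths, and the smallest altitude is at most the shortest side since $h_{\min}=c\sin B\le c$. Once $d_{\min}(\Delta)$ is identified as the smallest altitude, $h \geq d_{\min}(\Delta)$ is immediate because $h$ is an altitude. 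But as written, the identification of vertex--edge distances with altitudes is incorrect and the proof is incomplete.
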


\begin{proof}
Let $A$ be the area of $\Delta$. Let $s$ be any side of $\Delta$, let $\ell$ be its length, and let $h$ be the height of $\Delta$ with respect to $s$. Also, let $s_1$ be the longest side of $\Delta$, let $\ell_1$ be its length, and let $h_1$ be the height of $\Delta$ with respect to $s_1$. Note that $\ell_1\geq \ell$ and $h_1\leq h$. Then we have $h/\ell=2A/\ell^2\geq 2A/\ell_1^2=(\ell_1\cdot h_1)/\ell_1^2=h_1/\ell_1=r$.
\end{proof}

\begin{lemma} \label{le:resolution-xy-triangle}
Let $\Delta$ be a triangle with resolution equal to $r$ and $y$-extent equal to $Y$. Then the $x$-extent $X$ of $\Delta$ is at most $Y/r$.
\end{lemma}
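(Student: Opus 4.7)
The plan is to prove the equivalent inequality $Y \geq rX$ by sandwiching the area of $\Delta$ between an upper bound in terms of $X$ and $Y$ and a lower bound in terms of $X$ and $r$.

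For the upper bound, I would use the fact that $\Delta$ lies inside its axis-aligned bounding box, which has dimensions $X \times Y$, and show that the area of $\Delta$ is at most $XY/2$. Labelling the vertices $A,B,C$ so that $y_A \leq y_B \leq y_C$, and setting $a := y_B - y_A \geq 0$ and $b := y_C - y_B \geq 0$, the shoelace formula simplifies to $2\cdot\mathrm{area}(\Delta) = |\,a(x_B-x_C) + b(x_B-x_A)\,|$. Since $|x_B-x_A|,|x_B-x_C| \leq X$ and $a+b = y_C - y_A \leq Y$, the triangle inequality immediately gives $2\cdot\mathrm{area}(\Delta) \leq X(a+b) \leq XY$.

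For the lower bound, let $s$ be a longest side of $\Delta$, of length $\ell$, and let $h$ be the height of $\Delta$ with respect to $s$. Since $\ell$ equals the diameter of $\Delta$ and the leftmost and rightmost vertices of $\Delta$ lie at Euclidean distance at least $X$, we have $\ell \geq X$. Applying Lemma~\ref{le:triangle-properties} to $s$ gives $h \geq r\ell$, and hence $\mathrm{area}(\Delta) = \ell h/2 \geq r\ell^2/2 \geq rX^2/2$.

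Combining the two bounds yields $rX^2 \leq XY$, and dividing by $X$ produces the desired inequality $X \leq Y/r$. The only non-routine step is the area upper bound $XY/2$, for which the shoelace identity above does essentially all the work; everything else is an immediate combination of Lemma~\ref{le:triangle-properties} with the trivial observation that the longest side of $\Delta$ is at least as long as the horizontal extent of $\Delta$.
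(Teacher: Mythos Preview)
Your argument is correct. Both your proof and the paper's derive the chain $XY \geq r\ell^2 \geq rX^2$ for a suitable side of length $\ell\geq X$, and both invoke Lemma~\ref{le:triangle-properties} for the inequality $h\geq r\ell$. The difference is in how the first inequality is obtained. The paper encloses $\Delta$ in its axis-aligned bounding box~$\mathcal R$, distinguishes two cases according to whether two or three vertices of~$\Delta$ lie on~$\partial\mathcal R$, and in each case selects a particular side $s$ of~$\Delta$ together with a corner $p$ of~$\mathcal R$ so that the auxiliary triangle on $s$ and $p$ has area exactly $XY/2$; comparing $h$ with the distance $h'$ from $p$ to $s$ then yields $XY=h'\ell\geq h\ell\geq r\ell^2$. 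You bypass the auxiliary triangle and the case split entirely by bounding the area of~$\Delta$ itself by $XY/2$ via the shoelace formula, and you secure $\ell\geq X$ by taking $s$ to be the longest side (the diameter). Your route is shorter and more uniform; the paper's is more explicitly geometric but pays for it with the case analysis.
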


\begin{proof}
Let $\mathcal R$ be the smallest axis-parallel rectangle containing $\Delta$; note that $\mathcal R$ has width $X$ and height $Y$. We distinguish two cases, based on whether two or three vertices of $\Delta$ belong to the boundary of $\mathcal R$.

\begin{figure}[htb]\tabcolsep=4pt
	\centering
	\begin{tabular}{c c}
		\includegraphics[scale=1]{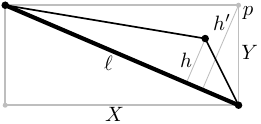}\hspace{5mm} & 				\includegraphics[scale=1]{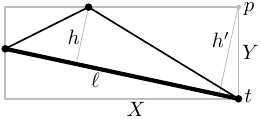}\\
		(a) \hspace{5mm} & (b) 
	\end{tabular}
	
	\caption{Illustration for the proof of Lemma~\ref{le:resolution-xy-triangle}. The side $s$ of $\Delta$ is represented by a fat line segment. In (a) two vertices of $\Delta$ belong to the boundary of $\mathcal R$, and in (b) three vertices of $\Delta$ belong to the boundary of $\mathcal R$.}
	\label{fig:triangle}
\end{figure}

{\em Case 1:} Two vertices of $\Delta$ belong to the boundary of $\mathcal R$; see Figure~\ref{fig:triangle}(a). In this case, one of the sides of $\Delta$, call it $s$, coincides with one of the diagonals of $\mathcal R$; let $\ell$ be the length of $s$. Also, let $h$ be the height of $\Delta$ with respect to $s$. Let $p$ be a vertex of $\mathcal R$ not incident to $s$, and let $h'$ be the distance between $p$ and $s$. 

Note that $h<h'$. Further, by Lemma~\ref{le:triangle-properties}, we have $h/\ell\geq r$, hence
\begin{equation}\label{eqn:triangle-1}
	h'\geq r\cdot \ell.
\end{equation}

By considering the area of $\mathcal R$ and the area of the triangle defined by $s$ and by $p$, we get

\begin{equation}\label{eqn:triangle-2}
	X\cdot Y = h'\cdot \ell.
\end{equation}

Moreover, since $\ell>X$, we get

\begin{equation}\label{eqn:triangle-3}
	\ell^2>X^2.
\end{equation}

Substituting Equation~\ref{eqn:triangle-1} into Equation~\ref{eqn:triangle-2}, we get 

\begin{equation}\label{eqn:triangle-4}
	X\cdot Y\geq r\cdot \ell^2.
\end{equation}

Substituting Equation~\ref{eqn:triangle-3} into Equation~\ref{eqn:triangle-4}, we get $X\cdot Y\geq r\cdot X^2$, from which it follows that $X\leq Y/r$, as required.

{\em Case 2:} Three vertices of $\Delta$ belong to the boundary of $\mathcal R$; see Figure~\ref{fig:triangle}(b). In this case, one of the vertices of $\Delta$, call it $t$, coincides with one of the vertices of $\mathcal R$; let $s$ be the side of $\Delta$ that is incident to $t$ and whose other end-vertex is on a vertical side of $\mathcal R$. Then the length $\ell$ of $s$ is larger than $X$, from which Equation~\ref{eqn:triangle-3} follows again. Let $h$ be the height of $\Delta$ with respect to $s$; notice that the altitude of $\Delta$ with respect to $s$ does not necessarily lie entirely inside $\mathcal R$. Furthermore, let $p$ be the vertex of $\mathcal R$ that shares a vertical side of $\mathcal R$ with $t$. Finally, let $h'$ be the distance between $p$ and $s$. 

By considering the area of $\mathcal R$ and the area of the triangle defined by $s$ and by $p$, Equation~\ref{eqn:triangle-2} follows again.

Equation~\ref{eqn:triangle-1} is also true in Case~2. Indeed, that $h\leq h'$ follows from the fact that $p$ is the farthest point of $\mathcal R$ from $s$. Moreover, that $h/\ell\geq r$ follows from Lemma~\ref{le:triangle-properties}.

Now, from Equations~\ref{eqn:triangle-1},~\ref{eqn:triangle-2}, and~\ref{eqn:triangle-3}, the bound $X\leq Y/r$ can be derived as in Case~1.
\end{proof}

\section{Lower Bound on the Resolution of F-Drawings}\label{se:lower-bound-gt-drawings}

In this section, we prove Theorem~\ref{th:lower-bound-gt-drawings}, which we restate here for the reader's convenience.

\medskip	
\rephrase{Theorem}{\ref{th:lower-bound-gt-drawings}}{%
Let $\Gamma=(\Lambda,\Delta)$ be an F-drawing of an $n$-vertex maximal plane graph $G$, where $n\geq 4$. The resolution of $\Gamma$ is larger than or equal to $\frac{r}{2} \cdot \left(\frac{\lambda}{3}\right)^n \in r\cdot \lambda^{O(n)}$, where $\lambda$ is the smallest positive coefficient in the coefficient matrix $\Lambda$ and $r$ is the resolution of the prescribed triangle $\Delta$. 
}
\medskip	

Let $\delta$ be the minimum distance between any two separated geometric objects in $\Gamma$. We start by proving that the smallest distance $\delta$ in $\Gamma$ is achieved ``inside'' an internal face of $G$. 

\begin{lemma} \label{le:first-vertex-is-internal}
Let $\Gamma=(\Lambda,\Delta)$ be an F-drawing of an $n$-vertex maximal plane graph $G$, where $n\geq 4$, and let $\delta$ be the minimum distance between any two separated geometric objects in $\Gamma$. There exist an internal vertex $v$ and an edge $e=(u_e,v_e)$ of $G$ such that:
\begin{itemize}
	\item $d_{\Gamma}(v,e)=\delta$; 
	\item $v$, $u_e$, and $v_e$ are the vertices of a triangle $T$ delimiting an internal face of $G$ in $\Gamma$; and
	\item the altitude of $T$ through $v$ lies inside $T$.
\end{itemize}
\end{lemma}

\begin{proof}
First, observe that there exist a vertex $v$ and an edge $e=(u_e,v_e)$ of $G$ such that $d_{\Gamma}(v,e)=\delta$. Indeed, the minimum distance between two separated geometric objects of $\Gamma$ is $\delta$, by assumption; each of such objects can be either a vertex or an edge. If the two objects are two vertices $u_1$ and $u_2$, then the distance between $u_1$ and any edge incident to $u_2$ is at most $\delta$; further, if the two objects are two edges $e_1$ and $e_2$, then there exists an end-vertex of one of them that is at distance $\delta$ from the other edge.

Second, we prove that $v$, $u_e$, and $v_e$ are the vertices of a triangle $T$ delimiting an internal face of $G$ in~$\Gamma$. Consider a segment $s$ of length $\delta$ between $v$ and a point $p$ of $e$; this exists because $d_{\Gamma}(v,e)=\delta$. 

\begin{itemize}
	\item Suppose first, for a contradiction, that $s$ is the straight-line segment representing an edge $e'$ of $G$. Let $f$ be any internal face of $G$ incident to $e'$ and let $T_f$ be the triangle delimiting $f$ in $\Gamma$. Then (at least) one of the heights of $T_f$ is smaller than $s$, and hence smaller than $\delta$, a contradiction.     
	\item Suppose next that $s$ does not represent an edge of $G$. If $s$ intersected a vertex or an edge in its interior, the distance between such a geometric object and $v$ would be smaller than $\delta$, which would contradict the assumptions. It follows that $s$ lies inside an internal face of $G$ that is incident to $v$, $u_e$, and $v_e$. Since $G$ is a maximal plane graph, such a face is delimited by a triangle $T$, whose vertices are $v$, $u_e$, and $v_e$.
\end{itemize} 

Third, suppose, for a contradiction, that the altitude of $T$ through $v$ does not lie inside $T$, hence the angle of $T$ at $u_e$ or $v_e$ is larger than $90^\circ$ (see Figure~\ref{fig:internal-vertex}(a)). Assume the former, as the other case is analogous. It follows that the altitude of $T$ through $u_e$ is the shortest altitude of $T$, hence the distance in $\Gamma$ between $u_e$ and the edge $(v,v_e)$ is smaller than $\delta$, a contradiction.

\begin{figure}[htb]\tabcolsep=4pt
	\centering
	\begin{tabular}{c c}
		\includegraphics[scale=0.85]{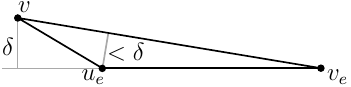}\hspace{3mm} &
		\includegraphics[scale=0.85]{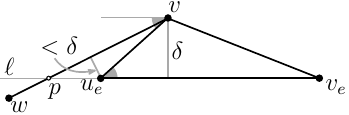}\\
		(a) \hspace{3mm} & (b) \hspace{3mm} 
	\end{tabular}
	
	\caption{Illustration for the proof of Lemma~\ref{le:first-vertex-is-internal}. (a) If the altitude of $T$ through $v$ does not lie inside $T$, then there is a height of $T$ smaller than $\delta$. (b) If $v$ is an external vertex of $G$, then the distance between one of the end-vertices of $e$ and one of the edges incident to the outer face of $G$ is smaller than $\delta$. The gray angles are equal.}
	\label{fig:internal-vertex}
\end{figure}

Finally, we prove that $v$ is an internal vertex of $G$; refer to Figure~\ref{fig:internal-vertex}(b). Suppose, for a contradiction, that $v$ is an external vertex of $G$ and let $w$ and $z$ be the two vertices of the outer cycle of $G$ different from $v$. Since $n\geq 4$ and since the $3$-cycle $(v,u_e,v_e)$ bounds an internal face of $G$, we have that the edge $e$ is not incident to the outer face of $G$. It follows that at least one of the edges $(v,w)$ and $(v,z)$ cuts the line $\ell$ through $e$, as otherwise $e$ would not lie inside the triangle delimiting the outer face of $\Gamma$. Say that $(v,w)$ cuts $\ell$ in a point $p$ and assume, without loss of generality, that $p$, $u_e$, and $v_e$ occur in this order along $\ell$. Then the distance from $u_e$ to the edge $(v,w)$ is $|\overline{vu_e}|\cdot \sin(\widehat{u_evw})$, which is smaller than  $|\overline{vu_e}|\cdot \sin(\widehat{vu_ev_e})=\delta$, given that $\widehat{u_evw}<\widehat{vu_ev_e}$, a contradiction which concludes the proof of the lemma. 
\end{proof}

By Lemma~\ref{le:rotating}, we can assume that $y(v)=0$, that $e$ is horizontal in $\Gamma$, and that $v$ lies above (the line through) $e$. We now show that the neighbors of $v$ are not ``too high'' or ``too low'' in $\Gamma$.

\begin{lemma} \label{le:first-vertex-height}
For every neighbor $u$ of $v$, we have that $d^\updownarrow_{\Gamma}(u,v)\leq \frac{\delta}{\lambda}$.
\end{lemma}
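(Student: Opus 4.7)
The plan is to prove the bound in two steps. First I would establish the geometric claim that every neighbor $u$ of $v$ satisfies $y(u)\geq -\delta$, i.e., no neighbor of $v$ lies strictly below the horizontal line through $e$. Second, I would plug this one-sided bound into Equation~\ref{eq:y} at $v$ to derive the two-sided bound $|y(u)|\leq \delta/\lambda$.

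For the first step, recall that by the normalization $y(v)=0$ and $e$ is horizontal at distance $\delta$ below $v$, so $y(u_e)=y(v_e)=-\delta$. I would suppose, for contradiction, that some neighbor $u$ of $v$ satisfies $y(u)<-\delta$; then $u\notin\{u_e,v_e\}$. The core observation is that the direction from $v$ to $u$ cannot lie strictly inside the angular sector at $v$ corresponding to the internal face $T=(v,u_e,v_e)$: otherwise the segment $(v,u)$ would start inside $T$ and, since $u$ lies outside this face (as $T$ is a face and $u\neq u_e,v_e$), the segment would have to exit $T$ across the edge $e$, contradicting planarity. Hence the direction from $v$ to $u$ lies in one of the two wedges of the lower half-plane flanking the $T$-sector: the lower-left wedge (between the negative $x$-axis and the direction to $u_e$) or the symmetric lower-right wedge.

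I would focus on the lower-left case, since the other is symmetric by swapping the roles of $u_e$ and $v_e$. Parameterizing $u=(-r\cos\beta,-r\sin\beta)$ with $r>0$ and $0<\beta$ strictly smaller than the angle of $T$ at $u_e$, a direct calculation shows that the perpendicular distance from $u_e$ to the line through $v$ and $u$ equals $\delta\cos\beta-|x(u_e)|\sin\beta$, which is strictly smaller than $\delta$; a short check (using $y(u)<-\delta$, i.e., $r>\delta/\sin\beta$) confirms that the foot of perpendicular lies inside the segment $(v,u)$, so this also bounds $d_\Gamma(u_e,(v,u))$ from above. Since $u_e$ and $(v,u)$ are separated geometric objects, this contradicts the minimality of $\delta$. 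I expect the main obstacle to be the geometric bookkeeping: pinning down precisely the angular restriction on the direction from $v$ to $u$, and verifying that the perpendicular foot of $u_e$ onto $(v,u)$ lies within the segment so that distance-to-segment equals distance-to-line.

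Having established that $y(u)\geq -\delta$ for every neighbor $u$ of $v$, the second step is a short manipulation of Equation~\ref{eq:y} at $v$. Since $y(v)=0$, we have $\sum_{u:\,y(u)>0}\lambda_{vu}\,y(u)=\sum_{u:\,y(u)<0}\lambda_{vu}\,|y(u)|\leq \delta\sum_{u:\,y(u)<0}\lambda_{vu}\leq \delta$, where the first inequality uses the bound $|y(u)|\leq \delta$ from the first step. Therefore, for every neighbor $u^*$ of $v$ with $y(u^*)>0$, we get $\lambda_{vu^*}\,y(u^*)\leq \delta$, hence $y(u^*)\leq \delta/\lambda_{vu^*}\leq \delta/\lambda$. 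Neighbors $u$ with $y(u)\leq 0$ trivially satisfy $|y(u)|\leq \delta\leq \delta/\lambda$, since $\lambda<1$. In all cases $d^\updownarrow_\Gamma(u,v)=|y(u)|\leq \delta/\lambda$, as required.
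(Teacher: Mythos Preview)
Your proposal is correct and follows essentially the same two-step structure as the paper: first a geometric lower bound on $y(u)$ for neighbors of $v$, obtained by exhibiting a separated vertex--edge pair at distance less than $\delta$; second, a balance argument via Equation~\ref{eq:y}.

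The only noteworthy difference is in the first step. The paper proves the stronger claim that every neighbor $w\notin\{u_e,v_e\}$ satisfies $y(w)\geq 0$; it does this by considering only the two edges $(v,u)$ and $(v,z)$ that are cyclically adjacent to $(v,u_e)$ and $(v,v_e)$ around $v$ (so that $(u,v,u_e)$ is a face), and arguing via a two-case comparison of the angles $\widehat{vuu_e}$ and $\widehat{vu_eu}$. You instead prove the weaker claim $y(u)\geq -\delta$ for \emph{all} neighbors directly, with an explicit coordinate computation of the distance from $u_e$ to the line through $(v,u)$. Your weaker claim is still enough for the second step, since bounding the total negative contribution by $\delta$ (rather than by $(\lambda_{vu_e}+\lambda_{vv_e})\delta$) yields the same inequality $\lambda_{vu^*}y(u^*)\leq\delta$. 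The paper's route is a bit cleaner conceptually (it exploits that the relevant triangle is a face of $G$), while yours avoids the cyclic-order reduction at the cost of a more computational verification that the perpendicular foot lands on the segment.
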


\begin{proof}
By Lemma~\ref{le:first-vertex-is-internal}, the altitude through $v$ of the triangle with vertices $v$, $u_e$, and $v_e$ intersects the edge $(u_e,v_e)$. It follows that: (i) $y(u_e)=y(v_e)=-\delta$, which implies that the statement of the lemma is satisfied for $u\in \{u_e,v_e\}$, given that $\lambda<1$; and (ii) $x(u_e)<x(v)<x(v_e)$, up to a relabeling of $u_e$ with~$v_e$.  


Next, we prove the following claim: Every neighbor $w$ of $v$ that is different from $u_e$ and $v_e$ lies on or above the horizontal line through $v$; that is, $y(w)\geq 0$, for every vertex $w\in \mathcal N(v) \setminus \{u_e,v_e\}$. By the planarity of $\Gamma$, it suffices to prove the claim for the neighbors $u$ and $z$ of $v$ such that $(v,u)$ and $(v,z)$ are the edges that follow $(v,u_e)$ in clockwise and counter-clockwise direction around $v$, respectively.

We prove that $y(u)\geq 0$, the proof that $y(z)\geq 0$ is analogous. Suppose, for a contradiction, $y(u)<0$. Let $\Delta_e$ be the triangle with vertices $u$, $v$, and $u_e$. We distinguish two cases. 

\begin{figure}[htb]\tabcolsep=4pt
	\centering
	\begin{tabular}{c c}
		\includegraphics[scale=0.85]{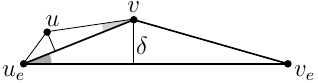}\hspace{3mm} &
		\includegraphics[scale=0.85]{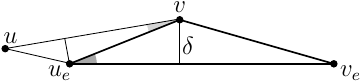}\\
		(a) \hspace{3mm} & (b) \hspace{3mm} 
	\end{tabular}
	\caption{Illustration for the proof that $y(u)\geq 0$, where $(v,u)$ is the edge that follows $(v,u_e)$ in clockwise order around $v$. (a) The case in which $\widehat{vuu_e}>\widehat{vu_eu}$. (b) The case in which $\widehat{vu_eu}>\widehat{vuu_e}$.}
	\label{fig:minimum-distance}
\end{figure}

\begin{itemize}
	\item Suppose first that the angle of $\Delta_e$ at $u$ is larger than or equal to the angle of $\Delta_e$ at $u_e$, that is, $\widehat{vuu_e}\geq \widehat{vu_eu}$ (see Figure~\ref{fig:minimum-distance}(a)). This implies that $d_\Gamma(v,u)\leq d_\Gamma(v,u_e)$ and that the altitude of $\Delta_e$ through $u$ intersects the edge $(v,u_e)$. It follows that the distance between $u$ and the edge $(v,u_e)$ is equal to $d_\Gamma(v,u) \cdot \sin (\widehat{uvu_e})<d_\Gamma(v,u_e) \cdot \sin(\widehat{vu_ev_e})=\delta$, a contradiction to the fact that $\delta$ is the minimum distance between two separated geometric objects in $\Gamma$.
	\item Suppose next that the angle of $\Delta_e$ at $u_e$ is larger than the angle of $\Delta_e$ at $u$, that is, $\widehat{vu_eu}>\widehat{vuu_e}$ (see Figure~\ref{fig:minimum-distance}(b)). This implies that the altitude of $\Delta_e$ through $u_e$ intersects the edge $(v,u)$. It follows that the distance between $u_e$ and the edge $(v,u)$ is equal to $d_\Gamma(v,u_e) \cdot \sin(\widehat{uvu_e})<d_\Gamma(v,u_e) \cdot \sin(\widehat{vu_ev_e})=\delta$, a contradiction to the fact that $\delta$ is the minimum distance between two separated geometric objects in~$\Gamma$.
\end{itemize}

This completes the proof of the claim. Now, by Equation~\ref{eq:y}, we have that $\sum_{u\in \mathcal N(v)} (\lambda_{vu}\cdot y(u))=y(v)=0$, hence $\sum_{u\in \mathcal N(v)\setminus \{u_e,v_e\}} (\lambda_{vu}\cdot y(u))=(\lambda_{vu_e}+\lambda_{vv_e})\cdot \delta$. By the above claim, for  every vertex $u\in \mathcal N(v) \setminus \{u_e,v_e\}$, we have that $\lambda_{vu}\cdot y(u)\leq (\lambda_{vu_e}+\lambda_{vv_e})\cdot \delta$, and hence $y(u)< \frac{\delta}{\lambda_{vu}}\leq \frac{\delta}{\lambda}$.
\end{proof}

We outline the proof of Theorem~\ref{th:lower-bound-gt-drawings}. By Lemma~\ref{le:first-vertex-height}, the vertex $v$ and its neighbors are contained in $\Gamma$ inside a ``narrow'' horizontal strip (see Figure~\ref{fig:v-and-neighbors}(a)). Using that as a starting point, the strategy is now to define a sequence of subgraphs of $G$, each one larger than the previous one, so that each subgraph is contained inside a narrow horizontal strip. The larger the considered graph, the larger the height of the horizontal strip, however this height only depends on $\lambda$, on $\delta$, and on the number of vertices of the considered graph. Eventually, this argument leads to a bound on the $y$-extent of $\Gamma$, and from that bound the resolution $r$ of the outer triangle $\Delta$ provides a bound on the largest distance between two separated geometric objects of $\Gamma$. The comparison of such a distance with the minimum distance $\delta$ between two separated geometric objects of $\Gamma$ allows us to derive the bound of Theorem~\ref{th:lower-bound-gt-drawings}.

We now formalize the above proof strategy. For $i \in \mathbb{N}^+$, we denote by $\mathcal H_i$ the horizontal strip  of height $h(i):=\delta\cdot\left(\frac{3}{\lambda}\right)^i$ bisected by the horizontal line through $v$.

We prove the existence of a sequence $G_1,\dots,G_k=G$ of graphs such that, for $i=1,\dots,k$, the graph~$G_i$ is a biconnected internally-triangulated plane graph that is a subgraph of $G$ satisfying Properties~(P1)--(P4) below. Let $\Gamma_i$ be the restriction of $\Gamma$ to $G_i$ and let $\mathcal C_i$ be the outer cycle of $G_i$.
\begin{enumerate}[(P1)]
\item $G_i$ has at least $i+3$ vertices;
\item $\mathcal C_i$ does not have any external chord;
\item $G_i$ is the subgraph of $G$ inside $\mathcal C_i$; and
\item $\Gamma_i$ is contained in the interior of the horizontal strip  $\mathcal H_i$.
\end{enumerate}

\begin{figure}[htb]\tabcolsep=4pt
\centering
\begin{tabular}{c c c}
	\includegraphics[scale=0.85]{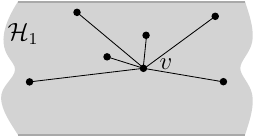}\hspace{3mm} &
	\includegraphics[scale=0.85]{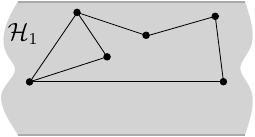}\hspace{3mm} &
	\includegraphics[scale=0.85]{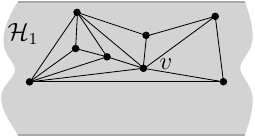}\\
	(a) \hspace{3mm} & (b) \hspace{3mm} & (c)
\end{tabular}
\caption{(a) The subgraph of $G$ composed of the edges incident to $v$. (b) The subgraph $G^v$ of $G$ induced by the neighbors of $v$. (c) The subgraph $G_1$ of $G$ inside $\mathcal C_1$.}
\label{fig:v-and-neighbors}
\end{figure}

We define $G_1$ as follows. Let $G^v$ be the subgraph of $G$ induced by the neighbors of $v$ (see Figure~\ref{fig:v-and-neighbors}(b)). Since $G$ is a maximal plane graph, we have that $G^v$ is biconnected; let $\mathcal C_1$ be the outer cycle of $G^v$. Then~$G_1$ is defined as the subgraph of $G$ inside $\mathcal C_1$ (see Figure~\ref{fig:v-and-neighbors}(c)). Property~(P3) is satisfied by construction. Further, Property~(P1) is satisfied, given that $G_1$ contains at least $4$ vertices, namely $v$ and its at least three neighbors. Property~(P2) is satisfied because $\mathcal C_1$ is the outer cycle of $G^v$ and $G^v$ is an induced subgraph of $G$. Property~(P4) is satisfied by Lemma~\ref{le:first-vertex-height}, as the distance from $v$ to the top or bottom side of $\mathcal H_1$ is $\frac{3}{2}\cdot\frac{\delta}{\lambda}$; observe that all the vertices of $G_1$ are contained in the convex hull of the neighbors of $v$. Finally, $G_1$ is biconnected and internally-triangulated, by Property~\ref{pr:subgraph-inside-cycle}. 

Assume that $G_i\neq G$; we will deal with the case in which $G_i=G$ later.	We describe how to construct~$G_{i+1}$ from $G_i$, so that Properties (P1)--(P4) are satisfied. In order to do that, we introduce the notion of $\updownarrow$-connected vertex and prove some lemmata about it. 

We say that a vertex $v$ of $G_i$ is \emph{$\updownarrow$-connected} if it satisfies at least one of the following properties:

\begin{itemize}
\item $v$ has a neighbor in $G$ above or on the top side of $\mathcal H_{i+1}$ and has a neighbor in $G$ below the bottom side of $\mathcal H_{i}$; 
\item $v$ has a neighbor in $G$ below or on the bottom side of $\mathcal H_{i+1}$ and has a neighbor in $G$ above the top side of $\mathcal H_{i}$. 
\end{itemize} 

The first lemma about $\updownarrow$-connected vertices states that if a vertex $u$ of $G_i$ that is an internal vertex of $G$ has a neighbor in $G$ above or on the top side of $\mathcal H_{i+1}$, then it also has a neighbor in $G$ below $\mathcal H_{i}$; roughly speaking, this is true because a ``very high'' neighbor of $u$ in $\Gamma$ pushes $u$ too high to be ``balanced'' (in terms of Equation~\ref{eq:y}) by neighbors of $u$ which lie in $\mathcal H_{i}$ or above.

\begin{lemma} \label{le:above-and-below-strip}
Let $u$ be a vertex of $G_i$ in $\mathcal I_G$. If $u$ has a neighbor in $G$ above or on the top side of $\mathcal H_{i+1}$, then it is $\updownarrow$-connected. Analogously, if $u$ has a neighbor in $G$ below or on the bottom side of $\mathcal H_{i+1}$, then it is $\updownarrow$-connected.
\end{lemma}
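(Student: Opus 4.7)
The plan is to prove the first statement by contradiction, then obtain the second by an immediate symmetry argument (reflecting the configuration across the horizontal line through $v$ does not change the setting, because Lemma~\ref{le:rotating} lets us rotate/translate freely and the coefficients in $\Lambda$ are unaffected). So I focus on the first claim: assuming $u\in \mathcal I_G$ has a neighbor $w^*$ with $y(w^*)\geq h(i+1)/2$, I want to show that $u$ must also have a neighbor $w$ with $y(w)<-h(i)/2$.

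Suppose, for a contradiction, that every neighbor $w$ of $u$ satisfies $y(w)\geq -h(i)/2$. Since $u\in \mathcal I_G$, Equation~\ref{eq:y} applies to $u$, so
\[
y(u)=\lambda_{uw^*}\cdot y(w^*)+\sum_{w\in \mathcal N(u)\setminus\{w^*\}}\lambda_{uw}\cdot y(w).
\]
Using $y(w^*)\geq h(i+1)/2$ on the first term and $y(w)\geq -h(i)/2$ together with $\sum_{w\neq w^*}\lambda_{uw}=1-\lambda_{uw^*}$ on the second, I obtain the lower bound
\[
y(u)\;\geq\;\lambda_{uw^*}\cdot \frac{h(i+1)}{2}\;-\;(1-\lambda_{uw^*})\cdot \frac{h(i)}{2}.
\]
This expression is linear and strictly increasing in $\lambda_{uw^*}$ (the coefficient of $\lambda_{uw^*}$ is $(h(i+1)+h(i))/2>0$), so applying $\lambda_{uw^*}\geq \lambda$ and then substituting $h(i+1)=(3/\lambda)\cdot h(i)$ yields $y(u)\geq (2+\lambda)\cdot h(i)/2>h(i)/2$. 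However, by Property~(P4) the vertex $u$ lies in the interior of $\mathcal H_i$, so $y(u)<h(i)/2$, a contradiction. Hence $u$ does have a neighbor strictly below the bottom side of $\mathcal H_i$, and combined with the hypothesis this makes $u$ $\updownarrow$-connected.

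The second half of the lemma follows by the same argument applied to the reflected F-drawing, or equivalently by replacing every $y$-coordinate with its negation in the computation above; the roles of ``above'' and ``below'' swap, while $\lambda_{uw^*}\geq \lambda$ and the recurrence $h(i+1)=(3/\lambda)h(i)$ are unchanged. The main thing to verify carefully is that the $3/\lambda$ growth factor in $h(\cdot)$ is tight enough to force the contradiction: it must survive the ``cancellation'' against the worst-case pull $-(1-\lambda_{uw^*})h(i)/2$ from the remaining neighbors, and it does so with room to spare because $\lambda\cdot(3/\lambda)=3>2$. A secondary subtlety is the use of $u\in \mathcal I_G$ (and not merely $u\in \mathcal I_{G_i}$), which is essential because Equation~\ref{eq:y} is a hypothesis about internal vertices of $G$; this is why the statement of the lemma restricts $u$ to $\mathcal I_G$.
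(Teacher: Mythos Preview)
Your proof is correct and follows essentially the same approach as the paper's: both argue by contradiction, using Equation~\ref{eq:y} together with $\lambda_{uw^*}\geq\lambda$ and $h(i+1)=(3/\lambda)h(i)$ to show that a neighbor above $\mathcal H_{i+1}$ without a counterbalancing neighbor below $\mathcal H_i$ would force $u$ outside $\mathcal H_i$, violating Property~(P4). The only cosmetic difference is that the paper rewrites the balance as $\lambda_{uw^*}(y(w^*)-y(u))=\sum_{z\neq w^*}\lambda_{uz}(y(u)-y(z))$ and bounds the two sides separately, whereas you bound $y(u)$ directly; the computations are equivalent.
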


\begin{proof}
We prove the first part of the statement. The proof of the second part is analogous. Suppose, for a contradiction, that there exists a vertex $u$ of $G_i$ in $\mathcal I_G$ that has a neighbor $w$ in $G$ above or on the top side of $\mathcal H_{i+1}$ and that has no neighbor below $\mathcal H_{i}$. 

By Equation~\ref{eq:y}, we have that $\sum_{z\in \mathcal N_G(u)} (\lambda_{uz}\cdot y(z))=y(u)$. Since $\sum_{z\in \mathcal N_G(u)} \lambda_{uz} = 1$, it follows that $\sum_{z\in \mathcal N_G(u)} (\lambda_{uz}\cdot (y(z)-y(u)))=0$, hence 
\begin{equation}\label{eqn:strips}
	\lambda_{uw}\cdot (y(w)-y(u))=\sum_{z\in \mathcal N_G(u)\setminus \{w\}} (\lambda_{uz}\cdot(y(u)-y(z))).	
\end{equation}
Since the distance between the top side of $\mathcal H_{i+1}$ and the top side of $\mathcal H_i$ is equal to $\frac{h(i+1)-h(i)}{2}$, we have that 
\begin{eqnarray*}
	\lambda_{uw}\cdot (y(w)-y(u))\geq \lambda \cdot \frac{h(i+1)-h(i)}{2}.	
\end{eqnarray*}
Further, since every neighbor of $u$ in $G$ lies above or on the bottom side of $\mathcal H_i$ and since $u$ lies in the interior of $\mathcal H_i$, we have that $y(u)-y(z)<h(i)$, for every neighbor $z$ of $u$ in $G$. Hence, we have 
\begin{eqnarray*}
	\sum_{z\in \mathcal N_G(u)\setminus \{w\}} (\lambda_{uz}\cdot(y(u)-y(z)))<h(i) \cdot \sum_{z\in \mathcal N_G(u)\setminus \{w\}} \lambda_{uz} <h(i).	
\end{eqnarray*}
Furthermore, we have 
\begin{eqnarray*}
	\lambda\cdot \frac{h(i+1)-h(i)}{2}- h(i)>\lambda\cdot\frac{h(i+1)}{2}-3\cdot \frac{h(i)}{2}=
	\frac{\delta}{2} \left(\lambda \cdot \left(\frac{3}{\lambda}\right)^{i+1}-3\cdot\left(\frac{3}{\lambda}\right)^i\right)=0.	
\end{eqnarray*}
This implies that 
\begin{eqnarray*}
	\lambda_{uw}\cdot (y(w)-y(u))>\sum_{z\in \mathcal N_G(u)\setminus \{w\}} (\lambda_{uz}\cdot(y(u)-y(z))),	
\end{eqnarray*}
which contradicts Equation~\ref{eqn:strips}.
\end{proof}

The next lemma states that few vertices of $G_i$ are $\updownarrow$-connected (and thus, by Lemma~\ref{le:above-and-below-strip}, few vertices of~$G_i$ have neighbors in $G$ outside $\mathcal H_{i+1}$). 


\begin{lemma} \label{le:neighbors-above-below} 
The following statements hold true:

\begin{enumerate}[(S1)]
	\item $G_i$ contains at most two vertices that are in $\mathcal I_G$ and that are $\updownarrow$-connected; and
	\item if $G_i$ contains a vertex in $\mathcal O_G$, then it contains at most one vertex that is in $\mathcal I_G$ and that is $\updownarrow$-connected.
\end{enumerate}
\end{lemma}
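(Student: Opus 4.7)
My plan is to first localize all $\updownarrow$-connected internal vertices of $G$ belonging to $G_i$ onto the cycle $\mathcal C_i$, and then to use a planarity argument on the annular region between $\mathcal C_i$ and the outer triangle $\partial\Delta$ to bound their number.

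For the localization, I would observe that any $\updownarrow$-connected internal vertex $u$ of $G$ in $G_i$ has, by definition, a neighbor strictly above $\mathcal H_i$ and a neighbor strictly below $\mathcal H_i$. By Property~(P4) those neighbors lie outside the drawing of $\Gamma_i$, hence outside $V(G_i)$, and by Property~(P3) only vertices of $\mathcal C_i$ can have neighbors in $V(G)\setminus V(G_i)$ (since $G_i$ is the subgraph of $G$ inside $\mathcal C_i$, any vertex of $G_i$ not on $\mathcal C_i$ has all its neighbors in $V(G_i)$). Hence $u\in\mathcal C_i$, and $u$ carries an ``up-edge'' $e^+(u)$ and a ``down-edge'' $e^-(u)$, both lying in the annular region $R$ bounded internally by $\mathcal C_i$ and externally by $\partial\Delta$. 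The far endpoint of $e^+(u)$ lies in the part $U_i$ of $\Delta$ strictly above $\mathcal H_i$ and that of $e^-(u)$ in the corresponding lower part $L_i$; in particular, the hypothesis that some $\updownarrow$-connected vertex exists forces both $U_i$ and $L_i$ to be nonempty, hence forces $\Delta$ to have at least one corner in each of $U_i$ and $L_i$.

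The main argument I would carry out by contradiction. Suppose~(S1) fails and there are three $\updownarrow$-connected internal vertices $u_1,u_2,u_3\in\mathcal C_i$. Inside the closed annulus $\overline R$ I would extend each V-shaped arc $e_j^+\cup e_j^-$ through $\overline{U_i}$ and $\overline{L_i}$ via paths in the subgraph of $G$ lying outside the interior of $\mathcal C_i$, until each arm reaches a corner of $\Delta$ on $\partial\Delta$. This yields three pairwise non-crossing simple arcs in $\overline R$, each joining $\partial\Delta\cap\overline{U_i}$ to $\partial\Delta\cap\overline{L_i}$ and touching $\mathcal C_i$ at the distinct point $u_j$. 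A careful analysis of how such arcs partition the annular region $\overline R$, together with the fact that $\Delta$ has only three corners providing possible endpoints, shows that at most two such arcs can coexist, a contradiction. For~(S2), an external vertex $a\in\mathcal O_G\cap V(G_i)$ lies on $\mathcal C_i$ by the localization step, and the two edges of $\Delta$ incident to $a$ together form a similar arc joining two corners of $\Delta$ through $a$; this arc occupies one of the two available slots, so at most one $\updownarrow$-connected internal vertex can remain, as claimed.

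I expect the main obstacle to lie in making the ``at most two arcs'' topological bound rigorous. The intuition is that in the annulus $\overline R$, each such arc must commit to one of the two ``channels'' around $\mathcal C_i$ and to a pair of corners of $\Delta$ as endpoints; ruling out a third requires a case analysis of how the three corners of $\Delta$ distribute among $\overline{U_i}$, $\mathcal H_i$, and $\overline{L_i}$, as well as an explicit construction of the extensions of the V-arcs that remain pairwise disjoint in the subgraph of $G$ drawn in $\overline R$.
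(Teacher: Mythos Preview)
Your localization of the $\updownarrow$-connected vertices onto $\mathcal C_i$ is correct, but the subsequent annulus argument is both harder than necessary and genuinely incomplete. Two concrete issues: first, when you ``extend each V-shaped arc through $\overline{U_i}$ and $\overline{L_i}$ via paths in the subgraph of $G$'' to reach corners of $\Delta$, there is no reason these extensions can be chosen pairwise disjoint---the up-edges from different $u_j$ may go to the same vertex, and the continuing paths to corners may share edges or vertices, destroying the non-crossing hypothesis you need. Second, even granting disjointness, the claim that at most two such arcs can coexist in the annulus is not a consequence of $\Delta$ having only three corners; two arcs sharing both endpoints need not obstruct a third that shares those same endpoints, and the case analysis you anticipate (how the corners distribute among $\overline{U_i}$, $\mathcal H_i$, $\overline{L_i}$) does not by itself yield the bound.

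The paper avoids the annulus entirely and instead works inside the strip $\mathcal H_i$, exploiting a property you never invoke: the biconnectivity of $G_i$. For each $\updownarrow$-connected vertex $u$, clip the up-edge and the down-edge to $\mathcal H_i$, obtaining a two-segment polyline $\ell(u)$ that runs from the top side of $\mathcal H_i$ through $u$ to the bottom side. By planarity of $\Gamma$ these polylines are pairwise non-crossing, so among three of them one, say $\ell(u_2)$, lies between the other two inside $\mathcal H_i$. Since $\Gamma_i$ is contained in the interior of $\mathcal H_i$ (Property~(P4)), any path of $G_i$ from $u_1$ to $u_3$ would have to cross $\ell(u_2)$, and by planarity it can do so only at $u_2$; hence $u_2$ is a cut vertex of $G_i$, contradicting biconnectivity. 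For (S2), an external vertex $a\in\mathcal O_G\cap V(G_i)$ admits an analogous polyline $\ell(a)$ built from the two outer edges of $\Delta$ incident to $a$ (their other endpoints lie outside $\mathcal H_i$ once any $\updownarrow$-connected vertex exists), and the same separation argument with $\ell(a)$ in place of $\ell(u_1)$ gives the bound of one. No extension to corners, no annulus topology, no case analysis on corner placement is needed.
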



\begin{proof}
For each $\updownarrow$-connected vertex $u$ of $G_i$, let $\ell(u)$ be the polygonal line defined as follows. Let $w$ be a neighbor of $u$ that lies above the top side of $\mathcal H_i$ and let $z$ be a neighbor of $u$ that lies below the bottom side of $\mathcal H_i$. Let $p_w$ and $p_z$ be the intersection points of the edges $(u,w)$ and $(u,z)$ with the top side of $\mathcal H_i$ and the bottom side of $\mathcal H_i$, respectively. Then $\ell(u)$ is composed of the line segments $\overline{up_w}$ and $\overline{up_z}$. For the remainder of the proof, refer to Figure~\ref{fig:few-updown-connected}. 

In order to prove (S1), suppose, for a contradiction, that $G_i$ contains (at least) three distinct $\updownarrow$-connected vertices $u_1$, $u_2$, and $u_3$ in $\mathcal I_G$; by the planarity of $\Gamma$, we have that $\ell(u_1)$, $\ell(u_2)$, and $\ell(u_3)$ do not cross each other; assume, w.l.o.g.\ up to a switch of the labels of $u_1$, $u_2$, and $u_3$, that $\ell(u_2)$ is in-between $\ell(u_1)$ and $\ell(u_3)$ in $\mathcal H_i$. Consider any path $P$ of $G_i$ connecting $u_1$ and $u_3$. By the planarity of $\Gamma$ and since $G_i$ is contained in the interior of $\mathcal H_i$, we have that $P$ does not cross $\ell(u_2)$, except at $u_2$. However, this implies that the removal of $u_2$ from $G_i$ separates $u_1$ from $u_3$, a contradiction to the fact that $G_i$ is biconnected. This proves (S1).

\begin{figure}[htb]\tabcolsep=4pt
	\centering
	\includegraphics[scale=0.7]{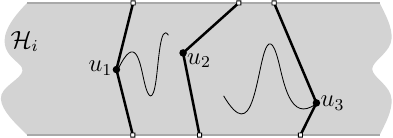}
	\caption{Illustration for the proof of Lemma~\ref{le:neighbors-above-below}. The fat lines are $\ell(u_1)$, $\ell(u_2)$, and $\ell(u_3)$.}
	\label{fig:few-updown-connected}
\end{figure}


In order to prove (S2), suppose, for a contradiction, that $G_i$ contains a vertex $u_1$ in $\mathcal O_G$ and contains (at least) two distinct $\updownarrow$-connected vertices $u_2$ and $u_3$ in $\mathcal I_G$. Since $u_2$ has neighbors both above the top side of $\mathcal H_i$ and below the bottom side of $\mathcal H_i$, the vertex $w$ of $G$ with the largest $y$-coordinate in $\Gamma$ lies above the top side of $\mathcal H_i$ and the vertex $z$ of $G$ with the smallest $y$-coordinate in $\Gamma$ lies below the bottom side of $\mathcal H_i$. Note that $(w,u_1,z)$ is a path incident to the outer face of $G$. As before, $\ell(u_1)$ is then defined as the portion of the polygonal line $(w,u_1,z)$ inside $\mathcal H_i$. By the planarity of $\Gamma$, we have that $\ell(u_1)$, $\ell(u_2)$, and $\ell(u_3)$ do not cross each other; further, since $u_1$ is incident to the outer face of $\Gamma$, we can assume, w.l.o.g.\ up to a switch of the labels of $u_2$ and $u_3$, that $\ell(u_2)$ is in-between $\ell(u_1)$ and $\ell(u_3)$ in $\mathcal H_i$. Consider any path $P$ of $G_i$ connecting $u_1$ and $u_3$. By the planarity of $\Gamma$ and since $G_i$ is contained in the interior of $\mathcal H_i$, we have that $P$ does not cross $\ell(u_2)$, except at $u_2$. However, this implies that the removal of $u_2$ from $G_i$ separates $u_1$ from $u_3$, a contradiction to the fact that $G_i$ is biconnected. This proves (S2) and~hence~the~lemma.
\end{proof}

Finally, we prove the following main lemma, stating that there exists a vertex that is an external vertex of $G_i$, that is an internal vertex of $G$, and that is not  $\updownarrow$-connected.

\begin{lemma} \label{le:at-least-one-not-up-down-connected}
There exists a vertex $u$ in $\mathcal O_{G_i}\cap \mathcal I_G$ that is not $\updownarrow$-connected.
\end{lemma}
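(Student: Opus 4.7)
My plan is to analyze the outer cycle $\mathcal C_i$ of $G_i$ and combine Lemma~\ref{le:neighbors-above-below} with properties (P1)--(P4). I set $k=|V(\mathcal C_i)|$ and let $a$ be the number of vertices of $\mathcal C_i$ that lie in $\mathcal O_G$. Since $G_i$ is biconnected and has at least $i+3\geq 4$ vertices by (P1), I get $k\geq 3$. Moreover, because the outer face of $G_i$ (in the embedding inherited from $G$) contains the outer face of $G$, every vertex of $G_i$ in $\mathcal O_G$ is incident to the outer face of $G_i$, so $V(G_i)\cap \mathcal O_G\subseteq V(\mathcal C_i)$, giving $|V(\mathcal C_i)\cap \mathcal I_G|=k-a$. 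As a first step, I rule out $a=3$: if $a=3$ and $k=3$, then $\mathcal C_i$ is the outer triangle of $G$ and (P3) forces $G_i=G$, contradicting the current assumption that $G_i\neq G$; and if $a=3$ and $k\geq 4$, then at least one edge of the outer triangle of $G$ is missing from $\mathcal C_i$ and is drawn on the outer face of $G$, which lies in the exterior of $\mathcal C_i$, so this edge is an external chord of $\mathcal C_i$, contradicting (P2). Hence $a\leq 2$.

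Next I split on $a$. When $a=0$, all $k\geq 3$ vertices of $\mathcal C_i$ lie in $\mathcal I_G$, and statement (S1) of Lemma~\ref{le:neighbors-above-below} bounds the number of $\updownarrow$-connected vertices in $G_i\cap \mathcal I_G$ by two, so at least one vertex of $\mathcal C_i\cap \mathcal I_G$ is not $\updownarrow$-connected. When $a=1$, or when $a=2$ with $k\geq 4$, statement (S2) applies (since $\mathcal C_i$ contains a vertex of $\mathcal O_G$) and gives at most one $\updownarrow$-connected vertex in $G_i\cap \mathcal I_G$, while $|V(\mathcal C_i)\cap \mathcal I_G|=k-a\geq 2$, so again at least one vertex qualifies.

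The main obstacle is the remaining case $a=2$, $k=3$, where $\mathcal C_i=(e_1,e_2,v')$ with $e_1,e_2\in \mathcal O_G$ and $v'$ the sole candidate in $\mathcal I_G$; here Lemma~\ref{le:neighbors-above-below} alone is not sharp enough, and I must show \emph{directly} that $v'$ is not $\updownarrow$-connected. I argue by contradiction: if $v'$ were $\updownarrow$-connected, it would have a neighbor strictly above the top side of $\mathcal H_i$ and a neighbor strictly below the bottom side of $\mathcal H_i$. In any planar straight-line drawing, the vertices of maximum and minimum $y$-coordinate lie on the convex hull and hence on the outer face of $G$, so two \emph{distinct} vertices $w,z\in \mathcal O_G=\{e_1,e_2,e_3\}$ would lie strictly outside $\mathcal H_i$ (with $w$ above and $z$ below). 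But $e_1,e_2\in V(G_i)$ lie inside $\mathcal H_i$ by (P4), leaving only $e_3$ as a candidate to lie outside $\mathcal H_i$; this contradicts $w\neq z$, and therefore $v'$ is the required vertex.
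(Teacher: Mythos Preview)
Your proof is correct and follows essentially the same approach as the paper's. The only differences are organizational: you split the case $a=2$ according to whether $k=3$ or $k\geq 4$, handling the latter via (S2) and only the former by a direct geometric argument, whereas the paper treats all of $a=2$ uniformly with the direct argument (which does not actually need $k=3$). Your geometric argument for $a=2,\ k=3$---using that the extreme $y$-coordinates are attained at outer vertices---is equivalent to the paper's observation that all of $G$ lies inside the outer triangle, two of whose vertices are already in $\mathcal H_i$; both yield that the single internal vertex on $\mathcal C_i$ cannot have neighbors on both sides of $\mathcal H_i$.
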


\begin{proof}
Since $G_i$ contains at least four vertices (by Property~(P1) of $G_i$) and is biconnected, it follows that $|\mathcal O_{G_i}|\geq 3$. We distinguish three cases:
\begin{itemize}
	\item If $|\mathcal O_{G_i}\cap \mathcal O_{G}|=0$, that is, all the external vertices of $G_i$  are internal to $G$, then $|\mathcal O_{G_i}\cap \mathcal I_G|\geq 3$. By statement (S1) of Lemma~\ref{le:neighbors-above-below}, at least one of the vertices in $\mathcal O_{G_i}\cap \mathcal I_G$ is not $\updownarrow$-connected.   
	\item If $|\mathcal O_{G_i}\cap \mathcal O_{G}|=1$, that is, all but one of the external vertices of $G_i$ are internal to $G$, then $|\mathcal O_{G_i}\cap \mathcal I_G|\geq 2$. By statement (S2) of Lemma~\ref{le:neighbors-above-below}, at least one of the vertices in $\mathcal O_{G_i}\cap \mathcal I_G$ is not $\updownarrow$-connected.   
	\item If $|\mathcal O_{G_i}\cap \mathcal O_{G}|=2$, then $G_i$ contains two vertices $u$ and $w$ incident to the outer face of $G$ and $|\mathcal O_{G_i}\cap \mathcal I_G|\geq 1$; let $t$ be any vertex in $\mathcal O_{G_i}\cap \mathcal I_G$. Since the outer face of $G$ is delimited by a $3$-cycle $(u,w,z)$, we have that all the vertices of $G$ are contained in $\Gamma$ in the triangle with vertices $u$, $w$, and $z$.  By Property~(P4) of $G_i$, the vertices $u$ and $w$ are contained in the interior of $\mathcal H_i$, hence if $z$ lies above the bottom side of $\mathcal H_i$, then $t$ does not have any neighbor below the bottom line of $\mathcal H_i$, while if $z$ lies below the top side of $\mathcal H_i$, then $t$ does not have any neighbor above the top side of $\mathcal H_i$. In both cases, we have that $t$ is a vertex in $\mathcal O_{G_i}\cap \mathcal I_G$ that is not $\updownarrow$-connected.
\end{itemize}
The proof is concluded by observing that $|\mathcal O_{G_i}\cap \mathcal O_{G}|\leq 2$. Indeed, suppose, for a contradiction, that $|\mathcal O_{G_i}\cap \mathcal O_{G}|=3$. Then $G_i$ contains the three vertices incident to the outer face of $G$. Hence, by Property~(P2) of $G_i$, we have that $G_i$ contains the outer cycle of $G$, and by Property~(P3) of $G_i$, we have $G_i=G$, a contradiction. 
\end{proof}

\begin{figure}[htb]
\centering
\includegraphics[scale=0.85]{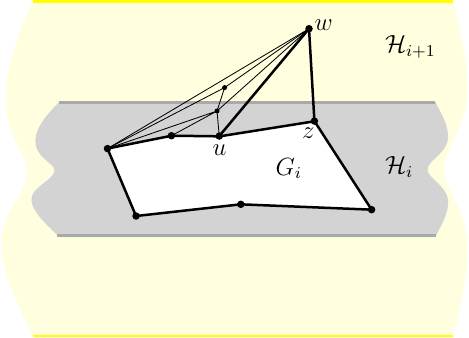}
\caption{Construction of $G_{i+1}$ from $G_i$.}
\label{fig:inductive-step}
\end{figure}

We are now ready to describe how to construct $G_{i+1}$ from $G_i$; refer to Figure~\ref{fig:inductive-step}. By Lemma~\ref{le:at-least-one-not-up-down-connected}, there exists a vertex $u$ in $\mathcal O_{G_i}\cap \mathcal I_G$ that is not $\updownarrow$-connected. Let $(u,z)$ be any of the two edges incident to $u$ in $\mathcal C_i$. The edge $(u,z)$ is incident to two faces of $G$, one inside and one outside $\mathcal C_i$; let $(u,z,w)$ be the cycle delimiting the face of $G$ incident to $(u,z)$ outside $\mathcal C_i$. We have the following.

\begin{lemma} \label{le:one-neighbor-inside}
The vertex $w$ does not belong to $G_i$. 
\end{lemma}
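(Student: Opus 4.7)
The plan is to argue by contradiction: suppose $w\in V(G_i)$. By Property~(P3), $G_i$ is the subgraph of $G$ inside $\mathcal C_i$, so in $\Gamma$ the vertex $w$ lies either strictly inside $\mathcal C_i$ or on $\mathcal C_i$. I would treat these two cases in turn, deriving a contradiction from the defining properties of $G_i$ in each.

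First I would dispatch the case that $w$ lies strictly inside $\mathcal C_i$. In this situation, $u$ and $z$ are on $\mathcal C_i$ and $w$ is in the open interior bounded by $\mathcal C_i$, so the triangle $(u,z,w)$ that bounds a face of $G$ sits entirely in the closed interior of $\mathcal C_i$. The face it delimits is therefore inside $\mathcal C_i$, contradicting the choice of $w$ as the third vertex of the face of $G$ incident to $(u,z)$ on the \emph{outside} of $\mathcal C_i$.

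Next I would handle the case that $w$ lies on $\mathcal C_i$, examining the edges $(u,w)$ and $(z,w)$, which exist in $G$ because $(u,z,w)$ bounds a face. If at least one of them, say $(z,w)$, is not in $\mathcal C_i$, then since $(z,w)$ bounds the face $(u,z,w)$ that lies outside $\mathcal C_i$, the edge $(z,w)$ itself lies outside $\mathcal C_i$; joining two vertices of $\mathcal C_i$ without belonging to $\mathcal C_i$, it is an external chord, contradicting Property~(P2). If instead both $(u,w)$ and $(z,w)$ are edges of $\mathcal C_i$, then, since each of $u,z,w$ has exactly two neighbors along the simple cycle $\mathcal C_i$, we must have $\mathcal C_i=(u,z,w)$. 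The face of $G$ bounded by this triangle on its outside is then the only face of $G$ outside $\mathcal C_i$, hence it is the outer face of $G$; this places $u$ on the outer cycle of $G$, contradicting $u\in\mathcal I_G$ as guaranteed by Lemma~\ref{le:at-least-one-not-up-down-connected}.

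The main obstacle I anticipate is not technical difficulty but bookkeeping: being careful that the planarity argument correctly distinguishes the interior and exterior of $\mathcal C_i$ in $\Gamma$, and that the sub-case $\mathcal C_i=(u,z,w)$ is handled. Once the two cases are set up, each resolves by invoking one of the recorded structural properties of $G_i$, namely (P3) for the interior-placement case, (P2) for the external-chord case, and the fact $u\in\mathcal I_G$ for the triangular-$\mathcal C_i$ case.
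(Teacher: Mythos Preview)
Your proposal is correct and follows essentially the same route as the paper. The paper also argues by contradiction, first noting that $w$ must lie on $\mathcal C_i$ (your first case, which you make explicit), then splitting on whether the edges $(u,w)$ and $(z,w)$ lie on $\mathcal C_i$; the only cosmetic difference is that in the sub-case $\mathcal C_i=(u,z,w)$ the paper derives $G_i=G$ (contradicting the standing assumption $G_i\neq G$), whereas you derive $u\in\mathcal O_G$ (contradicting $u\in\mathcal I_G$)---both contradictions are valid.
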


\begin{proof}
Suppose, for a contradiction, that $w$ belongs to $G_i$; this implies that $w$ belongs to $\mathcal C_i$, given that the face of $G$ delimited by $(u,z,w)$ lies outside $\mathcal C_i$. We distinguish two cases.

\begin{itemize}
	\item If any of the edges $(u,w)$ and $(w,z)$ does not belong to $\mathcal C_i$, then such an edge is an external chord of $\mathcal C_i$. However, this contradicts Property~(P2) of $G_i$. 
	\item If both the edges $(u,w)$ and $(w,z)$ belong to $\mathcal C_i$, then $\mathcal C_i$ is the cycle $(u,z,w)$, given that $(u,z)$ is an edge of $\mathcal C_i$ as well. However, since $(u,z,w)$ bounds a face $f$ of $G$ outside $\mathcal C_i$, we have that $f$ is the outer face of $G$ and thus $G_i=G$, a contradiction.
\end{itemize}

This concludes the proof. 
\end{proof}

By Lemma~\ref{le:one-neighbor-inside}, we have that $w$ does not belong to $G_i$. Let $G^w$ be the subgraph of $G$ induced by $\{w\}\cup V(\mathcal C_i)$. Note that $G^w$ is biconnected, as it contains $\mathcal C_i$ and the edges $(u,w)$ and $(w,z)$. Let $\mathcal C_{i+1}$ be the outer cycle of $G^w$ and let $G_{i+1}$ be the subgraph of $G$ inside $\mathcal C_{i+1}$. We prove that $G_{i+1}$ satisfies the required properties.	

\begin{itemize}
\item Property~(P3) is satisfied by construction.
\item Property~(P1) is satisfied, given that $G_i$ contains at least $i+3$ vertices (by Property~(P1) of $G_i$), and given that $G_{i+1}$ contains at least one more vertex than $G_i$, namely $w$.
\item  Property~(P2) is satisfied because $\mathcal C_{i+1}$ is the outer cycle of $G^w$ and $G^w$ is an induced subgraph of $G$.
\item  Property~(P4) is also satisfied by $\Gamma_{i+1}$. Namely, all the vertices of $G_i$ are contained inside $\mathcal H_i\subset \mathcal H_{i+1}$, by Property~(P4) of $G_i$. Further, $w$ is contained in the interior of $\mathcal H_{i+1}$, given that $u$ is not $\updownarrow$-connected and by Lemma~\ref{le:above-and-below-strip}. Moreover, all the vertices of $G_{i+1}$ are contained in the convex hull of $\{w\}\cup V(\mathcal C_i)$.
\end{itemize} 
Finally, $G_{i+1}$ is biconnected and internally-triangulated, by Property~\ref{pr:subgraph-inside-cycle}. 		

By Property~(P1), for some integer value $k\leq n-3$, we have that $G_k$ contains $n$ vertices, that is, $G_k=G$. By Property~(P4) of $G_k$, the $y$-extent of $\Gamma$ is smaller than $\delta \cdot\left(\frac{3}{\lambda}\right)^n$. Hence, the $y$-extent $Y$ of the triangle $\Delta$ delimiting the outer face of $\Gamma$ is smaller than $\delta  \cdot\left(\frac{3}{\lambda}\right)^n$. By Lemma~\ref{le:resolution-xy-triangle}, the $x$-extent $X$ of $\Delta$ is smaller than $\frac{\delta}{r}  \cdot\left(\frac{3}{\lambda}\right)^n$, where $r$ is the resolution of $\Delta$. The maximum distance between two separated geometric objects in $\Gamma$, which coincides with the longest side of $\Delta$, is smaller than $X+Y<\delta  \cdot\left(\frac{3}{\lambda}\right)^n\cdot\left(1+\frac{1}{r}\right)<\frac{2\delta}{r}\cdot\left(\frac{3}{\lambda}\right)^n$. 

The resolution of $\Gamma$ is larger than or equal to the ratio between the  minimum distance between two separated geometric objects in $\Gamma$, which is equal to $\delta$, and the upper bound on the  maximum distance between two separated geometric objects in $\Gamma$ obtained above. This ratio is $\frac{r}{2} \cdot \left(\frac{\lambda}{3}\right)^n$, which is indeed the bound in Theorem~\ref{th:lower-bound-gt-drawings}. By Property~\ref{pr:minimum-lambda}, we have that $\frac{r}{2} \cdot \left(\frac{\lambda}{3}\right)^n \in r\cdot \lambda^{O(n)}$.

\subsection{An Algebraic Approach to Lower Bound the Resolution of F-Drawings}\label{se:algebra}

We now discuss an approach to obtain a lower bound on the resolution of F-drawings that is different from the one presented in order to prove Theorem~\ref{th:lower-bound-gt-drawings}. This alternative approach is algebraic and is similar to the one proposed in~\cite{mrs-sge-11,rg-rsp-96} in order to obtain a lower bound on the resolution of T-drawings. Although we did not manage to bring the approach to completion, we deem it interesting and hence discuss it here.

Let $G$ be a $3$-connected\footnote{Note that the approach we describe here does not require the graph $G$ to be maximal.} plane graph with $n$ vertices. Let $v_1,\dots,v_N$ be the internal vertices of $G$ and let $v_{N+1},\dots,v_n$ be the external vertices of $G$. Also, let $\mathcal P$ be a convex polygon representing the outer cycle~$\mathcal C$ of~$G$, let $\Lambda$ be a coefficient matrix for $G$ with elements $\lambda_{v_iv_j}$, where $i=1,\dots,N$ and $j=1,\dots,n$, and let $\lambda$ be the smallest positive coefficient in $\Lambda$. Finally, let ${\bf x}$ and ${\bf y}$ be the vectors (with $N$ elements each) representing the coordinates of the internal vertices of $G$ in the F-drawing $\Gamma=(\Lambda,\mathcal P)$ of $G$. By Equation~\ref{eq:x}, the vector ${\bf x}$ is the (unique) solution to the following system of equations:

\begin{equation} \label{eq:matrix-form-algebraic}
\underbrace{
\begin{pmatrix}
		1 & -\lambda_{v_1v_2} & -\lambda_{v_1v_3} & \cdots & -\lambda_{v_1v_N}\\
		-\lambda_{v_2v_1} & 1 & -\lambda_{v_2v_3} & \cdots & -\lambda_{v_2v_N}\\
		-\lambda_{v_3v_1} & -\lambda_{v_3v_2} & 1 & \cdots & -\lambda_{v_3v_N}\\
		\cdots & \cdots & \cdots & \cdots & \cdots \\
		-\lambda_{v_Nv_1} & -\lambda_{v_Nv_2} & -\lambda_{v_Nv_3} & \cdots & 1
\end{pmatrix}}_{{\bf A}}
\cdot
\underbrace{\begin{pmatrix}
		x(v_1)\\
		x(v_2)\\
		x(v_3)\\
		\cdots\\
		x(v_N)
\end{pmatrix}}_{{\bf x}}
=
\underbrace{\begin{pmatrix}
		b_x(v_1)\\
		b_x(v_2)\\
		b_x(v_3)\\
		\cdots\\
		b_x(v_N)
\end{pmatrix}}_{{\bf b_x}}
\end{equation}

where 

\begin{equation}\label{eq:matrix-constants-algebraic}
{\bf b_x}=\begin{pmatrix}
	b_x(v_1)\\
	b_x(v_2)\\
	b_x(v_3)\\
	\cdots\\
	b_x(v_N)
\end{pmatrix}
= 
\begin{pmatrix}
	\lambda_{v_1v_{N+1}}\cdot x(v_{N+1})+\lambda_{v_1v_{N+2}}\cdot x(v_{N+2})+\cdots+\lambda_{v_1v_n}\cdot x(v_n)\\
	\lambda_{v_2v_{N+1}}\cdot x(v_{N+1})+\lambda_{v_2v_{N+2}}\cdot x(v_{N+2})+\cdots+\lambda_{v_2v_n}\cdot x(v_n)\\
	\lambda_{v_3v_{N+1}}\cdot x(v_{N+1})+\lambda_{v_3v_{N+2}}\cdot x(v_{N+2})+\cdots+\lambda_{v_3v_n}\cdot x(v_n)\\
	\cdots\\
	\lambda_{v_Nv_{N+1}}\cdot x(v_{N+1})+\lambda_{v_Nv_{N+2}}\cdot x(v_{N+2})+\cdots+\lambda_{v_Nv_n}\cdot x(v_n)
\end{pmatrix}.
\end{equation}

Analogously, by Equation~\ref{eq:y}, the vector ${\bf y}$ is the (unique) solution to the system of equations ${\bf A}\cdot {\bf y} = {\bf b_y}$, where ${\bf b_y}$ is a vector whose definition is analogous to Equation~\ref{eq:matrix-constants-algebraic}, however using $y$-coordinates rather than $x$-coordinates.

For a square matrix ${\bf M}$, denote by $\det({\bf M})$ its determinant. The above system of equations can be solved by Cramer's rule, obtaining, for $k=1,\dots,N$, that $x(v_k)$ is equal to $\det({\bf A^x_k})/\det({\bf A})$, where ${\bf A^x_k}$ denotes the matrix obtained by substituting the $k$-th column of ${\bf A}$ with ${\bf b_x}$. Note that both $\det({\bf A^x_k})$ and $\det({\bf A})$ are polynomial functions of degree at most $N$ whose variables are the elements $\lambda_{v_iv_j}$ of $\Lambda$. Similarly, for $k=1,\dots,N$, we have that $y(v_k)$ is equal to $\det({\bf A^y_k})/\det({\bf A})$, where ${\bf A^y_k}$ denotes the matrix obtained by substituting the $k$-th column of ${\bf A}$ with ${\bf b_y}$.

Let us now try to lower bound the distance $\delta_{ij}$ between two adjacent internal vertices $v_i$ and $v_j$ of $G$ in $\Gamma$. Note that the actual resolution of $\Gamma$ might be smaller than the distance between any two internal vertices of~$G$, as it is given by the distance between a vertex and an edge of $G$. However, trying to prove a lower bound for $\delta_{ij}$ is already sufficient to showcase the difficulty of finalizing this algebraic approach. The distance $\delta_{ij}$ is equal to $\sqrt{(x(v_i)-x(v_j))^2+(y(v_i)-y(v_j))^2}$. Using the above formulas we get:
\begin{eqnarray}\label{eq:distance}
\delta_{ij}=\frac{\sqrt{\left(\det({\bf A^x_i})-\det({\bf A^x_j})\right)^2+\left(\det({\bf A^y_i})-\det({\bf A^y_j})\right)^2}}{\det({\bf A})}=\frac{\sqrt{w_{ij}(\Lambda)}}{\det({\bf A})},
\end{eqnarray}

\noindent where $w_{ij}(\Lambda):=\left(\det({\bf A^x_i})-\det({\bf A^x_j})\right)^2+\left(\det({\bf A^y_i})-\det({\bf A^y_j})\right)^2$ is a polynomial of degree at most~$2N$ whose variables are the elements $\lambda_{v_iv_j}$ of~$\Lambda$. In order to prove a lower bound for $\delta_{ij}$, it suffices to prove a lower bound for $w_{ij}(\Lambda)$ and an upper bound for $\det({\bf A})$. The latter task is easy to accomplish, as each row of ${\bf A}$ is a vector with Euclidean length at most $\sqrt 2$ (since one term is $1$ and the absolute values of all other terms sum up to $1$), hence by Hadamard's inequality $\det({\bf A})$ is in $2^{O(N)}$. On the other hand, we do not know how to prove a lower bound for~$w_{ij}(\Lambda)$. Note that $w_{ij}(\Lambda)$ is the sum of $d(N)$ products, for some function~$d$, between $O(N)$ variables. However, even considering that each of such variables is larger than or equal to some fixed value $\lambda>0$, we do not know how to bound the sum away from $0$. Indeed, a polynomial might assume values arbitrarily close to zero even if the absolute value of each of its monomials is bounded away from zero by some (even arbitrarily large) value. 


\newcommand{\grado}{\textrm{deg}}

A similar approach was successfully employed in~\cite{mrs-sge-11,rg-rsp-96} in order to bound the resolution of a T-drawing of $G$ in which one can pick the shape of the polygon representing $\mathcal C$. We believe there are two main reasons for the different outcome of an algebraic approach in the two cases. The first reason is that in our setting, differently from~\cite{mrs-sge-11,rg-rsp-96}, we are not allowed to choose the polygon representing $\mathcal C$. Such a polygon is arbitrary and its resolution $r$ is a parameter for our problem. There is concrete hope to overcome this problem, at least for maximal plane graphs. Indeed, for a maximal plane graph $G$, the input triangle $\Delta$ representing the $3$-cycle $\mathcal C$ bounding the outer face can be obtained as an affine transformation of any (suitably chosen) triangle $\Delta'$. And even more, the F-drawing $(\Lambda,\Delta)$ can be obtained by the same affine transformation of the F-drawing $(\Lambda,\Delta')$, see Lemma~\ref{le:rotating}. Thus, one could try to derive the resolution of the drawing in which $\mathcal C$ is represented by $\Delta$ from the resolution of the drawing in which~$\mathcal C$ is represented by $\Delta'$, taking into account how the affine transformation modifies the latter resolution. The second, much more important, reason is that a T-drawing is an F-drawing in which each coefficient~$\lambda_{v_iv_j}$ is a fraction with numerator $1$ and denominator equal to the degree of $v_i$ in $G$, which we denote by $\grado(v_i)$. This allows one to suitably choose the coordinates of the polygon representing $\mathcal C$ so that the vertex coordinates are actually integers (and then distances are naturally bounded away from zero as a function of the maximum absolute value of a vertex coordinate). Indeed, consider the $x$-coordinates of a T-drawing of $G$, as the argument for the $y$-coordinates is the same. We can multiply the left and right sides of Equation~\ref{eq:matrix-form-algebraic} by~the~matrix
\begin{eqnarray*} 
\begin{pmatrix}
	\grado(v_1) & 0 & 0 & \cdots & 0\\
	0 & \grado(v_2) & 0 & \cdots & 0\\
	0 & 0 & \grado(v_3) & \cdots & 0\\
	\cdots & \cdots & \cdots & \cdots & \cdots \\
	0 & 0 & 0 & \cdots & \grado(v_N)
\end{pmatrix}.
\end{eqnarray*}
\noindent to transform Equation~\ref{eq:matrix-form-algebraic} into the following

\begin{equation} \label{eq:matrix-form-algebraic-revised}
\underbrace{
	\begin{pmatrix}
		\grado(v_1) & -a_{12} & -a_{13} & \cdots & -a_{1N}\\
		-a_{21} & \grado(v_2) & -a_{23} & \cdots & -a_{2N}\\
		-a_{31} & -a_{32} & \grado(v_3) & \cdots & -a_{3N}\\
		\cdots & \cdots & \cdots & \cdots & \cdots \\
		-a_{N1} & -a_{N2} & -a_{N3} & \cdots & \grado(v_N)\\
\end{pmatrix}}_{{\bf A'}}
\cdot
\underbrace{\begin{pmatrix}
		x(v_1)\\
		x(v_2)\\
		x(v_3)\\
		\cdots\\
		x(v_N)
\end{pmatrix}}_{{\bf x}}
=
\underbrace{\begin{pmatrix}
		\sum_{i=N+1}^n a_{1i}\cdot x(v_i)\\
		\sum_{i=N+1}^n a_{2i}\cdot x(v_i)\\
		\sum_{i=N+1}^n a_{3i}\cdot x(v_i)\\
		\cdots\\
		\sum_{i=N+1}^n a_{ni}\cdot x(v_i)\\
\end{pmatrix}}_{{\bf b'_x}},
\end{equation}

\noindent where $a_{ij}=1$ if $v_i$ and $v_j$ are neighbors in $G$, and $a_{ij}=0$ otherwise. Notice that all the elements of~${\bf A'}$ are integers. By using Cramer's rule, we get that $x(v_i)$ is equal to $\det({\bf A'_i})/\det({\bf A'})$, where ${\bf A'_i}$ is the matrix obtained by substituting the $i$-th column of ${\bf A'}$ with ${\bf b'_x}$. Thus, it suffices to choose the coordinates $x(v_{N+1}),x(v_{N+2}),\dots,x(v_{n})$ as integers multiple of $\det({\bf A'})$ in order to ensure that $x(v_i)$ is integer\footnote{The analyses of~\cite{mrs-sge-11,rg-rsp-96} have to use some extra care in the choice of the coordinates of the vertices of the polygon representing $\mathcal C$, as a suitable equilibrium of forces needs to be ensured on the vertices of $\mathcal C$, in order to ensure that the 2D drawing can then be lifted to a 3D convex polytope.}. The value of the determinant of ${\bf A'}$ can finally be bounded by a suitable exponential function of $n$. The reciprocal of such a function then provides an asymptotic lower bound on the resolution of the drawing.

\section{Upper Bound on the Resolution of F-Drawings}\label{se:upper-bound-gt-drawings}

In this section, we prove Theorem~\ref{th:upper-bound-gt-drawings}, which we restate here for the reader's convenience.

\medskip	
\rephrase{Theorem}{\ref{th:upper-bound-gt-drawings}}{%
There is a class of maximal plane graphs $\{G_n:n=5,6,\dots\}$, where $G_n$ has $n$ vertices, with the following property. For any $0<\lambda\leq \frac{1}{4}$ and $0<r\leq \frac{\sqrt 3}{2}$, there exist a triangle $\Delta$ with resolution~$r$ and a coefficient matrix $\Lambda$ for $G_n$ whose smallest positive coefficient is $\lambda$ such that the F-drawing $(\Lambda,\Delta)$ of $G_n$ has resolution in $r\cdot \lambda^{\Omega(n)}$. %
}
\medskip

We remark that, by Property~\ref{pr:minimum-lambda}, no coefficient matrix for $G_n$ can have a smallest coefficient larger than~$1/4$, given that $n\geq 5$. Furthermore, by Property~\ref{pro:resolution-triangle}, no triangle has resolution larger than $\sqrt 3/2$. 

The theorem is proved by analyzing a class of graphs introduced by Eades and Garvan~\cite{eg-dspg-95} and depicted in Figure~\ref{fig:upper-bound-static}. Consider any values $0<\lambda\leq \frac{1}{4}$ and $0<r\leq \frac{\sqrt 3}{2}$. 

\begin{figure}[htb]
\centering
\includegraphics[scale=1]{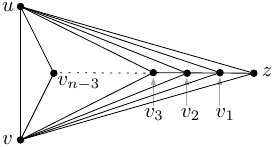}
\caption{The graph $G_n$ in the proof of Theorem~\ref{th:upper-bound-gt-drawings}. The $x$-coordinates of the vertices $z,v_1,\dots,v_{n-3}$ are larger than they should be, for the sake of readability.}
\label{fig:upper-bound-static}
\end{figure}
Let $\Delta$ be the triangle whose vertices have coordinates $p_u:=(0,0.5)$, $p_v:=(0,-0.5)$, and $p_z:=(r,0)$; note that the resolution of $\Delta$ is $r$. The vertices $u$, $v$, and $z$ of $G_n$ are embedded at $p_u$, $p_v$, and $p_z$, respectively. Further, let $v_0:=z$ and let $\Lambda$ be the coefficient matrix such that:
\begin{itemize}
\item $\lambda_{v_iv_{i+1}}=\lambda$, for $i=1,\dots,n-4$;
\item $\lambda_{v_{i+1}v_i}=\lambda$, for $i=0,\dots,n-4$;
\item $\lambda_{v_iu}=\lambda_{v_iv}=0.5-\lambda$, for $i=1,\dots,n-4$; and
\item $\lambda_{v_{n-3}u}=\lambda_{v_{n-3}v}=0.5-\lambda/2$.
\end{itemize}

Observe that, for every internal vertex $v_i$ of $G_n$, with $i\in \{1,\dots,n-3\}$, we have $\sum_{w\in \mathcal N(v_i)} \lambda_{v_iw}=1$. 

Easy calculations show that $y(v_i)=0$, for $i=1,\dots,n-3$; that is, all the vertices of $G_n$, except for $u$ and $v$, lie on the $x$-axis. Hence, the distance between any vertex $v_i$ and the edge $(u,v)$ is equal to $x(v_i)$.

By Equation~\ref{eq:x}, for $i=1,\dots,n-4$, we have
$$x(v_i)=\lambda_{v_iu}\cdot x(u) + \lambda_{v_iv}\cdot x(v) + \lambda_{v_iv_{i-1}}\cdot x(v_{i-1}) + \lambda_{v_iv_{i+1}}\cdot x(v_{i+1})=	 \lambda \cdot x(v_{i-1}) + \lambda \cdot x(v_{i+1}).    
$$ 

By the planarity of the F-drawing $(\Lambda,\Delta)$, we have $x(v_{i+1})<x(v_i)$, for $i=0,\dots,n-4$. Further, for $i=1,\dots,n-4$, by $x(v_i)=\lambda \cdot x(v_{i-1}) + \lambda \cdot x(v_{i+1})$ and  $x(v_{i+1})<x(v_i)$, we get $x(v_i)\leq \lambda \cdot x(v_{i-1}) + \lambda \cdot x(v_i)$, hence $x(v_i)\leq \frac{\lambda}{1-\lambda} \cdot x(v_{i-1})$. By repeatedly using the latter inequality, we get $x(v_{n-4})\leq x(v_0)\cdot \left(\frac{\lambda}{1-\lambda}\right)^{n-4}$. Since $x(v_0)=r$, we get $x(v_{n-4})\leq r\cdot\left(\frac{\lambda}{1-\lambda}\right)^{n-4}$. Since $\lambda\leq \frac{1}{4}=0.25$, we have that $\frac{\lambda}{1-\lambda}<\lambda^{\frac{1}{2}}$. Indeed, the last inequality is the same as $\lambda (\lambda^2-3\lambda+1)>0$, which is true for $0<\lambda<\frac{3-\sqrt 5}{2}\approx 0.38$. Thus, we get that $x(v_{n-4})\leq r\cdot \lambda^{\frac{n-4}{2}}\in r\cdot \lambda^{\Omega(n)}$. Hence, the distance between $v_{n-4}$ and $(u,v)$ is in $r\cdot \lambda^{\Omega(n)}$. Theorem~\ref{th:upper-bound-gt-drawings} then follows from the fact that the largest distance between any two separated geometric objects in the drawing is equal to $1$.

\section{Lower Bound on the Resolution of FG-Morphs}\label{se:lower-bound-morph}   

In this section, we prove Theorem~\ref{th:lower-bound-morph}, which we restate here for the reader's convenience.

\medskip	
\rephrase{Theorem}{\ref{th:lower-bound-morph}}{%
Let $\Gamma_0$ and $\Gamma_1$ be any two planar straight-line drawings of the same $n$-vertex maximal plane graph $G$ such that the outer faces of $\Gamma_0$ and $\Gamma_1$ are delimited by the same triangle $\Delta$.
There exists an FG-morph $\mathcal M=\{\Gamma_t: t\in[0,1]\}$ between $\Gamma_0$ and $\Gamma_1$ such that, for each $t\in[0,1]$, the resolution of $\Gamma_t$ is larger than or equal to $\left(r/n\right)^{O(n)}$, where $r$ is the minimum between the resolution of $\Gamma_0$ and $\Gamma_1$.
}
\medskip	

We first compute coefficient matrices $\Lambda_0$ and $\Lambda_1$  such that $\Gamma_0=(\Lambda_0,\Delta)$, such that $\Gamma_1=(\Lambda_1,\Delta)$, and such that the smallest positive coefficient in each of $\Lambda_0$ and $\Lambda_1$ is ``not too small''. This is a consequence of the following lemma.

\begin{lemma} \label{le:resolution-to-coefficient}
Let $\Gamma$ be a planar straight-line drawing of an $n$-vertex maximal plane graph $G$, let $\Delta$ be the triangle delimiting the outer face of $\Gamma$, and let $r$ be the resolution of $\Gamma$. There exists a coefficient matrix $\Lambda$ such that $\Gamma=(\Lambda,\Delta)$ and such that the smallest positive coefficient in $\Lambda$ is larger than $r/n$.
\end{lemma}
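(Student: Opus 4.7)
My plan is to construct the coefficient matrix $\Lambda$ row by row, defining, for each internal vertex $v$ of $G$ with neighbors $u_1,\dots,u_k$ listed in the cyclic order determined by the plane embedding (so $3\leq k\leq n-1$), positive coefficients $\lambda_{vu_1},\dots,\lambda_{vu_k}$ summing to~$1$ that satisfy $\sum_m \lambda_{vu_m}\, u_m = v$ and each strictly exceed $r/n$. The idea is to produce $k$ different $3$-vertex barycentric representations of $v$, one featuring each neighbor in a ``controlled'' role, and then average them.

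The construction is geometric. Since $G$ is maximal plane, each pair of consecutive neighbors $u_i,u_{i+1}$ is an edge of $G$, and the link polygon $P_v$ with cyclic boundary $u_1 u_2 \cdots u_k$ is a simple closed polygon whose interior (the star of $v$) contains $v$. For each $i\in\{1,\dots,k\}$ I shoot the ray from $u_i$ through $v$; since the subsegment $[u_i,v]$ is inside the star, the ray continues past $v$ and exits $P_v$ for the first time at some point $q_i$ that, generically, lies in the relative interior of some edge $(u_{j_i},u_{j_i+1})$ of $P_v$. Writing $v$ in the resulting barycentric form $v=\alpha_i u_i+\beta_i u_{j_i}+\gamma_i u_{j_i+1}$ with $\alpha_i,\beta_i,\gamma_i>0$ summing to $1$, the key estimate is $\alpha_i = |\overline{vq_i}|/|\overline{u_iq_i}| \geq r$. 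Indeed, $|\overline{u_iq_i}| \leq \max(|\overline{u_iu_{j_i}}|,|\overline{u_iu_{j_i+1}}|)\leq D$, where $D$ denotes the largest distance between two separated geometric objects of $\Gamma$; and $|\overline{vq_i}|$ is at least the distance from $v$ to the edge $(u_{j_i},u_{j_i+1})$, which is at least $rD$ since $v$ is non-incident to this edge. The degenerate case in which $q_i$ coincides with a vertex $u_l$ of $P_v$ is analogous, using the representation $v=\alpha_i u_i+(1-\alpha_i)u_l$ and the bound $|\overline{vu_l}|\geq rD$ between distinct vertices.

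Finally I average these $k$ representations by setting $\lambda_{vu_m} := \frac{1}{k}\bigl(\alpha_m + \sum_{i:\,j_i=m}\beta_i + \sum_{i:\,j_i+1=m}\gamma_i\bigr)$ for each neighbor $u_m$, and $\lambda_{vu}=0$ for non-neighbors~$u$. A direct computation using $\alpha_i+\beta_i+\gamma_i=1$ and $\alpha_i u_i+\beta_i u_{j_i}+\gamma_i u_{j_i+1}=v$ for each $i$ gives $\sum_m \lambda_{vu_m}=1$ and $\sum_m \lambda_{vu_m}u_m=v$, so every row of $\Lambda$ is valid, and $(\Lambda,\Delta)=\Gamma$ then follows from the uniqueness of the F-drawing determined by $\Lambda$ and $\Delta$. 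The bound $\lambda_{vu_m}\geq \alpha_m/k \geq r/k \geq r/(n-1) > r/n$ applies to every positive entry of $\Lambda$, which is exactly the claim. The main subtlety I anticipate is checking that the ray from $u_i$ past $v$ is well defined and first exits $P_v$ at a point beyond $v$; this follows from $P_v$ being a simple polygon enclosing $v$ in its interior and from the fact that no vertex of $G$ lies in the interior of any drawn edge of $\Gamma$.
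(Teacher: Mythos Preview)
Your construction is exactly the Floater--Gotsman averaging of the $k$ ray-based barycentric representations that the paper itself analyzes, and your key estimate $\alpha_i=|\overline{vq_i}|/|\overline{u_iq_i}|\geq r$ (from $|\overline{vq_i}|\geq\delta$ via the vertex--edge distance and $|\overline{u_iq_i}|\leq\max(|\overline{u_iu_{j_i}}|,|\overline{u_iu_{j_i+1}}|)\leq D$) is the same bound the paper derives, after which $\lambda_{vu_m}\geq\alpha_m/k>r/n$ follows identically. The argument is correct and essentially coincides with the paper's proof; if anything, your formula $\alpha_i=|\overline{vq_i}|/|\overline{u_iq_i}|$ for the barycentric coefficient is stated more accurately than the paper's corresponding line.
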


\begin{proof}
We employ and analyze a method proposed by Floater and Gotsman~\cite[Section 5]{fg-hmti-99}. This method is as follows. Refer to Figure~\ref{fig:coefficient-compute}.

\begin{figure}[htb]
	\centering
	\includegraphics[scale=1]{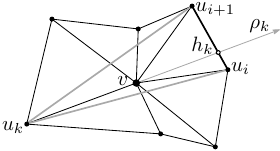}
	\caption{Illustration for the proof of Lemma~\ref{le:resolution-to-coefficient}.}
	\label{fig:coefficient-compute}
\end{figure}

Consider any internal vertex $v$ of $G$ and let $u_0,\dots,u_{d-1}$ be the clockwise order of the neighbors of $v$ in $G$. In the following description, we consider indices modulo $d$. Since $G$ is a maximal plane graph, $e_k:=(u_k,u_{k+1})$ is an edge of $G$ and $(v,u_k,u_{k+1})$ is a cycle bounding an internal face of $G$, for $k=0,\dots,d-1$. Consider each vertex $u_k$ independently. Shoot a ray $\rho_k$ starting at $u_k$ and passing through $v$; since the polygon representing the cycle $(u_0,u_1,\dots,u_{d-1})$ in $\Gamma$ is star-shaped, 
the ray $\rho_k$ hits either a vertex $u_i$ or the interior of an edge $e_i$. Since any point in the interior or on the boundary of a triangle can be expressed as a convex combination of the triangle's vertices (and the coefficients of such a convex combination are unique), we have $v=\mu_{k,k}\cdot u_k + \mu_{i,k}\cdot u_i + \mu_{i+1,k}\cdot u_{i+1}$, where $\mu_{k,k}>0$, $\mu_{i,k}>0$, $\mu_{i+1,k}\geq 0$, and  $\mu_{k,k}+\mu_{i,k}+\mu_{i+1,k}=1$; note that  $\mu_{i+1,k}= 0$ if $\rho_k$ passes through $u_i$. For every index $j\notin \{i,i+1,k\}$, set $\mu_{j,k}=0$. This concludes the work done when considering $u_k$. After the values $\mu_{j,k}$ have been computed for all $j,k\in \{0,\dots,d-1\}$, compute each coefficient $\lambda_{vu_k}$ as $\lambda_{vu_k}=\frac{1}{d}\sum_{j=0,\dots,d-1}\mu_{k,j}$. Note that $\lambda_{vu_k}>0$, for $k=0,\dots,d-1$, and that $\sum_{k=0,\dots,d-1}\lambda_{vu_k}=1$.

Let $\delta$ (let $D$) be smallest (resp.\ the largest) distance between two separated geometric objects in $\Gamma$, where $\delta/D=r$. In order to prove that each coefficient $\lambda_{vu_k}$ is larger than $r/n$, it suffices to prove that $\mu_{k,k}$ is larger than or equal to $r$; indeed, by construction, $\lambda_{vu_k}$ is larger than or equal to $\mu_{k,k}/d> \mu_{k,k}/n$. Consider again the triangle with vertices $u_k$, $u_i$, and $u_{i+1}$ that leads to the definition of $\mu_{k,k}$. Let $h_k$ be the intersection point between the ray $\rho_k$ and the edge $(u_i,u_{i+1})$.
We have that $\mu_{k,k}$ is equal to the ratio between $|\overline{u_kv}|$ and $|\overline{u_kh_k}|$. Furthermore, since $(u_k,v)$ is an edge of $G$, we have that $|\overline{u_kv}|\geq \delta$. Moreover, $|\overline{u_kh_k}|\leq D$, given that $|\overline{u_kh_k}|$ is smaller than or equal to the distance between the vertex $u_k$ and the edge  $(u_i,u_{i+1})$. Hence $\mu_{k,k}\geq \delta/D=r$ and $\lambda_{vu_k}>r/n$.
\end{proof}

The proof of Theorem~\ref{th:lower-bound-morph} proceeds as follows. Let $r_0$ be the resolution of $\Gamma_0$ and $r_1$ be the resolution of $\Gamma_1$; then $r=\min\{r_0,r_1\}$. First, by means of Lemma~\ref{le:resolution-to-coefficient}, we compute a coefficient matrix $\Lambda_0$ such that $\Gamma_0=(\Lambda_0,\Delta)$ and such that the smallest positive coefficient in $\Lambda_0$ is larger than $r_0/n\geq r/n$; further, again by Lemma~\ref{le:resolution-to-coefficient}, we compute a coefficient matrix $\Lambda_1$ such that $\Gamma_1=(\Lambda_1,\Delta)$ and such that the smallest positive coefficient in $\Lambda_1$ is larger than $r_1/n\geq r/n$. Let $\mathcal M=\{\Gamma_t=(\Lambda_t,\Delta): t\in[0,1]\}$ be the FG-morph between $\Gamma_0$ and $\Gamma_1$ such that, for any $t\in [0,1]$, $\Lambda_t=(1-t)\cdot \Lambda_0 + t \cdot \Lambda_1$. For any $t\in [0,1]$ and for any edge $(u,v)$ of $G$, where $u$ is an internal vertex of $G$, the coefficient $\lambda^t_{uv}$ in $\Lambda_t$ is equal to $(1-t)\cdot \lambda_{uv}^0+t\cdot \lambda_{uv}^1\geq (1-t)\cdot r/n+t\cdot r/n=r/n$. By Theorem~\ref{th:lower-bound-gt-drawings}, the resolution of the planar straight-line drawing $\Gamma_t$ is in $\left(r/n\right)^{O(n)}$. This concludes the proof of Theorem~\ref{th:lower-bound-morph}.

\section{Upper Bound on the Resolution of FG-Morphs}\label{se:upper-bound-morph}    

In this section, we prove Theorem~\ref{th:upper-bound-morphing}, which we restate here for the reader's convenience.

\medskip	
\rephrase{Theorem}{\ref{th:upper-bound-morphing}}{%
For every $n\geq 6$ multiple of $3$, there exist an $n$-vertex maximal plane graph $G$ and two planar straight-line drawings $\Gamma_0$ and $\Gamma_1$ of $G$ such that:  
\begin{enumerate}[(R1)]
	\item the outer faces of $\Gamma_0$ and $\Gamma_1$ are delimited by the same triangle $\Delta$; 
	\item the resolution of both $\Gamma_0$ and $\Gamma_1$ is larger than or equal to $c/n^2$, for some constant $c$; and
	\item any FG-morph between $\Gamma_0$ and $\Gamma_1$ contains a drawing whose resolution is in $1/2^{\Omega(n)}$.
\end{enumerate}
}
\medskip	

In order to prove the theorem, we employ a triangulated ``nested triangles graph'' (see, e.g.,~\cite{dlt-pepg-84,fp-nma-07}). Let $k=n/3$ and observe that $k$ is an integer. Then $G$ consists of (refer to Fig.~\ref{fig:upper-bound-drawings}): 
\begin{itemize}
\item $3$-cycles $(u_i,v_i,z_i)$, for $i=1,\dots,k$;
\item paths $(u_1,\dots,u_k)$, $(v_1,\dots,v_k)$, and $(z_1,\dots,z_k)$; and  
\item edges $(u_i,z_{i+1})$, $(z_i,v_{i+1})$, and $(v_i,u_{i+1})$, for $i=1,\dots,k-1$.
\end{itemize}
The outer cycle of $G$ is $(u_k,v_k,z_k)$.

\begin{figure}[htb]\tabcolsep=4pt
\centering
\begin{tabular}{c c}
	\includegraphics[scale=1]{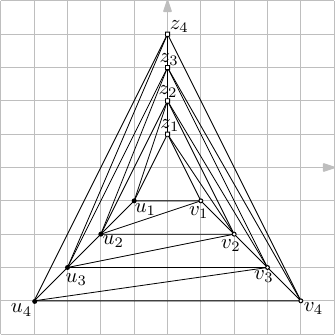}\hspace{5mm} & 		\includegraphics[scale=1]{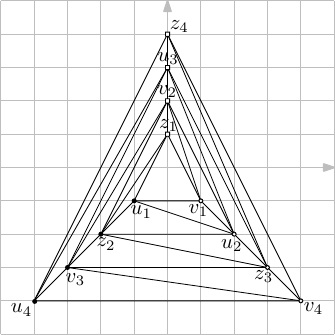}\\
	(a) \hspace{5mm} & (b) 
\end{tabular}

\caption{The graph $G$ in the proof of Theorem~\ref{th:upper-bound-morphing} (with $n=12$). (a) shows $\Gamma_0$ and (b) shows $\Gamma_1$.}
\label{fig:upper-bound-drawings}
\end{figure}

The planar straight-line drawings $\Gamma_0$ (refer to Fig.~\ref{fig:upper-bound-drawings}(a)) and $\Gamma_1$ (refer to Fig.~\ref{fig:upper-bound-drawings}(b)) both place the vertex set of $G$ at the point set composed of the points $(0,i)$, $(i,-i)$, and $(-i,-i)$, for $i=1,2,\dots,n/3$.

\noindent In particular, in $\Gamma_0$:
\begin{itemize}
\item the vertex $z_i$ lies at the point $(0,i)$, for $i=1,2,\dots,n/3$;
\item the vertex $v_i$ lies at the point $(i,-i)$, for $i=1,2,\dots,n/3$; and 
\item the vertex $u_i$ lies at the point $(-i,-i)$, for $i=1,2,\dots,n/3$.
\end{itemize}  
Further, in $\Gamma_1$:

\begin{itemize}
\item the vertex $z_i$ lies at the point $(0,i)$, for $i=k,k-3,\dots$, lies at the point $(i,-i)$, for $i=k-1,k-4,\dots$, and lies at the point $(-i,-i)$, for $i=k-2,k-5,\dots$;     
\item the vertex $v_i$ lies at the point $(i,-i)$, for $i=k,k-3,\dots$, lies at the point $(-i,-i)$, for $i=k-1,k-4,\dots$, and lies at the point $(0,i)$, for $i=k-2,k-5,\dots$; and
\item the vertex $u_i$ lies at the point $(-i,-i)$, for $i=k,k-3,\dots$, lies at the point $(0,i)$, for $i=k-1,k-4,\dots$, and lies at the point $(i,-i)$, for $i=k-2,k-5,\dots$.
\end{itemize}

The construction directly satisfies Property~(R1); indeed, the vertices $u_k$, $v_k$, and $z_k$ of the outer cycle of~$G$ are mapped to the points $(-n/3,-n/3)$, $(n/3,-n/3)$, and $(0,n/3)$, respectively, both in $\Gamma_0$ and~in~$\Gamma_1$.

We prove Property~(R2). For $i=0,1$, the drawing $\Gamma_i$ lies on an $O(n)\times O(n)$ grid, hence the largest distance between any two separated geometric objects in $\Gamma_i$ is in $O(n)$. Further, the smallest distance between any two separated geometric objects in $\Gamma_i$ is in $\Omega(1/n)$. Indeed, by Lemma~\ref{le:first-vertex-is-internal}, such a smallest distance is the distance between a vertex $v$ and an edge $e$ of $G$. By Pick's theorem, the area of the triangle~$T$ defined by $v$ and $e$ is at least $0.5$, given that $v$ and the end-vertices of $e$ lie at grid points; further, since the length of $e$ is in $O(n)$, it follows that the height of $T$ with respect to $e$, which coincides with the distance between $v$ and $e$,  is in $\Omega(1/n)$.

It remains to prove Property~(R3). To do that, we first prove a lower bound on some coefficients of any coefficient matrices $\Lambda_0$ and $\Lambda_1$ such that $\Gamma_0=(\Lambda_0,\Delta)$ and $\Gamma_1=(\Lambda_1,\Delta)$. Recall that $\lambda^t_{vu}$ denotes the element of a coefficient matrix $\Lambda_t$ whose row corresponds to a vertex $v$ and whose column corresponds to a vertex $v$.

\begin{claimX} \label{cl:upper-bound-coefficients}
For any coefficient matrix $\Lambda_0$ such that $\Gamma_0=(\Lambda_0,\Delta)$ and for any $i=2,\dots, k-1$, we have $\lambda_{u_iu_{i+1}}^0>0.5$, $\lambda_{v_iv_{i+1}}^0>0.5$, and $\lambda_{z_iz_{i+1}}^0>0.5$. Analogously, for any coefficient matrix $\Lambda_1$ such that $\Gamma_1=(\Lambda_1,\Delta)$ and for any $i=2,\dots, k-1$, we have $\lambda_{u_iz_{i+1}}^1>0.5$, $\lambda_{v_iu_{i+1}}^1>0.5$, and $\lambda_{z_iv_{i+1}}^1>0.5$.
\end{claimX}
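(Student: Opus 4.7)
The plan is to prove each of the six inequalities by taking the inner product of the F-drawing vector identity at the relevant vertex with a suitably chosen direction. The geometric observation underlying all six cases is uniform: in both $\Gamma_0$ and $\Gamma_1$, each central vertex of interest (say $u_i$ with $2\leq i\leq k-1$) has exactly one ``outward'' $G$-neighbor along the nested-triangle structure (namely $u_{i+1}$ in $\Gamma_0$ and $z_{i+1}$ in $\Gamma_1$), exactly one $G$-neighbor ``directly opposite'' at the same distance in the inward direction (namely $u_{i-1}$ in $\Gamma_0$ and $v_{i-1}$ in $\Gamma_1$), and four other $G$-neighbors whose component in the inward direction is at least twice as large.

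Concretely, I fix a vector $\vec{n}$ pointing outward from $u_i$ and take the inner product with $\vec{n}$ of the identity $\sum_{w\in\mathcal{N}(u_i)}\lambda^0_{u_iw}(w-u_i)=0$, which is equivalent to Equations~(\ref{eq:x}) and~(\ref{eq:y}) together with $\sum_w \lambda^0_{u_iw}=1$. Letting $A=\lambda^0_{u_iu_{i+1}}$, $B=\lambda^0_{u_iu_{i-1}}$, and $S$ the sum of the remaining four coefficients, the resulting scalar identity, after dividing through by the magnitude of the outward displacement, takes the form $A = B + \sum_{j=1}^4 c_j\lambda_j$ with each $c_j\geq 2$ for $i\geq 2$. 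This yields $A \geq B + 2S$; combined with $A + B + S = 1$ and $S > 0$ (from strict positivity of F-drawing coefficients), we obtain $3A + B \geq 2$ together with $A + B < 1$, which forces $A > 1/2$.

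The bounds $\lambda^0_{v_iv_{i+1}}>1/2$ and $\lambda^0_{z_iz_{i+1}}>1/2$ follow from the same argument applied at $v_i$ and $z_i$ with $\vec{n}$ rotated to point outward from those vertices. For the three inequalities on $\Lambda_1$, the labels of consecutive nested triangles are cyclically permuted in $\Gamma_1$, so the outward neighbor of $u_i$ is $z_{i+1}$, the outward neighbor of $v_i$ is $u_{i+1}$, and the outward neighbor of $z_i$ is $v_{i+1}$. In each of the three residue classes $i\bmod 3$, a direct check of the six displacements shows that the ``one outward, one opposite, four far-inward'' pattern is preserved (the outward magnitude is either $1$ or $2$ depending on the class, but the inward multipliers on the four remaining neighbors are always at least $2$ when $i \geq 2$), so the same arithmetic yields the desired lower bounds.

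The main obstacle is the case analysis for $\Lambda_1$: three vertex types times three residue classes of $i\bmod 3$ give nine configurations to verify. However, each fits the same template used in the $\Gamma_0$ computation, so the remaining work is routine bookkeeping rather than the introduction of any new idea.
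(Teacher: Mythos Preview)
Your argument is correct and rests on the same idea as the paper's: project the convex-combination identity at the vertex onto a direction along which the target neighbor is the unique one lying strictly beyond the vertex. The execution, however, is more elaborate than necessary. The paper simply uses an axis-aligned coordinate (the $y$-coordinate for $z_i$, the $x$-coordinate for $u_i$ and $v_i$) and the cruder bound that \emph{all five} remaining neighbors lie on the far side of the level $i-1$; from $i<(1-\lambda)(i-1)+\lambda(i+1)$ one reads off $\lambda>1/2$ directly, with no need to separate the ``opposite'' neighbor from the other four or to track multipliers $c_j\geq 2$. This simplification also dissolves your nine-case analysis for $\Gamma_1$: whatever the residue of $i$ modulo $3$, the vertex in question sits at one of the three outer positions at level $i$, its designated outward neighbor sits at the same outer position at level $i+1$, and every other neighbor has the relevant coordinate bounded by the level-$(i-1)$ value, so the one-line computation goes through uniformly.
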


\begin{proof}
Consider any coefficient matrix $\Lambda_0$ such that $\Gamma_0=(\Lambda_0,\Delta)$. We prove that $\lambda_{z_iz_{i+1}}^0>0.5$. By Equation~\ref{eq:y}, we have that $y(z_i)=\sum_{w\in \mathcal N(z_i)} \lambda_{z_iw}^0\cdot y(w)$, where $\mathcal N(z_i)=\{u_i,v_i,z_{i-1},z_{i+1},u_{i-1},v_{i+1}\}$. Since the coefficients $\lambda_{z_iw}^0$ with $w\in \mathcal N(z_i)$ are all positive and since the values $y(u_i)$, $y(v_i)$, $y(u_{i-1})$, and $y(v_{i+1})$ are all smaller than $y(z_{i-1})=i-1$, we get 

$$y(z_i)=i<(\lambda_{z_iu_i}^0+\lambda_{z_iv_i}^0+\lambda_{z_iu_{i-1}}^0+\lambda_{z_iv_{i+1}}^0+\lambda_{z_iz_{i-1}}^0)\cdot (i-1) + \lambda_{z_iz_{i+1}}^0\cdot (i+1).$$
From this, it follows that $$\lambda_{z_iz_{i+1}}^0>\lambda_{z_iu_i}^0+\lambda_{z_iv_i}^0+\lambda_{z_iu_{i-1}}^0+\lambda_{z_iv_{i+1}}^0+\lambda_{z_iz_{i-1}}^0,$$ which gives us $\lambda_{z_iz_{i+1}}^0>0.5$, given that  $\sum_{w\in \mathcal N(z_i)} \lambda_{z_iw}^0=1$.

The proof that $\lambda_{v_iv_{i+1}}^0>0.5$ and $\lambda_{z_iz_{i+1}}^0>0.5$ uses very similar arguments, however by considering the $x$-coordinates, rather than the $y$-coordinates, and by employing Equation~\ref{eq:x} in place of Equation~\ref{eq:y}. The proof that $\lambda_{u_iz_{i+1}}^1>0.5$, $\lambda_{v_iu_{i+1}}^1>0.5$, and $\lambda_{z_iv_{i+1}}^1>0.5$ for any coefficient matrix $\Lambda_1$ such that $\Gamma_1=(\Lambda_1,\Delta)$ again uses very similar arguments and is hence omitted.
\end{proof}

Consider now any coefficient matrices $\Lambda_0$ and $\Lambda_1$ such that $\Gamma_0=(\Lambda_0,\Delta)$ and $\Gamma_1=(\Lambda_1,\Delta)$, and consider the corresponding FG-morph $\mathcal M=\{\Gamma_t=(\Lambda_t,\Delta): t\in[0,1]\}$. We are going to prove that the resolution of the ``intermediate'' drawing $\Gamma_{0.5}$ of $\mathcal M$ is exponentially small. By Claim~\ref{cl:upper-bound-coefficients}, we have $\lambda_{u_iu_{i+1}}^0>0.5$ and $\lambda_{u_iz_{i+1}}^1>0.5$. This, together with the fact that $\lambda_{u_iu_{i+1}}^{0.5}=0.5\cdot \lambda_{u_iu_{i+1}}^0 + 0.5 \cdot \lambda_{u_iu_{i+1}}^1$ and $\lambda_{u_iz_{i+1}}^{0.5}=0.5\cdot \lambda_{u_iz_{i+1}}^0 + 0.5 \cdot \lambda_{u_iz_{i+1}}^1$, implies that $\lambda_{u_iu_{i+1}}^{0.5}>0.25$ and $\lambda_{u_iz_{i+1}}^{0.5}>0.25$. Analogously, we have $\lambda_{v_iv_{i+1}}^{0.5}>0.25$, $\lambda_{v_iu_{i+1}}^{0.5}>0.25$, $\lambda_{z_iz_{i+1}}^{0.5}>0.25$, and $\lambda_{z_iv_{i+1}}^{0.5}>0.25$. Roughly speaking, this means that, in $\Gamma_{0.5}$, the vertices of the $3$-cycle $(u_i,v_i,z_i)$ receive a non-negligible ``attraction'' from all their neighbors in the $3$-cycle $(u_{i+1},v_{i+1},z_{i+1})$. Hence, each of $u_i$, $v_i$, and $z_i$ is ``not too close'' to any of its neighbors in $(u_{i+1},v_{i+1},z_{i+1})$, which implies that a constant fraction of the area of the triangle $(u_{i+1},v_{i+1},z_{i+1})$ is external to the triangle $(u_i,v_i,z_i)$. Since this is true for every $i=2,\dots,k-1$, the exponential bound in (R3) follows. We now make this argument precise. Denote by $\mathcal A(T)$ the area of a triangle $T$. Refer to Figure~\ref{fig:LB-morphing-proof}.

\begin{figure}[htb]
\centering
\includegraphics[scale=1]{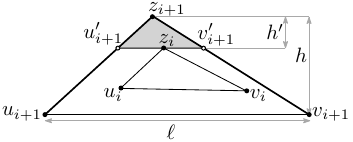}
\caption{Illustration for the proof that the area of $\Delta_i$ is a fraction of the area of $\Delta_{i+1}$. The gray triangle is $\Delta'_{i+1}$.}
\label{fig:LB-morphing-proof}
\end{figure}

Let $\Delta_i$ and $\Delta_{i+1}$ be the representations of the $3$-cycles $(u_i,v_i,z_i)$ and $(u_{i+1},v_{i+1},z_{i+1})$, respectively, in $\Gamma_{0.5}$. Assume that the longest side $s$ of $\Delta_{i+1}$ is the one connecting $u_{i+1}$ with $v_{i+1}$, as the other cases are analogous. Let $\ell$ be the length of $s$ and let $h$ be the height of $\Delta_{i+1}$ with respect to $s$. By Lemma~\ref{le:rotating}, we can assume that the $x$-axis passes through $s$ and that $z_{i+1}$ lies above $s$. 

Now consider the vertex $t$ of $\Delta_i$ with the highest $y$-coordinate. We have that $t\neq v_i$; indeed, if $t$ were equal to $v_i$, then it would be possible to add the edge $(v_i,z_{i+1})$ to $\Gamma_{0.5}$ without crossing any edge of $\Gamma_{0.5}$, contradicting the fact that $G$ is a maximal plane graph. It follows that $t$ is either $z_i$ or $u_i$. We only discuss the case in which $t=z_i$, as the other case is analogous. 

We are going to bound the $y$-coordinate of $z_i$ in terms of the $y$-coordinate of $z_{i+1}$. By Equation~\ref{eq:y}, in~$\Gamma_{0.5}$ we have $y(z_i)=\sum_{w\in \mathcal N(z_i)} \lambda_{z_iw}^{0.5}\cdot y(w)$, where $\mathcal N(z_i)=\{u_i,v_i,z_{i-1},z_{i+1},u_{i-1},v_{i+1}\}$. By the assumption that $z_i$ is the highest vertex of $\Delta_i$ and by the planarity of $\Gamma_{0.5}$, we have that every neighbor of~$z_i$ different from $z_{i+1}$ lies below $z_i$ (possibly $u_i$ lies on the same horizontal line as $z_i$). This implies~that $$y(z_i)<(\lambda_{z_iu_i}^{0.5}+\lambda_{z_iv_i}^{0.5}+\lambda_{z_iz_{i-1}}^{0.5}+\lambda_{z_iu_{i-1}}^{0.5})\cdot y(z_i) + \lambda_{z_iz_{i+1}}^{0.5}\cdot y(z_{i+1})+\lambda_{z_iv_{i+1}}^{0.5}\cdot y(v_{i+1}).$$ Since $y(v_{i+1})=0$, this is $$y(z_i)<\frac{\lambda_{z_iz_{i+1}}^{0.5}}{1-(\lambda_{z_iu_i}^{0.5}+\lambda_{z_iv_i}^{0.5}+\lambda_{z_iz_{i-1}}^{0.5}+\lambda_{z_iu_{i-1}}^{0.5})}\cdot y(z_{i+1}).$$
Since $\lambda_{z_iv_{i+1}}^{0.5}>0.25$ and $\lambda_{z_iz_{i+1}}^{0.5}>0.25$, we have $$\lambda_{z_iz_{i+1}}^{0.5}+\lambda_{z_iu_i}^{0.5}+\lambda_{z_iv_i}^{0.5}+\lambda_{z_iz_{i-1}}^{0.5}+\lambda_{z_iu_{i-1}}^{0.5}<0.75$$ and $$\lambda_{z_iu_i}^{0.5}+\lambda_{z_iv_i}^{0.5}+\lambda_{z_iz_{i-1}}^{0.5}+\lambda_{z_iu_{i-1}}^{0.5}<0.5.$$ Hence, by setting $$\rho := \lambda_{z_iu_i}^{0.5}+\lambda_{z_iv_i}^{0.5}+\lambda_{z_iz_{i-1}}^{0.5}+\lambda_{z_iu_{i-1}}^{0.5},$$ we get that $y(z_i)<\frac{0.75-\rho}{1-\rho}\cdot y(z_{i+1})$, where $\rho\in (0,0.5)$. As the function $f(\rho):=\frac{0.75-\rho}{1-\rho}$ decreases as $\rho$ increases over the interval $(0,0.5)$, we get that $\frac{0.75-\rho}{1-\rho}<0.75$ and hence $y(z_i)<0.75\cdot y(z_{i+1})$ in $\Gamma_{0.5}$.

Consider now a horizontal line through $z_{i}$ and let $u'_{i+1}$ and $v'_{i+1}$ be its intersection points with the edges $\overline{z_{i+1}u_{i+1}}$ and $\overline{z_{i+1}v_{i+1}}$, respectively. Let $\Delta'_{i+1}$ be the triangle with vertices $z_{i+1}$, $u'_{i+1}$, and $v'_{i+1}$. The height $h'$ of $\Delta'_{i+1}$ with respect to the side $\overline{u_{i+1}v_{i+1}}$ is larger than $0.25\cdot h$, given that $h'$ coincides with $y(z_{i+1})-y(z_i)>0.25\cdot y(z_{i+1})$, and given that $h=y(z_{i+1})$. By the similarity of $\Delta_{i+1}$ and $\Delta'_{i+1}$, it follows that the length $\ell'$ of the side $\overline{u'_{i+1}v'_{i+1}}$ of $\Delta'_{i+1}$ is larger than $0.25 \cdot \ell$. Hence, $\mathcal A(\Delta'_{i+1})=h'\cdot\ell'/2\geq (0.25)^2 \cdot \mathcal A(\Delta_{i+1})=0.0625 \cdot \mathcal A(\Delta_{i+1})$. 

Since $\Delta'_{i+1}$ lies entirely above or on the horizontal line through $z_i$, while $\Delta_i$ lies entirely below or on the line through $z_i$, it follows that the interiors of $\Delta_i$ and $\Delta'_{i+1}$ are disjoint, and hence $\mathcal A(\Delta_i)\leq \mathcal A(\Delta_{i+1})-\mathcal A(\Delta'_{i+1})\leq (1-0.0625)\cdot \mathcal A(\Delta_{i+1})=0.9375\cdot \mathcal A(\Delta_{i+1})$. This is what we aimed to proved: the area of $\Delta_i$ is a constant fraction of the area of $\Delta_{i+1}$. Since this holds for $i=2,\dots,k-1$, we have that the ratio between $\mathcal A(\Delta_{2})$ and $\mathcal A(\Delta_{k})$ is in $1/2^{\Omega(n)}$. Since the shortest height, and hence the shortest distance $\delta$ between two separated geometric objects, in $\Delta_{2}$ is at most $c\sqrt {\mathcal A(\Delta_{2})}$, for some constant $c$, while the longest side $D$ of $\Delta_{k}$ is at least $c'\sqrt {\mathcal A(\Delta_{k})}$, for some constant $c'$, the ratio $\delta/D$ is in $1/2^{\Omega(n)}$. This concludes the proof of Property (R3) and of Theorem~\ref{th:upper-bound-morphing}.

\section{Piecewise Linear FG-Morphs} \label{se:approximating-FG}

In this section, we prove that our results on the resolution of FG-morphs can be applied to ``approximate'' such morphs with a finite number of linear morphs. 

A \emph{linear morph} $\langle \Gamma_0, \Gamma_1 \rangle$ between two planar straight-line drawings $\Gamma_0$ and $\Gamma_1$ of the same plane graph moves each vertex with uniform speed along a straight-line segment from its position in $\Gamma_0$ to its position in $\Gamma_1$; a linear morph does not necessarily guarantee planarity for its intermediate drawings, see, e.g.,~\cite{aaccdd-hmpgd-17,ekp-ifm-03}. A \emph{piecewise linear morph} between two planar straight-line drawings $\Gamma_0$ and $\Gamma_1$ of the same plane graph is a morph between $\Gamma_0$ and $\Gamma_1$ composed of a sequence of linear morphs. Hence, a piecewise linear morph can be described by a sequence $\langle \Gamma_0=\Psi_0,\Psi_1,\dots,\Psi_k=\Gamma_1 \rangle$ of planar straight-line drawings, where $\langle \Psi_i, \Psi_{i+1} \rangle$ is a linear morph, for $i=0,\dots,k-1$. Piecewise linear morphs have been thoroughly investigated in recent years, see, e.g.,~\cite{aaccdd-hmpgd-17,adflpr-omcd-15,abc-pd-19,bbd-hmt-19,bhl-msd-19,befklow-mpgd3d-23,ddf-upm-20,DBLP:journals/jgaa/EricksonL23,DBLP:journals/comgeo/KleistKLSSS19}; notably, it is known that, for any two planar straight-line drawings of the same $n$-vertex plane graph, there exists a piecewise linear morph consisting of $O(n)$ linear morphs that preserves the planarity of the drawing at any time~\cite{aaccdd-hmpgd-17,DBLP:journals/jgaa/EricksonL23}. 	

We show how Theorem~\ref{th:lower-bound-morph} implies that, for any two drawings $\Gamma_0$ and $\Gamma_1$ of a plane graph, there exist an FG-morph $\mathcal M$ between $\Gamma_0$ and $\Gamma_1$ and a finite sequence of drawings in $\mathcal M$, such that the first drawing in the sequence is $\Gamma_0$, the last drawing in the sequence is $\Gamma_1$, and the piecewise linear morph defined by such a sequence of drawings is planar. Note that the piecewise linear morph is not part of $\mathcal M$, whereas it is a sequence of linear morphs, where each linear morph occurs between two drawings of $\mathcal M$.  

\begin{theorem} \label{th:linearize}
Let $\Gamma_0$ and $\Gamma_1$ be any two planar straight-line drawings of an $n$-vertex maximal plane graph $G$ such that the outer faces of $\Gamma_0$ and $\Gamma_1$ are delimited by the same triangle $\Delta$. Let $r$ be the minimum between the resolution of $\Gamma_0$ and $\Gamma_1$. There exist an FG-morph $\mathcal M=\{\Gamma_t: t\in [0,1]\}$ between $\Gamma_0$ and $\Gamma_1$ and a sequence $\langle \Psi_0,\Psi_1,\dots,\Psi_k\rangle$ of planar straight-line drawings of $G$ such that: 

\begin{enumerate}[(a)]
	\item $\Psi_0=\Gamma_0$ and $\Psi_k=\Gamma_1$;
	\item for $i=0,1,\dots,k$, there is a value $t_i\in [0,1]$ such that $\Psi_i=\Gamma_{t_i}$, i.e., $\Psi_i$ is a drawing in the morph~$\mathcal M$;
	\item for $i=0,1,\dots,k-1$, we have $t_i<t_{i+1}$;
	\item for $i=0,1,\dots,k-1$, the linear morph $\langle \Psi_i,\Psi_{i+1}\rangle$ is planar; and
	\item $k\in \left(n/r\right)^{O(n)}$.
\end{enumerate}
\end{theorem}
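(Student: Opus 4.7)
The approach is to apply Theorem~\ref{th:lower-bound-morph} to obtain an FG-morph $\mathcal{M}=\{\Gamma_t : t \in [0,1]\}$ between $\Gamma_0$ and $\Gamma_1$ in which every drawing has resolution at least $\rho := (r/n)^{O(n)}$, and then to sample $\mathcal{M}$ at a finite sequence of times $0 = t_0 < t_1 < \cdots < t_k = 1$ chosen densely enough that each linear morph $\langle \Gamma_{t_i}, \Gamma_{t_{i+1}}\rangle$ is planar. Let $D$ denote the diameter of $\Delta$, so that the minimum distance between separated geometric objects in every $\Gamma_t$ is at least $\delta := \rho D$.

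The central technical lemma would be a stability lemma: there is a universal constant $c>0$ such that, if $\Psi$ and $\Psi'$ are two planar straight-line drawings of $G$ with common outer triangle $\Delta$ and resolution at least $\rho$, and each vertex is displaced by at most $\eta := c \, \rho D$ between $\Psi$ and $\Psi'$, then $\langle \Psi, \Psi'\rangle$ is a planar linear morph. The proof would rest on two geometric facts. First, every facial triangle of $\Psi$ (or of $\Psi'$) has area at least $\delta^2/2$: the altitude from the vertex incident to the largest angle of the triangle has its foot on the opposite edge (since the other two angles are at most $90^{\circ}$), so this altitude equals a vertex-to-non-incident-edge distance and hence is at least $\delta$; multiplied by the opposite edge (also of length at least $\delta$) this yields the area bound. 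Second, under the linear morph, the signed area of each facial triangle is a quadratic polynomial $q(t) = \gamma + \beta t + \alpha t^2$ whose endpoint values $\gamma$ and $\gamma+\beta+\alpha$ are at least $\delta^2/2$ and whose coefficients satisfy $|\alpha| \leq 4\eta^2$ and $|\beta| \leq 4\eta D$; a direct case analysis on the sign of $\alpha$ and the location of the critical point $-\beta/(2\alpha)$ shows that $q(t) > 0$ throughout $[0,1]$ whenever $\eta < \delta/2$, which our choice of $\eta$ guarantees. Preservation of all facial orientations then implies planarity of the linear morph.

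To bound $k$, I would exploit the fact that vertex trajectories in $\mathcal{M}$ are rational functions of $t$ of bounded degree. Indeed, the internal vertex positions solve the linear system $(I-\Lambda_t)\,x_{\mathrm{int}}(t) = \Lambda_{t,\mathrm{ext}}\,x_{\mathrm{ext}}$ whose entries are affine in $t$, so by Cramer's rule each coordinate of $x_{\mathrm{int}}(t)$ is a rational function of $t$ with numerator and denominator of degree $O(n)$. Hence each coordinate has $O(n)$ critical points in $[0,1]$ and total variation $O(nD)$ (since each drawing sits inside $\Delta$); summing over all $n$ vertices and both coordinates yields a total motion of $O(n^2 D)$. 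A greedy sampling procedure --- choosing $t_{i+1}$ as the largest $t \in (t_i, 1]$ for which every vertex has moved by at most $\eta$ from its position in $\Gamma_{t_i}$ --- then produces at most $k \leq O(n^2 D/\eta) = O(n^2/\rho) \in (n/r)^{O(n)}$ steps, and together with the stability lemma guarantees all of properties (a)--(e).

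The main obstacle lies in the last step of the stability lemma, where the preservation of facial-triangle orientations must be leveraged to deduce planarity of the entire linear morph: a priori, the morph could fail via a vertex crossing a non-incident edge of a ``far'' face, or via two non-adjacent edges crossing, without triggering any facial-triangle collapse directly. For triangulations, however, any such crossing event must be accompanied by the degeneracy of some facial triangle adjacent to the offending event (intuitively, a face must flatten to permit the topology change), so that the facial-area condition is genuinely sufficient. Making this reduction precise --- by a careful local-topological analysis at the first hypothetical crossing time, or by invoking known results on piece-wise linear morphs of convex subdivisions --- is where the bulk of the delicate work in the proof will lie.
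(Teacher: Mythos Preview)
Your overall strategy --- take the FG-morph from Theorem~\ref{th:lower-bound-morph}, bound the total variation of the vertex trajectories via the rational-function structure and Cramer's rule, and then greedily sample so that consecutive drawings differ by at most a fraction of the minimum separated distance --- matches the paper's proof essentially step for step, and your bound on $k$ goes through.

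The one place where you diverge is the stability lemma, and here the paper's argument is both simpler and complete. Rather than tracking signed areas of facial triangles and then confronting the ``orientation-preservation implies global injectivity'' issue you flag, the paper argues directly on distances: if in $\Psi$ every pair of separated geometric objects (vertex/vertex, vertex/non-incident edge, or edge/non-adjacent edge) is at distance at least $d$, and each vertex coordinate changes by at most $d/3$ between $\Psi$ and $\Psi'$, then every such pair remains separated in $\Psi'$. The proof is a one-line Minkowski-sum computation: each object moves within a disk of radius $\sqrt{2}\,d/3 < 0.45\,d$, and two disjoint objects originally at distance $\geq d$ stay disjoint after fattening each by $0.45\,d$. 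Since every intermediate drawing of the linear morph $\langle \Psi,\Psi'\rangle$ is a convex combination of $\Psi$ and $\Psi'$, it too lies within the same perturbation bound of $\Psi$ and is therefore planar. This sidesteps your ``main obstacle'' entirely: no facial-triangle degeneration analysis, no local-to-global topological step.

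Your route can be completed --- for a triangulation with fixed outer triangle, positivity of all facial signed areas does force planarity (e.g.\ by the area-sum identity or by a degree argument) --- but it is more work than needed. Replacing your stability lemma with the metric perturbation argument above removes the gap and shortens the proof.
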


It is intuitive that a piecewise linear morph $\langle \Psi_0,\Psi_1,\dots,\Psi_k\rangle$ satisfying Properties~(a)--(d) exists (roughly speaking, this can be obtained by mimicking the FG-morph $\mathcal M$ by a sequence of ``suitably short'' linear morphs); it is however not obvious, in our opinion, that one of such piecewise linear morphs exists composed only of a finite number of linear morphs, as ensured by Property~(e).

The remainder of this section contains a proof of Theorem~\ref{th:linearize}. First, we let $\mathcal M=\{\Gamma_t: t\in [0,1]\}$ be the morph in Theorem~\ref{th:lower-bound-morph}, which ensures that, for every $t\in [0,1]$, the  resolution of the drawing $\Gamma_t$ is larger than or equal to $\left(r/n\right)^{c\cdot n}$, for some constant $c$. Let $N$ be the number of internal vertices of $G$ (hence $N=n-3$). Let $v_1,\dots,v_N$ be an arbitrary order of the internal vertices of $G$; further, let $v_{n-2},v_{n-1},v_n$ be an arbitrary order of the external vertices of $G$. For $i=1,\dots,n$, denote by $x^t(v_i)$ and $y^t(v_i)$ the coordinates of $v_i$ in $\Gamma_t$. Finally, let $D$ be the length of the longest side of $\Delta$. 

We introduce the \emph{FG-curve} $\mathcal C_{\mathcal M}$ for $\mathcal M$. Let $z_1,z_2,\dots,z_{2N+1}$ denote the coordinates of the Euclidean space $\mathbb{R}^{2N+1}$. The curve~$\mathcal C_{\mathcal M}$, which lies in $\mathbb{R}^{2N+1}$, is defined as follows: For every $t\in [0,1]$, the curve~$\mathcal C_{\mathcal M}$ contains a point $p^t$ whose first coordinate is $t$ and whose $(2i)$-th and $(2i+1)$-th coordinates are the $x$- and $y$-coordinates of $v_i$ in $\Gamma_t$, respectively, for $i=1,\dots,N$. That is, for every $t\in [0,1]$:
\begin{itemize}
\item $z_1(p^t)=t$;
\item $z_{2i}(p^t)=x^t(v_i)$, for $i=1,\dots,N$; and
\item $z_{2i+1}(p^t)=y^t(v_i)$, for $i=1,\dots,N$.
\end{itemize}

The first ingredient in the proof of Theorem~\ref{th:linearize} is an upper bound on the length $\ell_{\mathcal M}$ of $\mathcal C_{\mathcal M}$. 

\begin{claimX}  \label{cl:curve-length}
The length $\ell_{\mathcal M}$ of $\mathcal C_{\mathcal M}$ is in $O(D \cdot N^3)$.
\end{claimX}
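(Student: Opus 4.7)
The plan is to express each coordinate $x^t(v_i)$ and $y^t(v_i)$ as a rational function of $t$ of bounded degree, bound the total variation of each such function over $[0,1]$, and then bound $\ell_{\mathcal M}$ by summing these variations. To set things up, I would observe that, since $\Lambda_t=(1-t)\Lambda_0+t\Lambda_1$, every entry of $\Lambda_t$ is an affine function of $t$. Writing the $x$-coordinate system that defines $\Gamma_t$ in matrix form $M(t)\,\mathbf{x}(t)=\mathbf{b}(t)$, where $M(t):=I-\Lambda'_t$ with $\Lambda'_t$ the submatrix of $\Lambda_t$ indexed by internal vertices, and $\mathbf{b}(t)$ collects the contributions of the (fixed) external $x$-coordinates weighted by entries of $\Lambda_t$, Cramer's rule gives $x^t(v_i)=P_i(t)/Q(t)$, where $Q(t)=\det M(t)$ and $P_i(t)$ is the determinant of the matrix obtained by replacing the $i$-th column of $M(t)$ by $\mathbf{b}(t)$. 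All entries involved are affine in $t$, so $P_i(t)$ and $Q(t)$ are polynomials of degree at most $N$ in $t$; the uniqueness of the F-drawing at every $t\in[0,1]$ gives $Q(t)\neq 0$ throughout the interval. The same holds for $y^t(v_i)=R_i(t)/Q(t)$ with $\deg R_i\le N$.

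Next, I would bound the total variation $\mathrm{TV}(x^t(v_i))$ over $[0,1]$. Its derivative is the rational function $(P_i'(t)Q(t)-P_i(t)Q'(t))/Q(t)^2$, whose numerator is a polynomial of degree at most $2N-1$ and therefore has at most $2N-1$ real roots. Hence $[0,1]$ decomposes into at most $2N$ subintervals on which $x^t(v_i)$ is monotonic. On any such subinterval, $x^t(v_i)$ stays inside the horizontal range of $\Delta$, so its change is at most the $x$-extent of $\Delta$, which in turn is at most $D$ (the longest side of $\Delta$); consequently $\mathrm{TV}(x^t(v_i))\le 2ND$, and the analogous argument gives $\mathrm{TV}(y^t(v_i))\le 2ND$.

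Finally, using $\|\cdot\|_2\le\|\cdot\|_1$ to bound the arclength of $\mathcal C_{\mathcal M}$, I would conclude
\[
\ell_{\mathcal M}\;\le\;\int_0^1\!\Bigl(1+\sum_{i=1}^{N}\bigl(|\dot x^t(v_i)|+|\dot y^t(v_i)|\bigr)\Bigr)\,dt\;=\;1+\sum_{i=1}^N\bigl(\mathrm{TV}(x^t(v_i))+\mathrm{TV}(y^t(v_i))\bigr)\;\le\;1+4N^2 D,
\]
which lies in $O(D\cdot N^3)$ (in fact in $O(D\cdot N^2)$, which is more than enough). The main technical point is establishing the degree bounds $\deg P_i,\deg R_i,\deg Q\le N$ via Cramer's rule together with the non-vanishing of $Q(t)$ on $[0,1]$; once these are in place, the monotonic-interval count and the resulting total-variation estimate are routine.
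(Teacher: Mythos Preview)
Your proof is correct and follows essentially the same route as the paper: write each coordinate as a ratio of degree-$N$ polynomials in $t$ via Cramer's rule, bound the number of sign changes of its derivative, convert this into a total-variation bound per coordinate, and sum over all $2N+1$ coordinates. The paper arrives at the same sum by first cutting $\mathcal C_{\mathcal M}$ at every extremum (over all coordinates) into pieces that are simultaneously monotone, then bounding each piece by the sum of its coordinate extents; your use of $\|\cdot\|_2\le\|\cdot\|_1$ directly on the integrand is a slightly cleaner way to reach the same inequality. One genuine improvement in your argument: you correctly note that $P_i'Q-P_iQ'$ has degree at most $2N-1$, giving at most $2N$ monotone subintervals per coordinate and hence $\ell_{\mathcal M}\in O(DN^2)$, whereas the paper uses the looser bound $N(N-1)$ on this degree and therefore only obtains $O(DN^3)$.
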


\begin{proof} 
A high-level view of the proof of the claim is as follows. First, we bound the number of local minima and maxima of the curve $\mathcal C_{\mathcal M}$ with respect to each coordinate $z_l$; this is done algebraically. Then we use such local minima and maxima to split $\mathcal C_{\mathcal M}$ into monotone curves, for which it is easier to upper bound the length. The upper bound on the length of  $\mathcal C_{\mathcal M}$ is finally obtained as the sum of the upper bounds on the lengths of the individual monotone curves.

We now provide the details of such a proof. Consider any internal vertex $v_i$ of $G$ and any $t\in [0,1]$. By Equation~\ref{eq:x}, we have $x^t(v_i)=\sum_{v_j\in \mathcal N(v_i)} (\lambda^t_{v_iv_j}\cdot x^t(v_j))$, where $\lambda^t_{v_iv_j}=(1-t)\cdot \lambda^0_{v_iv_j} + t\cdot \lambda^1_{v_iv_j}$; observe that the values $\lambda^0_{v_iv_j}$ and $\lambda^1_{v_iv_j}$ are fixed in the definition of $\mathcal M$, as in the proof of Theorem~\ref{th:lower-bound-morph}. With a slight overload of notation, we let $\lambda^t_{v_iv_j}:=0$ for every vertex $v_j$ of $G$ which is not a neighbor of~$v_i$. Then Equation~\ref{eq:x} over all the internal vertices of $G$ can be expressed in matrix form ${\bf A^t} \cdot {\bf x^t}={\bf b^t_x}$ as~follows:
\medskip

\begin{equation} \label{eq:matrix-form}
	\underbrace{
		\begin{pmatrix}
			1 & -\lambda^t_{v_1v_2} & -\lambda^t_{v_1v_3} & \cdots & -\lambda^t_{v_1v_N}\\
			-\lambda^t_{v_2v_1} & 1 & -\lambda^t_{v_2v_3} & \cdots & -\lambda^t_{v_2v_N}\\
			-\lambda^t_{v_3v_1} & -\lambda^t_{v_3v_2} & 1 & \cdots & -\lambda^t_{v_3v_N}\\
			\cdots & \cdots & \cdots & \cdots & \cdots \\
			-\lambda^t_{v_Nv_1} & -\lambda^t_{v_Nv_2} & -\lambda^t_{v_Nv_3} & \cdots & 1
	\end{pmatrix}}_{{\bf A^t}}
	\cdot
	\underbrace{\begin{pmatrix}
			x^t(v_1)\\
			x^t(v_2)\\
			x^t(v_3)\\
			\cdots\\
			x^t(v_N)
	\end{pmatrix}}_{{\bf x^t}}
	=
	\underbrace{\begin{pmatrix}
			b^t_x(v_1)\\
			b^t_x(v_2)\\
			b^t_x(v_3)\\
			\cdots\\
			b^t_x(v_N)
	\end{pmatrix}}_{{\bf b^t_x}}
\end{equation}

where 

\begin{equation}\label{eq:matrix-constants}
	{\bf b^t_x}=\begin{pmatrix}
		b^t_x(v_1)\\
		b^t_x(v_2)\\
		b^t_x(v_3)\\
		\cdots\\
		b^t_x(v_N)
	\end{pmatrix}
	= 
	\begin{pmatrix}
		\lambda^t_{v_1v_{n-2}}\cdot x^t(v_{n-2})+\lambda^t_{v_1v_{n-1}}\cdot x^t(v_{n-1})+\lambda^t_{v_1v_n}\cdot x^t(v_n)\\
		\lambda^t_{v_2v_{n-2}}\cdot x^t(v_{n-2})+\lambda^t_{v_2v_{n-1}}\cdot x^t(v_{n-1})+\lambda^t_{v_2v_n}\cdot x^t(v_n)\\
		\lambda^t_{v_3v_{n-2}}\cdot x^t(v_{n-2})+\lambda^t_{v_3v_{n-1}}\cdot x^t(v_{n-1})+\lambda^t_{v_3v_n}\cdot x^t(v_n)\\
		\cdots\\
		\lambda^t_{v_Nv_{n-2}}\cdot x^t(v_{n-2})+\lambda^t_{v_Nv_{n-1}}\cdot x^t(v_{n-1})+\lambda^t_{v_Nv_n}\cdot x^t(v_n)
	\end{pmatrix}.
\end{equation}

Observe that the $x$-coordinates $x^t(v_{n-2})$, $x^t(v_{n-1})$, and $x^t(v_{n})$ in Equation~\ref{eq:matrix-constants} are the ones of the external vertices of $G$, hence they have fixed values (independent of $t$). Thus, each term $b^t_x(v_i)$ is just a linear function of $t$. Equation~\ref{eq:y} over all the internal vertices of $G$ can be analogously expressed in matrix form ${\bf A^t} \cdot {\bf y^t}={\bf b^t_y}$; observe that the matrix ${\bf A^t}$ is the same as in Equation~\ref{eq:matrix-form}.

Using Cramer's rule, we can obtain the value of $z_{2i}(p^t)=x^t(v_i)$ as the ratio $\mathcal N^t_{x,i}/\mathcal D^t$ of two determinants $\mathcal N^t_{x,i}$ and $\mathcal D^t$. The determinant $\mathcal D^t$ is just $\det({\bf A^t})$, while $\mathcal N^t_{x,i}$ is the determinant of the matrix obtained from ${\bf A^t}$ by substituting the $i$-th column with ${\bf b^t_x}$. Since every element of the matrix ${\bf A^t}$ and of the vector ${\bf b^t_x}$ is a linear function of $t$, since ${\bf A^t}$ and ${\bf b^t_x}$ have $N$ rows, and by Laplace expansion, we have that both $\mathcal N^t_{x,i}$ and $\mathcal D^t$ are polynomial functions of $t$ with degree at most $N$. 

This implies that the number of local minima and maxima of the curve $\mathcal C_{\mathcal M}$ with respect to the coordinate $z_{2i}$ is in $O(N^2)$, for each $i=1,\dots,N$. Namely, this number coincides with the number of solutions of the equation $\frac{\partial(\mathcal N^t_{x,i}/\mathcal D^t)}{\partial t}=0$; this, in turn, coincides with the number of solutions of the equation $\frac{\partial(\mathcal N^t_{x,i})}{\partial t}\cdot \mathcal D^t -\mathcal N^t_{x,i}\cdot\frac{\partial(\mathcal D^t)}{\partial t}=0$. Since the left term of the last equation is a polynomial of degree at most $N\cdot(N-1)$, the number of solutions to the equation is indeed in $O(N^2)$. Analogously, the number of local minima and maxima of the curve $\mathcal C_{\mathcal M}$ with respect to the coordinate $z_{2i+1}$ is in $O(N^2)$, for each $i=1,\dots,N$. Let $t_0=0<t_1<t_2<\dots<t_m=1$ be the values of $t$ for which the curve $\mathcal C_{\mathcal M}$ achieves a minimum or a maximum with respect to some coordinate.

For $j=1,\dots,m$, denote by $\mathcal C_{\mathcal M,j}$ the part of $\mathcal C_{\mathcal M}$ within the interval $[t_{j-1},t_j]$. Then $\mathcal C_{\mathcal M,j}$ is \emph{monotone}; that is, each coordinate $z_l$ monotonically increases or monotonically decreases over the interval $[t_{j-1},t_j]$. For $j=1,\dots,m$, let $\ell_{\mathcal M,j}$ be the length of $\mathcal C_{\mathcal M,j}$. Note that $\mathcal C_{\mathcal M}=\mathcal C_{\mathcal M,1}\cup \mathcal C_{\mathcal M,2}\cup  \dots\cup \mathcal C_{\mathcal M,m}$ and $\ell_{\mathcal M}=\sum_{j=1}^m \ell_{\mathcal M,j}$. Then the length of $\mathcal C_{\mathcal M,j}$ is upper bounded by the sum of the $z_l$-extents of the curves $\mathcal C_{\mathcal M,j}$ (over all coordinates $z_l$). Recall that $\mathcal C_{\mathcal M,j}$ is monotone, hence its minimum and maximum $z_l$-coordinates are achieved at its endpoints. Hence, $\ell_{\mathcal M,j}\leq \sum_{l=1}^{2N+1} |z_l(p^{t_j})-z_l(p^{t_{j-1}})|$ and $\ell_{\mathcal M}\leq \sum_{j=1}^{m}\sum_{l=1}^{2N+1} |z_l(p^{t_j})-z_l(p^{t_{j-1}})|$. Rearranging the right side of the previous inequality, we get $\ell_{\mathcal M}\leq \sum_{l=1}^{2N+1} \sum_{j=1}^{m}|z_l(p^{t_j})-z_l(p^{t_{j-1}})|$. Further, $\sum_{j=1}^{m}|z_1(p^{t_j})-z_1(p^{t_{j-1}})|=\sum_{j=1}^{m}|t_j-t_{j-1}|=t_m-t_0=1$ and, for each $l=2,\dots,2N+1$, $\sum_{j=1}^{m}|z_l(p^{t_j})-z_l(p^{t_{j-1}})| \in O(D\cdot N^2)$. This is because the $z_l$-extent of $\mathcal C_{\mathcal M}$ is at most $D$ (as $D$ is the length of the longest side of $\Delta$ and every vertex stays inside $\Delta$ throughout $\mathcal M$) and because $\mathcal C_{\mathcal M}$ has a number of local minima and maxima with respect to the coordinate $z_l$ which is in $O(N^2)$. It follows that $\ell_{\mathcal M}\leq \sum_{l=1}^{2N+1} \sum_{j=1}^{m}|z_l(p^{t_j})-z_l(p^{t_{j-1}})|\in \sum_{l=1}^{2N+1} O(D\cdot N^2) \in O(D\cdot N^3)$.\end{proof} 

Next, we argue about the possible perturbations of planar straight-line drawings with bounded resolution.  

\begin{claimX}  \label{cl:perturbation}
Let $\Gamma$ be a planar straight-line drawing of a graph $G$ such that the smallest distance between any two separated objects is some value $\delta>0$. Let $\Gamma'$ be any planar straight-line drawing of $G$ obtained from $\Gamma$ by changing each vertex coordinate by at most $\delta/3$. Then $\Gamma'$ is planar. 
\end{claimX}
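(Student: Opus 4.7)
Let $\epsilon$ denote the maximum Euclidean displacement of any vertex between $\Gamma$ and $\Gamma'$. Since each of the two coordinates of each vertex changes by at most $d/3$, we have $\epsilon\leq d\sqrt{2}/3 < d/2$, and in particular $2\epsilon < d$. The whole proof is a perturbation argument: we show that every pair of separated geometric objects in $\Gamma$ stays disjoint in $\Gamma'$, which (after handling adjacent edges) forces $\Gamma'$ to be planar.

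The key ingredient is an elementary parameterization inequality: for any edge $\overline{ab}$, the linear parameterization $p(t)=(1-t)a+tb$ of $\overline{ab}$ and the corresponding one $p'(t)=(1-t)a'+tb'$ of $\overline{a'b'}$ satisfy $|p(t)-p'(t)|\leq (1-t)|aa'|+t|bb'|\leq \epsilon$ for every $t\in[0,1]$. By the triangle inequality, for any pair of separated geometric objects $o_1,o_2$ in $\Gamma$ and their counterparts $o_1',o_2'$ in $\Gamma'$, one obtains $d_{\Gamma'}(o_1',o_2')\geq d_\Gamma(o_1,o_2)-2\epsilon\geq d-2\epsilon>0$. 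This immediately rules out coincident vertices, a vertex lying on a non-incident edge, and crossings between non-adjacent edges.

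The only configuration not directly covered by the separated-pairs framework is a pair of adjacent edges $\overline{ab}$ and $\overline{ac}$ sharing a vertex $a$. These can fail to share only $a'$ in $\Gamma'$ only if $a',b',c'$ are collinear and the two segments overlap along a sub-segment, which forces one of $b'$ or $c'$ to lie in the interior of the other edge; but this contradicts the vertex--non-incident-edge inequality already established (applied to the separated pair $(b,\overline{ac})$ or $(c,\overline{ab})$). Hence adjacent edges still meet only at their shared endpoint, and combining with the previous bounds we conclude that $\Gamma'$ is planar.

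The argument is essentially a routine triangle-inequality calculation exploiting $2\epsilon<d$; the only mildly subtle point is the adjacent-edge case, which is not literally a separated pair in the sense of the paper but reduces cleanly to vertex-edge separation, so there is no real obstacle.
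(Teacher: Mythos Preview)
Your proof is correct and follows essentially the same approach as the paper: bound the Euclidean displacement of each vertex by $\sqrt{2}\,d/3<d/2$, use this (via the triangle inequality/parameterization, where the paper equivalently uses Minkowski sums with a disk of radius $0.45\,d$) to show that every pair of separated objects remains at positive distance, and then reduce the adjacent-edge case to the already-established vertex--non-incident-edge case. The structure and the key inequality $2\epsilon<d$ are identical in both arguments.
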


\begin{proof}
We first prove that any two non-adjacent edges $e_1$ and $e_2$ of $G$ do not cross in $\Gamma'$. For $i=1,2$, the edge $e_i$ is represented by a straight-line segment $s_i$ in $\Gamma$ and by a straight-line segment $s'_i$ in $\Gamma'$. Let~$K$ denote a disk with diameter $0.45 \cdot \delta$. Since $s_1$ and $s_2$ are separated geometric objects in $\Gamma$, the distance between them is at least $\delta$. Hence, the Minkowski sums $R_1:=s_1+K$ and $R_2:=s_2+K$ define non-intersecting regions of the plane, as every point of $R_1$ is at distance at least $\delta-2\cdot (0.45 \cdot \delta)>0$ from every point of $R_2$. Further, since each vertex coordinate in $\Gamma'$ differs by at most $\delta/3$ by the same vertex coordinate in $\Gamma$, it follows that the distance between the positions of a vertex in $\Gamma$ and $\Gamma'$ is at most $\sqrt 2 \cdot \delta/3<0.45 \cdot \delta$. Hence, for $i=1,2$, each end-point of $s'_i$ lies inside $R_i$ and thus the entire segment~$s'_i$ lies inside $R_i$. It follows that $s'_1$ and $s'_2$ do not intersect.  

Analogous proofs show that any two distinct vertices of $G$ do not overlap in $\Gamma'$ and any vertex and any non-incident edge of $G$ do not overlap in $\Gamma'$. 

Finally, we prove that any two adjacent edges $e_1$ and $e_2$ of $G$ do not overlap in $\Gamma'$. If they do, then an end-vertex of one of them, say an end-vertex of $e_1$, overlaps with $e_2$ or overlaps with a vertex of $e_2$, two cases which we already ruled out.  
\end{proof}

We are now ready to prove Theorem~\ref{th:linearize}. We are going to define a sequence $\langle \Psi_0,\Psi_1,\dots,\Psi_k\rangle$ of planar straight-line drawings of $G$ satisfying Properties~(a)--(e). The sequence is initialized by setting $\Psi_0=\Gamma_0$, as required by Property~(a). Suppose now that a sequence $\langle \Psi_0,\Psi_1,\dots,\Psi_j\rangle$ has been defined, for some integer $j\geq 0$, so that Properties~(a)--(d) are satisfied (when restricted to the sequence $\langle \Psi_0,\Psi_1,\dots,\Psi_j\rangle$ constructed so far). We will deal with Property~(e) later.  

While $\Psi_j$ does not coincide with $\Gamma_1$, we add a drawing $\Psi_{j+1}$ to the sequence $\langle \Psi_0,\Psi_1,\dots,\Psi_j\rangle$, so that $\Psi_{j+1}$ belongs to $\mathcal M$ (as required by Property~(b)), so that it is ``closer'' to $\Gamma_1$ in $\mathcal M$ (as required by Property~(c)), and so that the linear morph $\langle \Psi_j,\Psi_{j+1}\rangle$ is planar (as required by Property~(d)). This is done as follows. 

The drawing $\Psi_j$ corresponds to a point $p^{t_j}$ of the FG-curve $\mathcal C_{\mathcal M}$ in $\mathbb R^{2N+1}$. Since the length of the longest side of $\Delta$ is a value $D$, the largest distance between two separated geometric objects in $\Psi_j$ is~$D$, as well. Further, since the resolution of $\Psi_j$ is larger than or equal to $\left(r/n\right)^{c\cdot n}$, the smallest distance between two separated geometric objects in $\Psi_j$ is larger than or equal to $D\cdot \left(r/n\right)^{c\cdot n}$. By Claim~\ref{cl:perturbation}, any straight-line drawing of $G$ obtained from $\Psi_j$ by changing the coordinate of each vertex by at most $D/3\cdot \left(r/n\right)^{c\cdot n}$ is planar. Hence, consider a ball $\mathcal B$ with radius $D/3\cdot \left(r/n\right)^{c\cdot n}$ centered at $p^{t_j}$ in $\mathbb R^{2N+1}$. Let $t_{j+1}\in [0,1]$ be the largest value such that the point $p^{t_{j+1}}$ of $\mathcal C_{\mathcal M}$ lies inside or on the boundary of $\mathcal B$ and let $\Psi_{j+1}$ be the drawing of $G$ in $\mathcal M$ corresponding to $p^{t_{j+1}}$ (thus the sequence $\langle \Psi_0,\Psi_1,\dots,\Psi_j,\Psi_{j+1}\rangle$ satisfies Property~(b)). Since $\mathcal C_{\mathcal M}$ is a continuous curve, we have that $t_{j+1}>t_j$ (thus the sequence $\langle \Psi_0,\Psi_1,\dots,\Psi_j,\Psi_{j+1}\rangle$ satisfies Property~(c)). Further, every point of $\mathcal B$ corresponds to a planar straight-line drawing of $G$ that can be obtained from $\Psi_j$ by changing the coordinate of each vertex by at most $D/3\cdot \left(r/n\right)^{c\cdot n}$. Since the drawings in the linear morph $\langle \Psi_j,\Psi_{j+1}\rangle$ all correspond to points in $\mathcal B$, we have that $\langle \Psi_j,\Psi_{j+1}\rangle$ is planar (thus the sequence $\langle \Psi_0,\Psi_1,\dots,\Psi_j,\Psi_{j+1}\rangle$ satisfies Property~(d)).  

The definition of the sequence $\langle \Psi_0,\Psi_1,\dots,\Psi_k\rangle$ ends when the drawing $\Psi_k$ that is added to the sequence is $\Gamma_1$ (thus the sequence $\langle \Psi_0,\Psi_1,\dots,\Psi_k\rangle$ satisfies Property~(a)). We prove that this happens with $k\in \left(n/r\right)^{O(n)}$. The proof is based on two facts. First, by Claim~\ref{cl:curve-length}, the length of $\mathcal C_{\mathcal M}$ is at most $\gamma\cdot D\cdot N^3$, for some constant $\gamma$. Second, for $j=0,1,\dots,k-2$, the length of the part of $\mathcal C_{\mathcal M}$ between $\Psi_j$ and $\Psi_{j+1}$ is at least $D/3\cdot \left(r/n\right)^{c\cdot n}$. Indeed, $\Psi_j$ and $\Psi_{j+1}$ correspond to points $p^{t_j}$ and $p^{t_{j+1}}$ which are respectively at the center and on the boundary of a ball $\mathcal B$ of radius $D/3\cdot \left(r/n\right)^{c\cdot n}$ (that $p^{t_{j+1}}$ is on the boundary of $\mathcal B$ comes from the continuity of the curve $\mathcal C_{\mathcal M}$ and from the fact that $\Psi_{j+1}$ does not coincide with $\Gamma_1$, since $j\leq k-2$). Hence, the part of $\mathcal C_{\mathcal M}$ connecting  $p^{t_j}$ and $p^{t_{j+1}}$ has length at least $D/3\cdot \left(r/n\right)^{c\cdot n}$ (with the lower bound achieved if this part of $\mathcal C_{\mathcal M}$ is a straight-line segment). Thus, $k-1$ is at most $\gamma\cdot D\cdot N^3$ over $D/3\cdot \left(r/n\right)^{c\cdot n}$, and hence $k\in \left(n/r\right)^{O(n)}$. Thus, the sequence $\langle \Psi_0,\Psi_1,\dots,\Psi_k\rangle$ satisfies Property~(e), which concludes the proof of Theorem~\ref{th:linearize}.

\section{Conclusions and Open Problems} \label{se:conclusions}

In this paper, we have studied the resolution of popular algorithms for the construction of planar straight-line graph drawings and morphs. In fact, with a focus on maximal plane graphs, we discussed the resolution of the drawing algorithm by Floater~\cite{f-psa-97}, which is a broad generalization of Tutte's algorithm~\cite{t-hdg-63}, and of the morphing algorithm by Floater and Gotsman~\cite{fg-hmti-99}.  
Many problems are left open by our research.

\begin{enumerate}
\item The lower bounds on the resolution of F-drawings and FG-morphs presented in Theorems~\ref{th:lower-bound-gt-drawings} and~\ref{th:lower-bound-morph} apply to maximal plane graphs. A major objective is to extend such bounds to $3$-connected plane graphs. Note that the proof of Theorem~\ref{th:lower-bound-gt-drawings}, from which the proof of Theorem~\ref{th:lower-bound-morph} is derived, exploits heavily the fact that the faces of the considered plane graph are delimited by $3$-cycles, hence such an extension requires novel ideas. 		


\item Theorems~\ref{th:lower-bound-gt-drawings} and~\ref{th:upper-bound-gt-drawings} provide an $r\cdot \lambda^{\Theta(n)}$ bound on the resolution of F-drawings. It would be interesting to determine the polynomial function $f(\lambda)$ which is the base of the exponential function $r\cdot (f(\lambda))^n$ representing the worst-case resolution of F-drawings, up to lower-order terms.
Also, when applied to T-drawings, i.e., drawings constructed by Tutte's algorithm~\cite{t-hdg-63}, the lower and upper bounds in Theorems~\ref{th:lower-bound-gt-drawings} and~\ref{th:upper-bound-gt-drawings} do not coincide on the exponent anymore. Namely, Theorem~\ref{th:lower-bound-gt-drawings} gives an $r/n^{O(n)}\subseteq r/2^{O(n \log n)}$ lower bound (as the smallest positive element $\lambda$ of a coefficient matrix is in $1/\Omega(n)$ for graphs with maximum degree in $\Omega(n)$), while Theorem~\ref{th:upper-bound-gt-drawings} gives an $r/2^{\Omega(n)}$ upper bound (as every internal vertex of the graph in the proof of the theorem has degree in $\Theta(1)$, and hence every  element of the coefficient matrix is in $\Theta(1)$). We find it very interesting to close this gap. It is possible (and we believe likely) that this could be achieved by means of an algebraic approach along the lines of the one of~\cite{mrs-sge-11,rg-rsp-96}, as discussed in Section~\ref{se:algebra}. 
\item The lower and upper bounds for the resolution of FG-morphs in Theorems~\ref{th:lower-bound-morph} and~\ref{th:upper-bound-morphing} also leave a gap. First, the dependency on $n$ in the lower bound is $1/n^{O(n)}\subseteq 1/2^{O(n\log n)}$, while the one in the upper bound is $1/2^{\Omega(n)}$. Second, the resolution $r$ of the input drawings appears in the exponential lower bound of Theorem~\ref{th:lower-bound-morph}, while it does not appear in the exponential upper bound of Theorem~\ref{th:upper-bound-morphing}; while it is clear that a dependency on $r$ is needed (as the resolution of the entire morph is clearly upper bounded by the one of the input drawings), it is not clear to us whether $r$ should be part of the exponential function. 
\end{enumerate}

\acknowledgements Thanks to Mirza Klimenta for implementing Floater and Gotsman's algorithm. Further thanks to Daniel Perrucci and Elias Tsigaridas for useful discussions about positive polynomial functions. Finally, thanks to the anonymous reviewers for many interesting comments.

\bibliographystyle{splncs04}
\bibliography{bibliography}

\begin{thebibliography}{10}
\providecommand{\url}[1]{\texttt{#1}}
\providecommand{\urlprefix}{URL }
\providecommand{\doi}[1]{https://doi.org/#1}

\bibitem{aaccdd-hmpgd-17}
Alamdari, S., Angelini, P., Barrera{-}Cruz, F., Chan, T.M., {Da Lozzo}, G., {Di
  Battista}, G., Frati, F., Haxell, P., Lubiw, A., Patrignani, M., Roselli, V.,
  Singla, S., Wilkinson, B.T.: How to morph planar graph drawings. {SIAM}
  Journal on Computing  \textbf{46}(2),  824--852 (2017).
  \doi{10.1137/16M1069171}

\bibitem{a-ramm-02}
Alexa, M.: Recent advances in mesh morphing. Computer Graphics Forum
  \textbf{21}(2),  173--196 (2002). \doi{10.1111/1467-8659.00575}

\bibitem{adflpr-omcd-15}
Angelini, P., {Da Lozzo}, G., Frati, F., Lubiw, A., Patrignani, M., Roselli,
  V.: Optimal morphs of convex drawings. In: Arge, L., Pach, J. (eds.) 31st
  International Symposium on Computational Geometry ({SoCG} '15). LIPIcs,
  vol.~34, pp. 126--140. Schloss Dagstuhl - Leibniz-Zentrum f{\"{u}}r
  Informatik (2015). \doi{10.4230/LIPIcs.SOCG.2015.126}

\bibitem{abc-pd-19}
Arseneva, E., Bose, P., Cano, P., {D'Angelo}, A., Dujmovic, V., Frati, F.,
  Langerman, S., Tappini, A.: Pole dancing: {3D} morphs for tree drawings.
  Journal of Graph Algorithms and Applications  \textbf{23}(3),  579--602
  (2019). \doi{10.7155/jgaa.00503}

\bibitem{br-scdpg-06}
B{\'a}r{\'a}ny, I., Rote, G.: Strictly convex drawings of planar graphs.
  Documenta Mathematica  \textbf{11},  369--391 (2006)

\bibitem{bbd-hmt-19}
Barrera{-}Cruz, F., Borrazzo, M., {Da Lozzo}, G., {Di Battista}, G., Frati, F.,
  Patrignani, M., Roselli, V.: How to morph a tree on a small grid. Discrete \&
  Computational Geometry  \textbf{67}(3),  743--786 (2022).
  \doi{10.1007/S00454-021-00363-8}

\bibitem{bhl-msd-19}
Barrera{-}Cruz, F., Haxell, P.E., Lubiw, A.: Morphing {S}chnyder drawings of
  planar triangulations. Discrete \& Computational Geometry  \textbf{61},
  161--184 (2019). \doi{10.1007/s00454-018-0018-9}

\bibitem{befklow-mpgd3d-23}
Buchin, K., Evans, W.S., Frati, F., Kostitsyna, I., L{\"{o}}ffler, M.,
  Ophelders, T., Wolff, A.: Morphing planar graph drawings through {3D}.
  Computing in Geometry and Topology  \textbf{2}(1),  5:1--5:18 (2023).
  \doi{10.57717/cgt.v2i1.33}

\bibitem{cegl-dgp-12}
Chambers, E.W., Eppstein, D., Goodrich, M.T., L{\"{o}}ffler, M.: Drawing graphs
  in the plane with a prescribed outer face and polynomial area. Journal of
  Graph Algorithms and Applications  \textbf{16}(2),  243--259 (2012).
  \doi{10.7155/jgaa.00257}

\bibitem{cg-cdg-96}
Chrobak, M., Goodrich, M.T., Tamassia, R.: Convex drawings of graphs in two and
  three dimensions (preliminary version). In: Whitesides, S. (ed.) 12th Annual
  Symposium on Computational Geometry ({SoCG} '96). pp. 319--328. {ACM} (1996).
  \doi{10.1145/237218.237401}

\bibitem{ddf-upm-20}
{Da Lozzo}, G., {Di Battista}, G., Frati, F., Patrignani, M., Roselli, V.:
  Upward planar morphs. Algorithmica  \textbf{82}(10),  2985--3017 (2020).
  \doi{10.1007/s00453-020-00714-6}

\bibitem{dpp-htgg-90}
{de Fraysseix}, H., Pach, J., Pollack, R.: How to draw a planar graph on a
  grid. Combinatorica  \textbf{10}(1),  41--51 (1990). \doi{10.1007/BF02122694}

\bibitem{ds-esp-17}
Demaine, E.D., Schulz, A.: Embedding stacked polytopes on a polynomial-size
  grid. Discrete \& Computational Geometry  \textbf{57}(4),  782--809 (2017).
  \doi{10.1007/s00454-017-9887-6}

\bibitem{dett-gd-99}
{Di Battista}, G., Eades, P., Tamassia, R., Tollis, I.G.: Graph Drawing:
  Algorithms for the Visualization of Graphs. Prentice-Hall (1999)

\bibitem{DBLP:conf/gd/BattistaF21}
{Di Battista}, G., Frati, F.: From {T}utte to {F}loater and {G}otsman: {O}n the
  resolution of planar straight-line drawings and morphs. In: Purchase, H.C.,
  Rutter, I. (eds.) 29th International Symposium on Graph Drawing and Network
  Visualization ({GD} '21). LNCS, vol. 12868, pp. 109--122. Springer (2021).
  \doi{10.1007/978-3-030-92931-2\_8}

\bibitem{dlt-pepg-84}
Dolev, D., Leighton, F.T., Trickey, H.: Planar embedding of planar graphs.
  Advances in Computing Research  \textbf{2} (1984)

\bibitem{e-hgd-84}
Eades, P.: A heuristic for graph drawing. Congressus Numerantium
  \textbf{42}(11),  149--160 (1984)

\bibitem{eg-dspg-95}
Eades, P., Garvan, P.: Drawing stressed planar graphs in three dimensions. In:
  Brandenburg, F. (ed.) Symposium on Graph Drawing ({GD} '95). LNCS, vol.~1027,
  pp. 212--223. Springer (1995). \doi{10.1007/BFb0021805}

\bibitem{DBLP:journals/jgaa/EricksonL23}
Erickson, J., Lin, P.: Planar and toroidal morphs made easier. Journal of Graph
  Algorithms and Applications  \textbf{27}(2),  95--118 (2023).
  \doi{10.7155/jgaa.00616}

\bibitem{ekp-ifm-03}
Erten, C., Kobourov, S.G., Pitta, C.: Intersection-free morphing of planar
  graphs. In: Liotta, G. (ed.) 11th International Symposium on Graph Drawing
  ({GD} '03). LNCS, vol.~2912, pp. 320--331. Springer (2003).
  \doi{10.1007/978-3-540-24595-7\_30}

\bibitem{f-psa-97}
Floater, M.S.: Parametrization and smooth approximation of surface
  triangulations. Computer Aided Geometric Design  \textbf{14}(3),  231--250
  (1997). \doi{10.1016/S0167-8396(96)00031-3}

\bibitem{f-pt-98}
Floater, M.S.: Parametric tilings and scattered data approximation.
  International Journal of Shape Modeling  \textbf{4}(3-4),  165--182 (1998).
  \doi{10.1142/S021865439800012X}

\bibitem{f-mvc-03}
Floater, M.S.: Mean value coordinates. Computer Aided Geometric Design
  \textbf{20}(1),  19--27 (2003). \doi{10.1016/S0167-8396(03)00002-5}

\bibitem{fg-hmti-99}
Floater, M.S., Gotsman, C.: How to morph tilings injectively. Journal of
  Computational and Applied Mathematics  \textbf{101}(1),  117--129 (1999).
  \doi{10.1016/S0377-0427(98)00202-7}

\bibitem{fp-nma-07}
Frati, F., Patrignani, M.: A note on minimum-area straight-line drawings of
  planar graphs. In: Hong, S., Nishizeki, T., Quan, W. (eds.) {GD} '07. LNCS,
  vol.~4875, pp. 339--344. Springer (2007). \doi{10.1007/978-3-540-77537-9\_33}

\bibitem{ggt-dof-06}
Gortler, S.J., Gotsman, C., Thurston, D.: Discrete one-forms on meshes and
  applications to {3D} mesh parameterization. Computer Aided Geometric Design
  \textbf{23}(2),  83--112 (2006). \doi{10.1016/j.cagd.2005.05.002}

\bibitem{gjm-27-21}
Groiss, L., J{\"{u}}ttler, B., Mokris, D.: 27 variants of {T}utte's theorem for
  plane near-triangulations and an application to periodic spline surface
  fitting. Computer Aided Geometric Design  \textbf{85},  101975 (2021).
  \doi{10.1016/j.cagd.2021.101975}

\bibitem{i-vfg-14}
Ilinkin, I.: Visualization of {F}loater and {G}otsman's morphing algorithm. In:
  Cheng, S., Devillers, O. (eds.) 30th Annual Symposium on Computational
  Geometry ({SoCG '14}). pp. 94--95. {ACM} (2014).
  \doi{10.1145/2582112.2595649}

\bibitem{DBLP:journals/comgeo/KleistKLSSS19}
Kleist, L., Klemz, B., Lubiw, A., Schlipf, L., Staals, F., Strash, D.:
  Convexity-increasing morphs of planar graphs. Computational Geometry: Theory
  \& Applications  \textbf{84},  69--88 (2019).
  \doi{10.1016/j.comgeo.2019.07.007}

\bibitem{k-fd-13}
Kobourov, S.G.: Force-directed drawing algorithms. In: Tamassia, R. (ed.)
  Handbook on Graph Drawing and Visualization, pp. 383--408. Chapman and
  Hall/CRC (2013)

\bibitem{llw-rb-88}
Linial, N., Lov{\'{a}}sz, L., Wigderson, A.: Rubber bands, convex embeddings
  and graph connectivity. Combinatorica  \textbf{8}(1),  91--102 (1988).
  \doi{10.1007/BF02122557}

\bibitem{mrs-sge-11}
Mor, A.R., Rote, G., Schulz, A.: Small grid embeddings of 3-polytopes. Discrete
  \& Computational Geometry  \textbf{45}(1),  65--87 (2011).
  \doi{10.1007/s00454-010-9301-0}

\bibitem{rg-rsp-96}
Richter-Gebert, J.: Realization Spaces of Polytopes. Springer Berlin,
  Heidelberg (1996)

\bibitem{s-epgg-90}
Schnyder, W.: Embedding planar graphs on the grid. In: Johnson, D.S. (ed.) 1st
  Annual {ACM-SIAM} Symposium on Discrete Algorithms ({SODA} '90). pp.
  138--148. {SIAM} (1990)

\bibitem{DBLP:journals/jgaa/Schulz11}
Schulz, A.: Drawing 3-polytopes with good vertex resolution. Journal of Graph
  Algorithms and Applications  \textbf{15}(1),  33--52 (2011).
  \doi{10.7155/jgaa.00216}

\bibitem{swln-tm-03}
Siu, A.M.K., Wan, A.S.K., Lau, R.W.H., Ngo, C.: Trifocal morphing. In: Banissi,
  E., B{\"{o}}rner, K., Chen, C., Clapworthy, G., Maple, C., Lobben, A., Moore,
  C.J., Roberts, J.C., Ursyn, A., Zhang, J.J. (eds.) 7th International
  Conference on Information Visualization ({IV '03}). pp. 24--29. {IEEE}
  (2003). \doi{10.1109/IV.2003.1217952}

\bibitem{gs-cmcpt-01}
Surazhsky, V., Gotsman, C.: Controllable morphing of compatible planar
  triangulations. {ACM} Transactions on Graphics  \textbf{20}(4),  203--231
  (2001). \doi{10.1145/502783.502784}

\bibitem{t-hdg-63}
Tutte, W.T.: How to draw a graph. Proceedings of the London Mathematical
  Society  \textbf{3}(13),  743--767 (1963). \doi{10.1112/plms/s3-13.1.743}

\bibitem{cpv-tbm-03}
de~Verdi{\`{e}}re, {\'{E}}.C., Pocchiola, M., Vegter, G.: Tutte's barycenter
  method applied to isotopies. Computational Geometry: Theory and Applications
  \textbf{26}(1),  81--97 (2003). \doi{10.1016/S0925-7721(02)00174-8}

\bibitem{warv-asf-02}
Weiss, V., Andor, L., Renner, G., V{\'{a}}rady, T.: Advanced surface fitting
  techniques. Computer Aided Geometric Design  \textbf{19}(1),  19--42 (2002).
  \doi{10.1016/S0167-8396(01)00086-3}

\end{thebibliography}

\end{document}